\PassOptionsToPackage{dvipsnames}{xcolor}
\documentclass[11pt]{article}
\usepackage{graphicx} 
\usepackage[utf8]{inputenc}
\usepackage{newpxtext}
\usepackage{fullpage}
\usepackage{booktabs} 

\usepackage{newpxtext}
\usepackage{amsmath,amssymb,amsthm}
\usepackage{newpxmath}
\usepackage{nicefrac}

\usepackage{booktabs} 
\usepackage[ruled,noline]{algorithm2e} 

\SetAlFnt{\small}
\SetAlCapFnt{\small}
\SetAlCapNameFnt{\small}
\SetAlCapHSkip{0pt}
\IncMargin{-\parindent}

 \usepackage{amsmath,amsthm,graphicx,url}
\usepackage{amssymb}
\usepackage{threeparttable}
\usepackage{tikz}
\usepackage{subcaption}
 \usepackage{mathtools}
\usepackage[normalem]{ulem}
\usepackage{soul}
\usepackage{subcaption}
\usepackage{booktabs}
\usepackage{tablefootnote}
\usepackage[numbers]{natbib}
\usepackage{enumitem}
\usepackage{comment}
\DeclareMathOperator*{\argmax}{arg\,max}

 \usepackage[ruled, noline]{algorithm2e} 

 \SetKwBlock{Initialize}{Initialize:}{} \SetKwFor{From}{from}{do}{}
 \SetKwFor{ForAllInParallel}{for all}{in parallel do}{}

\SetCommentSty{mycommfont}

\usepackage{hyperref}
\usepackage[nameinlink, capitalise,noabbrev]{cleveref}

\usepackage[dvipsnames]{xcolor}
\definecolor{niceRed}{RGB}{190,38,38}
\definecolor{blueGrotto}{HTML}{059DC0}
\definecolor{royalBlue}{HTML}{057DCD}
\definecolor{navyBlue}{HTML}{0B579C}
\definecolor{limeGreen}{HTML}{81B622}
\definecolor{nicePurple}{HTML}{9c27b0}
\definecolor{lightRoyalBlue}{HTML}{def2ff} 
 \definecolor{gold}{HTML}{ffa300}
 \definecolor{commentblue}{HTML}{2A6EC7}
 \definecolor{DrexelBlue}{HTML}{003058}

\definecolor{linkc}{rgb}{0.7, 0.2, 0.3}
\definecolor{citec}{rgb}{0.2, 0.3, 0.7}
\definecolor{urlc}{rgb}{0.2, 0.4, 0.3}
\hypersetup{
  colorlinks = true,  
  urlcolor = {commentblue},
  linkcolor = {DrexelBlue}, 
  citecolor = {YellowOrange!85!black}
}

\usepackage{thmtools} 
\usepackage{thm-restate}

\crefname{theorem}{Theorem}{Theorems}
\crefname{lemma}{Lemma}{Lemmas}
\crefname{conjecture}{Conjecture}{Conjectures}
\crefname{corollary}{Corollary}{Corollaries}
\crefname{proposition}{Proposition}{Propositions}
\crefname{claim}{Claim}{Claims}
\crefname{observation}{Observation}{Observations}
\crefname{definition}{Definition}{Definitions}
\crefname{example}{Example}{Examples}
\crefname{remark}{Remark}{Remarks}
\crefname{question}{Question}{Questions}
\crefname{condition}{Condition}{Conditions}
\crefname{assumption}{Assumption}{Assumptions}

\SetAlFnt{\small}
\SetAlCapFnt{\small}
\SetAlCapNameFnt{\small}
\SetAlCapHSkip{0pt}
\IncMargin{-\parindent}

\makeatletter
\g@addto@macro\@floatboxreset\centering
\makeatother

\newcommand{\xtnote}[1]{}

\newtheorem{theorem}{Theorem}[section]
\newtheorem{lemma}[theorem]{Lemma}
\newtheorem*{theorem*}{Theorem}
\newtheorem{corollary}[theorem]{Corollary}

\newtheorem{remark}[theorem]{Remark}
\newtheorem{definition}[theorem]{Definition}
\newtheorem{observation}[theorem]{Observation}
\newtheorem{mechanism}{Mechanism}

\DeclareMathOperator{\E}{\mathbb{E}}
\newcommand{\R}{\mathbb{R}}

\newcommand{\bids}{\mathbf{b}}
\newcommand{\sbids}{\mathbf{s}}

\newcommand{\eps}{\varepsilon}

\newcommand{\alloc}{\mathbf{x}}
\newcommand{\gft}{\texttt{GFT}}

\newcommand{\pb}{\mathbf{p}^b}
\newcommand{\ps}{\mathbf{p}^s}
\newcommand{\wel}{\texttt{Wel}}
\newcommand{\poa}{\texttt{PoA}}
\newcommand{\welpoa}{\texttt{Wel-PoA}}
\newcommand{\gftpoa}{\texttt{GFT-PoA}}
\newcommand{\val}{\mathbf{v}}
\newcommand{\cost}{\mathbf{c}}
\newcommand{\tbu}{\texttt{TB}_u}
\newcommand{\tbv}{\texttt{TB}_v}
\newcommand{\tbo}{\texttt{TB}_o}
\newcommand{\rsu}{\texttt{RS}_u}
\newcommand{\rsv}{\texttt{RS}_v}
\newcommand{\rso}{\texttt{RS}_o}

\newcommand{\BuyerWelfare}{\textsc{BuyerWel}}
\newcommand{\BuyerRevenue}{\textsc{BuyerPayment}}
\newcommand{\SellerWelfare}{\textsc{SellerWel}}
\newcommand{\SellerRevenue}{\textsc{SellerRev}}
\newcommand{\TwoSidedWelfare}{\textsc{TwoSidedWel}}

\numberwithin{equation}{section}

\title{Two-sided Market Design Meets Autobidding}
\date{}

\author{
  \begin{tabular}{c@{\hspace{3em}}c@{\hspace{3em}}c}
    Yang Cai & Christopher Liaw & Aranyak Mehta \\
    Yale University & Google Research & Google Research \\
    \small\href{mailto:yang.cai@yale.edu}{yang.cai@yale.edu} & 
    \small\href{mailto:cvliaw@google.com}{cvliaw@google.com} & 
    \small\href{mailto:aranyak@google.com}{aranyak@google.com}
  \end{tabular} 
  \\[3.5ex] 
  \begin{tabular}{c@{\hspace{4em}}c}
    Xizhi Tan & Mingfei Zhao \\
    Stanford University & Google Research \\
    \small\href{mailto:xizhi@stanford.edu}{xizhi@stanford.edu} & 
    \small\href{mailto:mingfei@google.com}{mingfei@google.com}
  \end{tabular}
}

\begin{document}


\maketitle

\begin{abstract}
Autobidding has become a dominant paradigm in online advertising by enabling advertisers to set high-level goals while algorithms handle real-time bid optimization. A prominent example is the Return-on-Spend (RoS) \emph{value maximizer}, which maximizes total value subject to an aggregate value-per-spend constraint. While most work on mechanism design for autobidders focuses on one-sided markets, many real-world platforms involve strategic behavior on both sides.

We initiate the study of \emph{two-sided markets with autobidders}. We first establish a stark negative result: for the challenging objective of \emph{liquid gains from trade} (LGFT) in the prior-free setting, broad classes of utility-truthful, budget-balanced mechanisms have unbounded Price of Anarchy (PoA) once value-maximizing agents are present, even in double-auction environments.

We then give two positive results. In the prior-free repeated double-auction setting, McAfee's Trade Reduction mechanism has unbounded PoA for LGFT but achieves constant PoA for \emph{liquid welfare} under RoS bidding, showing that off-the-shelf mechanisms retain meaningful welfare guarantees. With distributional information regarding the private costs and values, we design a two-sided mechanism that is incentive compatible for each agent under its corresponding objective, individually rational, ex-ante weakly budget balanced, and achieves first-best liquid welfare, equivalently optimal LGFT, whenever at least one side of the market consists of RoS value maximizers. This result applies to matching markets with general downward-closed feasibility constraints. It contrasts sharply with the Myerson--Satterthwaite impossibility theorem~\citep{MS83}, which rules out first-best efficiency with incentive compatibility, individual rationality, and budget balance even in bilateral trade when both the buyer and the seller are quasi-linear utility maximizers.

\end{abstract}

\thispagestyle{empty}
\setcounter{page}{0}

\section{Introduction}
Autobidding systems are widely used in online advertising, with many advertisers now relying on them. Traditionally, an advertiser was required to manage detailed bids for each keyword group or ad impression. By contrast, with autobidding, the advertiser simply specifies an overall goal and broad constraints, and the autobidding agent leverages predictive models to adjust per-query bids in real time. This automated process simplifies an advertiser's workflow and frequently improves performance, often leading to significant increases in key metrics in practice.

Several forms of autobidding products exist. A common strategy is to maximize the total number of clicks or conversions subject to a specified return on spend (ROS) constraint. Here, the advertiser provides a target ``bang-per-buck'' ratio, and the autobidding algorithm procures ad slots so that the aggregated value and cost remain at least as good as the desired ratio. While autobidding makes campaign management easier, its growing popularity has changed how advertisers interact with ad platforms and introduced new challenges in mechanism design. 

A surge of research aims to understand the market with autobidders. Recent works have examined various facets such as bidding algorithms, equilibria induced by multiple interacting autobidders. One of the central—and still evolving—themes revolves around mechanism design: Do classical auction formats (originally designed for utility-maximizing advertisers) remain efficient with autobidding agents? (See \cite{aggarwal2019autobidding,deng2021towards,liaw2023efficiency,deng2022efficiency,deng2024non}.) Can one design new mechanisms that better align incentives and achieve better outcomes with autobidders? (See \cite{balseiro2021landscape, lv2023auction, Balseiro2024optimal, goel2014clinching,balseiro2022optimal,xing2023truthful}.) We refer the readers to \cite{aggarwal2024autobidding} for a comprehensive overview of these developments.

\paragraph{Towards Two-Sided Markets with Autobidding}
Most prior works study \emph{one-sided markets} (auctions), focusing exclusively on advertiser (buyer-side) incentives while modeling publishers (seller-side) as the mechanism designer. However, many real-world platforms--such as stock exchanges, online advertising exchanges (e.g., Google’s DoubleClick), online marketplaces (e.g., Amazon, eBay), and ride-sharing services (e.g., Uber, Lyft)--involve \emph{two-sided interactions}, where both buyers and sellers act as self-interested agents.
In such settings, mechanisms must account for strategic behavior on both sides of the market while ensuring \emph{budget balance}, i.e., they do not run a deficit.

A seminal result by \cite{MS83} reveals a fundamental tension between the budget-balance constraint and efficiency. They prove that even in the simplest bilateral trade setting—one buyer and one seller with independent private valuations—no mechanism can simultaneously achieve Bayesian incentive compatibility (BIC), individual rationality (IR), ex-ante weak budget balance (WBB), and first-best efficiency (i.e., guaranteeing trade whenever the buyer’s value exceeds the seller’s cost). This highlights how the budget balance constraint fundamentally limits the achievable efficiency in two-sided markets\footnote{Without the budget balance constraint, the VCG mechanism is BIC, IR, and achieves the first-best efficiency.} and has, therefore, motivated a long line of research on approximately maximizing efficiency under this constraint (see~\citep{MCAFEE,BCWZ17,BlumrosenD21} and the references therein).

Our work initiates the study of \emph{two-sided mechanism design with autobidders}, where buyers (and later sellers) are value-maximizing subject to ROS constraints. It is a priori unclear how the ROS constraints will interact with the budget balance constraint, nor is it clear if previously efficient mechanisms will continue to be so. Specifically, we ask:
\begin{align*}
    &\emph{Are mechanisms designed for utility maximizers efficient for value maximizers?}\\   &\qquad\qquad\qquad\qquad\qquad\emph{If not, can we design efficient mechanisms for value maximizers?} \tag{*}
\end{align*}
\vspace{-0.7cm}

\subsection{Our Results}
We initiate the study of two-sided markets with ROS autobidders (also referred to as \emph{value maximizers}), aiming to understand how they transform the landscape of these markets. We focus on a single-dimensional matching market with a general downward-closed feasibility constraint on what trades between buyers and sellers are allowed. This captures the well-studied “Bilateral Trade” and “Double Auctions” as special cases. 

We adopt two measures of efficiency: \emph{Liquid Welfare}, defined as the total target-normalized value to all participants under the final allocation, and \emph{Liquid Gains from Trade} (LGFT), defined as the increase in liquid welfare. 
In the autobidding literature for auctions, liquid welfare serves as the efficiency metric, rather than traditional social welfare, because the latter is inapproximable when buyers face ROS constraints (\cite[\S 2.2]{aggarwal2024autobidding}). While full maximization of liquid welfare implies maximum LGFT, these objectives diverge in approximation settings. Specifically, LGFT is strictly more challenging to approximate.\footnote{For instance, if no trade occurs, the resulting liquid welfare might still represent a reasonable fraction of optimal liquid welfare (due to initial endowments), whereas the LGFT would be zero. Consequently, any mechanism guaranteeing an $\alpha$-approximation to LGFT necessarily guarantees an $\alpha$-approximation to liquid welfare, but not vice versa. Thus, LGFT represents a more stringent objective, offering a stronger notion of efficiency.}

We first consider a standard multi-round \emph{prior-free} setting from the autobidding literature. More specifically, we focus on \emph{double auctions}, a well-studied class of two-sided markets where any buyer can trade with any seller. We then study the celebrated Trade Reduction (TR) mechanism, introduced by \cite{MCAFEE}, which is truthful for utility maximizers and budget-balanced. We evaluate its performance using the \emph{price of anarchy} (PoA), which measures the worst-case ratio between the optimal efficiency and the efficiency achieved by the mechanism at equilibrium.

It is well-known that TR is a $(1+\frac{1}{r-1})$-approximation to the optimal GFT~\citep{MCAFEE} when both sides of the market are utility maximizers, where $r$ is the size of the smallest optimal matching among all rounds. This bound provides a constant factor approximation to the optimal GFT as long as $r>1$, a mild condition in practice. In sharp contrast, we demonstrate a broad limitation: regardless of the value of $r$, no dominant strategy incentive compatible (DSIC) (w.r.t. utility maximizers) and WBB mechanism---including TR---can achieve a bounded \emph{Price of Anarchy} (PoA) for LGFT with value maximizers in the prior-free setting. In the remainder of this paper, we provide two ways to alleviate this negative finding.

\begin{table}
  \centering
    \caption{The price of anarchy (PoA) of Trade Reduction (TR) mechanism with value‑maximizing buyers/sellers. ``Utility'' stands for quasi‑linear utility‑maximization, and ``Value'' for value‑maximization. $r$ is the size of the smallest optimal matching among all rounds.}
  \begin{threeparttable}
    \begin{tabular}{|c|c|c|}
      \hline
      Market Composition    & PoA of LGFT                                & PoA of Liquid Welfare                                         \\ 
      \hline
      Utility / Utility     & $1 + \frac{1}{r-1}$~(\cite{MCAFEE})       & $1 + \frac{1}{r-1}$~(\cite{MCAFEE})                    \\ 
      \hline
      Value / Utility       & $\infty$\tnote{a}  \; (Thm.~\ref{thm:poa_gft_lowerbound_truthful_randomized})
                            & {$2 + \tfrac{1}{r-1}$} (Thm.~\ref{thm:one_side_value_poa})
      \\ 
      \hline
      Value / Value         & $\infty$\tnote{b}\; (Thm.~\ref{thm:poa_gft_lowerbound_truthful})
                            & $\bigl[3 - \tfrac{2}{r+1},\,3 + \tfrac{1}{r-1}\bigr]$ (Thm.~\ref{thm:two_side_value_poa})
      \\ 
      \hline
    \end{tabular}
    \begin{tablenotes}
      \footnotesize
      \item[a] This lower bound  holds for \emph{all} truthful mechanisms, not just TR.
      \item[b] This lower bound  holds for \emph{all deterministic} truthful mechanisms, not just TR.
    \end{tablenotes}
  \end{threeparttable}
  \label{tab:poa-tr}
  \vspace{-0.5cm}
\end{table}

Given that LGFT is a highly demanding measure, the first way we circumvent the above negative result is by considering the less stringent efficiency measure of Liquid Welfare. We show that the TR mechanism guarantees good Liquid Welfare at equilibrium. Specifically, if only one side of the market (either sellers or buyers) comprises value maximizers, we prove that the PoA for Liquid Welfare is $2+O(1/r)$. For settings with value maximizers on both sides, TR achieves a PoA of $3+O(1/r)$. A summary of these  results appears in \cref{tab:poa-tr}.

We also compare against the Liquid Welfare TR would achieve with truthful reporting (i.e., reporting true target-normalized values). Under this benchmark, we prove improved ratios of $2$ and $3$ for the single-sided and two-sided settings, respectively, which are independent of $r$. In each case, we provide matching lower bounds. 

To overcome the prior-free impossibility, our second approach explores scenarios where the designer has distributional information regarding the agents' values and costs. In contrast to \cite{MS83}, who showed that first-best efficiency is unattainable by any BIC, IR, and WBB mechanism for utility maximizers, we establish that first-best is achievable under these same constraints, when agents on one or both sides of the market are value maximizers. Crucially, this result extends beyond double auctions to general matching markets with arbitrary downward-closed feasibility.

\begin{theorem*} In a matching market where one or both sides comprise value-maximizing agents, there is a BIC, IR, and WBB mechanism that achieves the first-best liquid welfare.
\end{theorem*}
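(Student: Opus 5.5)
The plan is to build the mechanism on the efficient allocation and to use the value maximizers' side as a source of budget surplus.

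Fix the allocation rule to be the first-best one. Given reported target-normalized values $\hat v_j$ for buyers and $\hat c_i$ for sellers, pick the feasible (downward-closed) matching maximizing $\sum \hat v_j - \sum \hat c_i$ over the matched pairs. This is exactly the liquid-welfare-optimal trade set under truthful reports, so first-best liquid welfare (equivalently optimal LGFT) follows once we show truth-telling is a Bayes–Nash equilibrium.

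The payments are chosen per agent type.
\begin{itemize}[leftmargin=*]
\item \emph{Utility maximizers.} If one side, say sellers, are quasi-linear, pay them their interim VCG/threshold price. This is AGV-style: a Myerson-type payment, $x_i(c_i)c_i+\int_{c_i}^{\bar c}x_i(t)\,dt$ in expectation. It makes them BIC and IR with zero interim utility at the worst type.
\item \emph{Value maximizers.} These face a different incentive problem. A RoS agent with target $\tau$ maximizes expected value subject to spend at most $\tau^{-1}\times$ value in aggregate. We charge each value-maximizing buyer exactly its reported liquid value whenever it trades, i.e., payment $=x_j(\hat v_j)\hat v_j$ in interim terms.
\end{itemize}

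We then check incentives for a value-maximizing buyer with true normalized value $v$. Reporting $\hat v<v$ weakly lowers its interim allocation, since the efficient allocation is monotone, so it loses value. Reporting $\hat v>v$ increases allocation but makes its spend $x(\hat v)\hat v > x(\hat v) v$, violating RoS. So truthful reporting is optimal and the RoS constraint binds with equality, giving IR in the value-maximizer sense. The symmetric construction handles a value-maximizing seller: it receives exactly its reported cost, and misreporting either loses trades or violates the constraint in the analogous normalization. I will verify that the seller's objective direction (maximizing retained liquid value plus payments, subject to its constraint) gives the same monotone argument. The main thing to confirm is that the paper's seller model makes "paid exactly cost" both IR and non-manipulable.

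Budget balance reduces to one inequality. If both sides are value maximizers, every trade collects $\hat v_j$ and pays $\hat c_i\le \hat v_j$ (this holds for each matched pair, by optimality of the matching on the downward-closed family). The surplus is therefore nonnegative ex post, which gives WBB trivially. The one-sided case is the main obstacle. The buyers pay their full value, so the expected revenue equals $\E[\sum_j x_j v_j]$, while the sellers receive their expected threshold payments. Ex-ante WBB needs
\[
\E\Bigl[\textstyle\sum_j x_j v_j\Bigr]\ \ge\ \sum_i \E\bigl[\text{threshold payment}_i\bigr].
\]
I plan to show this via a VCG-style pointwise bound. Each seller's ex-post VCG payment is at most $c_i$ plus seller $i$'s marginal contribution to gains from trade. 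Summing over sellers, total payments are at most the total matched cost plus the optimal GFT, i.e.\ at most $\sum_j x_j v_j$. This uses the fact that sum of marginal contributions is at most total welfare in matching markets with downward-closed constraints, which I expect is where care is needed; if it fails in general, I would instead use interim (Bayesian) threshold payments and bound them by the expected virtual-cost identity $\E[\text{pay}]=\E[\sum x_i(c_i+F_i(c_i)/f_i(c_i))]$. Note the buyers extract full surplus, which is the source of slack. If the pointwise bound is tight, the residual can be redistributed AGV-style to preserve ex-ante WBB.
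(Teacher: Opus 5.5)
Your budget-balance argument is fine, but the incentive argument for the value maximizers has a genuine gap. Charging a value-maximizing buyer her bid whenever she trades is truthful only if the RoS constraint is imposed type by type, conditional on her own value. In the paper's model the constraint is global: $\E_{v_i,\val_{-i},\cost}\bigl[v_i x_i(f(v_i),\val_{-i},\cost)-p_i^b(f(v_i),\val_{-i},\cost)\bigr]\ge 0$ is an expectation over her own value as well. Slack earned at one type can therefore be spent at another, and your step ``reporting $\hat v>v$ violates RoS'' fails.

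Here is a concrete counterexample. Take one buyer and one truthful utility-maximizing seller whose cost is $0$ or $1$ with equal probability. The buyer's normalized value is $0.9$ or $2$ with equal probability. Under pay-your-bid, truthful reporting gives expected value $\tfrac12(0.45)+\tfrac12(2)=1.225$, with RoS binding. Bidding $1.1$ at both types trades always. It yields expected value $1.45$ and expected payment $1.1$, so it is RoS-feasible and strictly better. Pay-your-bid is globally truthful only in special cases, for instance when $\hat v\,x_i(\hat v)$ happens to be a constant multiple of the Myerson threshold payment.

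The same problem affects paying a value-maximizing seller her ask. Note also that the paper's value-maximizing seller maximizes expected traded cost subject to $\E[p^s_j-c_jx_j]\ge 0$, not retained value plus payments.

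The missing idea is the paper's Lagrangian scaling of threshold payments. Charge $p_i^b=\frac{1+\zeta_i}{\zeta_i}q_i^B$, with $\zeta_i>0$ tuned so that truthful RoS binds ex ante. Myerson's lemma for the unscaled $q_i^B$ gives $V^T-\frac{\zeta_i}{1+\zeta_i}P^T\ge V^f-\frac{\zeta_i}{1+\zeta_i}P^f$ for every randomized deviation $f$. Combined with $P^T=V^T$ and $P^f\le V^f$, this yields $V^T\ge V^f$. For value-maximizing sellers, pay $\frac{\theta_j}{1+\theta_j}q_j^S$, with $\theta_j$ tuned so that truthful RoC binds.

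Your WBB reasoning then goes through essentially as the paper does it. Expected buyer payments still equal $\E\bigl[\sum_i v_i x_i^*\bigr]$. Each seller's threshold is at most her matched buyer's value, because deleting the edge $(i,j)$ from the optimal matching is feasible by downward closure and is unaffected by $j$'s report. This is exactly why the sum of sellers' marginal contributions is at most the GFT, so your worry there is unfounded and the virtual-cost fallback is unnecessary. In the two-sided case, scaling seller payments down only helps WBB. What your pay-your-bid rule would have bought is ex-post rather than ex-ante WBB, but at the cost of global truthfulness.
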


\noindent 
Interestingly, we observe that when both sides are value maximizers, the broker's expected profit equals
the expected first-best LGFT.
\subsection{Related Work}
\paragraph{Autobidding and Auction Design}
The problem of value-maximizing autobidding was initiated by \cite{aggarwal2019autobidding} which proposed the now widely adopted mathematical framework for autobidding.
Since then, there have been several different threads of work on the autobidding problem.
The subject closest to our work is in designing mechanisms when the agents are value-maximizers.
In their initial paper, \cite{aggarwal2019autobidding} showed that the VCG mechanism, originally designed for utility maximizers, has a price of anarchy of $2$ with value maximizers. Since then, several lines of work that focus on the price of anarchy of canonical auction formats as well as their variants. \cite{deng2021towards} generalize the bound of $2$ to multi-slot settings. Beyond truthful auctions, \cite{liaw2023efficiency} proves that the PoA of the first price auction is $2$, and
\cite{deng2022efficiency} generalizes the result to randomized strategies. Our PoA result for the TR mechanism is in the same spirit, but for double auctions. There have been many follow-up works which provide improved mechanisms for value-maximizing autobidders under various settings \cite{mehta2022auction,liaw2023efficiency,liaw2024efficiency,balseiro2021robust,deng2021towards,deng2023efficiency,deng2022efficiency,deng2022individual}. 
These works generally assume that autobidders only provide bids as input although some works, such as \cite{balseiro2021robust,deng2021towards}, show how to obtain improved welfare guarantees when the auction has noisy predictions of the autobidders' values.
A closely related work to ours in the one-sided autobidding setting is optimal mechanism design where the mechanism receives targets and values as inputs from the agents \cite{balseiro2021landscape}. 
Among other settings, \cite{balseiro2021landscape} show that for value-maximizing buyers with public targets
and private values, a payment-scaled second-price auction achieves first-best
revenue by making RoS constraints bind; the analogous first-best benchmark
is unattainable for utility maximizers. Part of our work can be viewed as generalizing these results to two-sided markets.  Later work considered optimal mechanism design under difference information settings \citep{lv2023auction,balseiro2023optimal,balseiro2022optimal,xing2023truthful}

Additional threads in the autobidding literature look at multi-channel autobidding where a single autobidder may be bidding on different channels that each have different auction formats (e.g.~\cite{paes2020competitive, aggarwal2023multi, aggarwal2024multi, susan2023multi, deng2023multi}) and designing bidding algorithms (e.g.~\cite{aggarwal2024no,feng2023online,balseiro2023joint,lucier2023autobidders,aggarwal2025multiplatform}). We refer the reader to \cite{aggarwal2024autobidding} for a more comprehensive survey on the autobidding literature.

\paragraph{Two-Sided Markets}
Motivated by the impossibility result of \cite{MS83}, one line of research in the computer science literature has been focused on designing mechanisms that approximate the optimal welfare or gains-from-trade in the bilateral trade setting \cite{BlumrosenD21,colini2017fixed,DengMSW22,KangPV22,CaiW23,BCWZ17,HartlineWang25,DengMSWW25,Jo26,LiuQRW26} and in more general settings such as double auctions and multi-dimensional two-sided markets \cite{Colini-Baldeschi16,colini2020approximately,DuttingRT14,BabaioffCGZ18,CaiGMZ21,MCAFEE,BRTW26,RTZ26,BeiLWZ26,BWZ26}. More recently, there has been some work on understanding how competition can help improve the efficiency of simple mechanisms, such as Trade Reduction \cite{BabaioffGG20,cai2024power} and on online bilateral trade \cite{azar2022alpha,bernasconi2024no,bolic2023online,cesa2024bilateral,cesa2021regret,cesa2023repeated,bachoc2024fair}.


\section{Preliminaries}\label{sec:prelim}
We consider a repeated two-sided market with \(T\) rounds, \(n\) buyers, and \(m\) sellers. 
For each round \(t\in[T]\), let \(B_t\subseteq[n]\) and \(S_t\subseteq[m]\) denote the buyers and sellers participating in that round. 
Each buyer \(i\in B_t\) is unit-demand and has a private value \(v_i^t\) for receiving the item or service. 
Each seller \(j\in S_t\) is unit-supply and has a private cost \(c_j^t\) for providing it.

For any round \(t\), a mechanism takes the buyers' bid vector \(\bids^t\) and sellers' bid vector \(\sbids^t\) as input, and outputs a potentially randomized allocation rule \(\alloc^t\) and payments. 
For every buyer \(i\) and seller \(j\), \(x_{ij}^t(\bids^t,\sbids^t)\) denotes the probability that buyer \(i\) trades with seller \(j\). 
Let \(p_i^t(\bids^t,\sbids^t)\) denote the payment collected from buyer \(i\), and let \(q_j^t(\bids^t,\sbids^t)\) denote the payment made to seller \(j\). 
For convenience, write
\[
    x_i^t(\bids^t,\sbids^t)=\sum_{j\in S_t}x_{ij}^t(\bids^t,\sbids^t),
    \qquad
    y_j^t(\bids^t,\sbids^t)=\sum_{i\in B_t}x_{ij}^t(\bids^t,\sbids^t).
\]
Thus \(x_i^t\) is the probability that buyer \(i\) trades in round \(t\), and \(y_j^t\) is the probability that seller \(j\) trades in round \(t\).

A buyer or seller is a \emph{quasi-linear utility-maximizer}, or simply a \emph{utility-maximizer}, if she maximizes total quasi-linear utility. 
For a buyer, quasi-linear utility is value minus payment; for a seller, it is payment minus cost. 
Since utility is additively separable across rounds and there are no cross-round constraints for utility-maximizers, maximizing total quasi-linear utility is equivalent to maximizing utility separately in each round. 
A mechanism is \emph{incentive compatible for utility-maximizers} (uIC) if truthful bidding maximizes every utility-maximizing agent's quasi-linear utility. 
A mechanism is \emph{individually rational} (IR) if truthful bidding gives every participating utility-maximizing agent nonnegative utility.

A buyer \(i\) is a \emph{value-maximizer} if she maximizes total received value subject to a return-on-spend (RoS) constraint across all rounds. 
Throughout the paper, RoS targets are public and values are private. 
Formally, buyer \(i\)'s problem is
\begin{align}
    \max_{\bids_i}\quad 
    &\sum_{t:i\in B_t} v_i^t x_i^t(b_i^t,\bids_{-i}^t,\sbids^t) \nonumber\\
    \text{s.t.}\quad 
    &\sum_{t:i\in B_t}
    \left(
        v_i^t x_i^t(b_i^t,\bids_{-i}^t,\sbids^t)
        - \tau_i p_i^t(b_i^t,\bids_{-i}^t,\sbids^t)
    \right)
    \ge 0 . \tag{RoS}
\end{align}
Here \(\tau_i\ge 1\) is buyer \(i\)'s public RoS target.

We also consider \emph{value-maximizing sellers}. 
A seller \(j\) is a value-maximizer if he maximizes the total cost of traded items subject to a revenue-over-cost (RoC) constraint across all rounds. 
This objective models sellers whose short-run goal is to maximize traded volume, utilization, or market share subject to a break-even or revenue adequacy constraint; see \cref{app:value_seller_motivation} for further discussion. 
We parameterize the RoC constraint by a public \(\tau_j\in(0,1]\), equivalently requiring revenue to be at least \(1/\tau_j\) times cost. 
Formally, seller \(j\)'s problem is
\begin{align}
    \max_{\sbids_j}\quad 
    &\sum_{t:j\in S_t} c_j^t y_j^t(\bids^t,s_j^t,\sbids_{-j}^t) \nonumber\\
    \text{s.t.}\quad 
    &\sum_{t:j\in S_t}
    \left(
        \tau_j q_j^t(\bids^t,s_j^t,\sbids_{-j}^t)
        - c_j^t y_j^t(\bids^t,s_j^t,\sbids_{-j}^t)
    \right)
    \ge 0 . \tag{RoC}
\end{align}
A mechanism is \emph{incentive compatible for value-maximizers} (vIC) if truthful bidding maximizes the corresponding value-maximizer objective among all bid vectors satisfying the agent's RoS or RoC constraint, holding the other agents' bids fixed. 
We use this terminology only when explicitly discussing truthful mechanisms for value-maximizers. 
In the prior-free PoA sections, value-maximizing agents are not assumed to bid truthfully; instead, outcomes are evaluated at pure Nash equilibria of the normal-form bidding game defined below.

\paragraph{Budget Balance.}
In the prior-free sections, we use \emph{budget balance} to mean ex-post weak budget balance: for every realized bid profile and every round,
\[
    \sum_{i\in B_t} p_i^t(\bids^t,\sbids^t)
    \ge
    \sum_{j\in S_t} q_j^t(\bids^t,\sbids^t).
\]
In the Bayesian section, where payments are sometimes balanced only in expectation over the prior, we explicitly use the term \emph{ex-ante weak budget balance}.

\paragraph{Efficiency Measures.}
We measure efficiency by \emph{liquid welfare}, the total target-normalized value held by all participants under the final allocation. 
In the autobidding literature, liquid welfare is used instead of raw social welfare because raw social welfare cannot generally be approximated under RoS constraints~\cite[\S2.2]{aggarwal2024autobidding}. 
Formally,
\[
  \wel(\alloc) 
  =
  \sum_t
  \left[
      \sum_{i\in B_t} \frac{v_i^t}{\tau_i}x_i^t
      +
      \sum_{j\in S_t} \frac{c_j^t}{\tau_j}(1-y_j^t)
  \right].
  \tag{Liquid Welfare}
\]
The \emph{liquid gains from trade} (LGFT) of an allocation is the target-normalized value created by the trades:
\begin{align}\label{eq:gft}
    \gft(\alloc)
    =
    \sum_t\sum_{i\in B_t}\sum_{j\in S_t}
    x_{ij}^t
    \left(
        \frac{v_i^t}{\tau_i}
        -
        \frac{c_j^t}{\tau_j}
    \right).
    \tag{LGFT}
\end{align}
For every fixed instance,
$
    \wel(\alloc)
    =
    \sum_t\sum_{j\in S_t}\frac{c_j^t}{\tau_j}
    +
    \gft(\alloc),
$
so maximizing liquid welfare is equivalent to maximizing LGFT.

Since all targets are public and positive, we work in target-normalized units. 
For a buyer with RoS target \(\tau_i\), set \(\bar v_i^t=v_i^t/\tau_i\); for a value-maximizing seller with RoC parameter \(\tau_j\), set \(\bar c_j^t=c_j^t/\tau_j\). 
For utility-maximizing agents, we take the target parameter to be \(1\). 
Under this change of units, the liquid welfare and LGFT of the original instance are exactly the ordinary welfare and GFT of the normalized instance. 
Thus, unless stated otherwise, we suppress the bars and interpret welfare and GFT as the normalized, equivalently liquid, objectives.\footnote{This normalization does not imply that raw social welfare or raw GFT in the original, unnormalized instance is approximable. The approximation guarantees are for liquid welfare/LGFT, equivalently ordinary welfare/GFT after target normalization; when targets differ across agents, this is a different benchmark from raw social welfare/GFT.}

\paragraph{Prior-Free Normal-Form Game, Equilibria and Price of Anarchy (PoA).}
Following the standard autobidding PoA framework (see \cite{aggarwal2019autobidding}), the prior-free part of the paper is modeled as a normal-form game in which each agent chooses a vector of bids, one bid for each round in which the agent participates. 
The buyers $i$ and seller $j$'s pure strategy is a vector
\[
    \bids_i=(b_i^t)_{t:i\in B_t}\in\mathbb{R}_{\ge0}^{|\{t:i\in B_t\}|} \qquad \text{and} \qquad  \sbids_j=(s_j^t)_{t:j\in S_t}\in\mathbb{R}_{\ge0}^{|\{t:j\in S_t\}|}.
\]
There are no history-dependent strategies in this model: agents do not condition their bid in round \(t\) on bids, allocations, or payments observed in earlier rounds. 
The role of the repeated setting is to aggregate values, payments, and RoS/RoC constraints across multiple market instances.

Given a realized type profile \((\val,\cost)\) and a bid-vector profile \((\bids,\sbids)\), define the aggregate quantities
\[
    V_i(\bids,\sbids)
    :=
    \sum_{t:i\in B_t} v_i^t x_i^t(\bids^t,\sbids^t),
    \qquad
    P_i(\bids,\sbids)
    :=
    \sum_{t:i\in B_t} p_i^t(\bids^t,\sbids^t)
\]
for buyer \(i\), and
\[
    C_j(\bids,\sbids)
    :=
    \sum_{t:j\in S_t} c_j^t y_j^t(\bids^t,\sbids^t),
    \qquad
    Q_j(\bids,\sbids)
    :=
    \sum_{t:j\in S_t} q_j^t(\bids^t,\sbids^t)
\]
for seller \(j\). Here \(V_i\) and \(P_i\) are buyer \(i\)'s total received value and total payment, while \(C_j\) and \(Q_j\) are seller \(j\)'s total traded cost and total revenue.

For utility-maximizers, payoffs are quasi-linear:
\[
    U_i^{\mathrm{util}} = V_i-P_i,
    \qquad
    U_j^{\mathrm{util}} = Q_j-C_j.
\]
For value-maximizers, payoffs are the corresponding scale objective whenever the RoS/RoC constraint is satisfied, and \(-\infty\) otherwise:
\[
    U_i^{\mathrm{val}}
    =
    \begin{cases}
        V_i, & \text{if } V_i-\tau_i P_i\ge 0,\\
        -\infty, & \text{otherwise,}
    \end{cases}
    \qquad
    U_j^{\mathrm{val}}
    =
    \begin{cases}
        C_j, & \text{if } \tau_j Q_j-C_j\ge 0,\\
        -\infty, & \text{otherwise.}
    \end{cases}
\]

\paragraph{Hybrid buyers.}
Following the hybrid objective in \cite[\S2.1]{aggarwal2024autobidding}, before target normalization buyer $i$ maximizes $\widetilde V_i-\lambda_iP_i$ subject to $\widetilde V_i\ge\tau_iP_i$, where $\widetilde V_i$ is total unnormalized received value, $\lambda_i\in[0,1]$, and $\tau_i\ge1$ is her public RoS target. Dividing the objective and constraint by $\tau_i$ gives
\begin{equation}\label{eq:hybrid_objective}
    \max_{\bids_i}\ V_i-\beta_iP_i
    \quad\text{subject to}\quad V_i\ge P_i,
    \qquad V_i=\widetilde V_i/\tau_i,
    \qquad \beta_i:=\lambda_i/\tau_i\in[0,1].
\end{equation}
Thus $\beta_i=0$ corresponds to the value-maximizing endpoint and $\beta_i=1$ corresponds to quasi-linear utility maximization.

A pure Nash equilibrium is a profile of bid vectors such that no single agent can improve her payoff by changing her entire bid vector while holding all other agents' bid vectors fixed.
 
We evaluate mechanisms at pure Nash equilibria of the normal-form bidding game defined above. 
This solution concept is standard in the autobidding PoA literature~\cite{aggarwal2019autobidding}.%

For an objective \(\Phi\in\{\gft,\wel\}\), let \(\alloc_\Phi^*(\val,\cost)\) be an allocation maximizing \(\Phi\) under the true values and costs. 
Let \(E_M(\val,\cost)\) be the set of pure Nash equilibrium allocations of mechanism \(M\) on that profile. 
For randomized mechanisms, \(\Phi(\alloc)\) denotes the expected objective value induced by the allocation rule. 
The price of anarchy of \(M\) with respect to \(\Phi\) is
\[
  \poa_\Phi(M)
  =
  \sup_{(\val,\cost): E_M(\val,\cost)\neq \emptyset}
  \frac{
      \Phi(\alloc_\Phi^*(\val,\cost))
  }{
      \inf_{\alloc\in E_M(\val,\cost)} \Phi(\alloc)
  }.
\]
We write \(\gftpoa(M)\) and \(\welpoa(M)\) for the corresponding PoA values.

\section{Unbounded Price of Anarchy for GFT}
\label{sec:prior_free_unbounded_poa}

As discussed in the introduction, a central question in autobidding is whether mechanisms designed for quasi-linear utility-maximizers (e.g., second-price auction) continue to perform well when some agents are value-maximizers; see, e.g.,~\cite{aggarwal2019autobidding,mehta2022auction,deng2021towards,aggarwal2024autobidding}. In the one-sided setting, the second-price auction has a Price of Anarchy of~2 with RoS value-maximizing buyers~\cite{aggarwal2019autobidding}.

We ask the analogous question for two-sided markets. In particular, do classical truthful and budget-balanced double-auction mechanisms, such as Trade Reduction~\citep{MCAFEE}, provide any guarantee for liquid gains from trade when at least one side of the market is value-maximizing? In this section we show that the answer is negative: classical mechanisms cannot guarantee any finite approximation to LGFT.

Throughout this section, a prior-free instance is a finite collection of rounds, and an agent's pure strategy is a vector of bids, one bid for each round in which the agent participates. Thus the repeated market is treated as a normal-form game over bid vectors, as in the standard autobidding PoA model. Each round is one double-auction instance.

Our first result applies to deterministic mechanisms. We consider mechanisms that are truthful and individually rational for utility-maximizers, ex-post weakly budget-balanced, anonymous, and monotone.

\label{app:prior_free_unbounded_poa}
\label{app:poa_gft_lowerbound_truthful}
We first formalize the anonymity and monotonicity assumptions used in \cref{thm:poa_gft_lowerbound_truthful}. Since this theorem concerns deterministic mechanisms, the allocation variables below are indicators.

\begin{definition}[Anonymity]
\label{def:anonymous}
A mechanism $M$ is anonymous if relabeling buyers or sellers relabels the allocation in the same way. Formally, let
\[
    x_i(\bids,\sbids)=\sum_j x_{ij}(\bids,\sbids),
    \qquad
    y_j(\bids,\sbids)=\sum_i x_{ij}(\bids,\sbids)
\]
denote whether buyer $i$ trades and whether seller $j$ trades. For any two buyers $i,i'$, if $\bids'$ is obtained from $\bids$ by swapping $b_i$ and $b_{i'}$, then
\[
    x_i(\bids,\sbids)=x_{i'}(\bids',\sbids).
\]
For any two sellers $j,j'$, if $\sbids'$ is obtained from $\sbids$ by swapping $s_j$ and $s_{j'}$, then
\[
    y_j(\bids,\sbids)=y_{j'}(\bids,\sbids').
\]
\end{definition}

\begin{definition}[Monotonicity]
\label{def:monotone}
A mechanism $M$ is monotone if buyers become weakly more likely to trade when they raise their own bids, and sellers become weakly more likely to trade when they lower their own asks, holding all other reports fixed. Formally, for every profile $(\bids,\sbids)$ and every $\delta>0$,
\begin{align*}
    x_i(\bids,\sbids)
    &\ge x_i(\bids-\delta e_i,\sbids)
    &&\text{for every buyer }i,\\
    y_j(\bids,\sbids)
    &\ge y_j(\bids,\sbids+\delta e_j)
    &&\text{for every seller }j,
\end{align*}
whenever the displayed reports lie in the corresponding bid or ask space. Here $e_i$ and $e_j$ denote the corresponding unit vectors.
\end{definition}

We use the following consequence of utility truthfulness.

\begin{lemma}[Deviations by untraded agents]\label{lem:losing-thresholds}
Fix a deterministic uIC mechanism with zero transfers to nontrading agents, and fix the other agents' reports.
If a buyer does not trade at bid $b_i$, any report that makes her trade charges her at least $b_i$.
If a seller does not trade at ask $s_j$, any report that makes her trade pays her at most $s_j$.
\end{lemma}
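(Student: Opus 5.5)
The argument is a direct application of uIC, evaluated at a carefully chosen true type. I treat the buyer case in detail; the seller case is symmetric.

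Fix the other agents' reports and suppose buyer $i$ does not trade at bid $b_i$. By the zero-transfer assumption her payment there is $0$. Now take any report $b_i'$ at which she trades, and let $p'$ be her payment at $b_i'$. Suppose, for contradiction, that $p' < b_i$.

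Consider a utility-maximizing buyer whose true value is exactly $v_i = b_i$. Truthful reporting gives her
\[
    v_i x_i(b_i) - p_i(b_i) = 0.
\]
Misreporting $b_i'$ gives her
\[
    v_i \cdot 1 - p' = b_i - p' > 0.
\]
This strictly improves on truth-telling and contradicts uIC. Hence $p' \ge b_i$.

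The only point needing care is that uIC must be applied with the true value equal to the bid $b_i$ itself. This is legitimate because uIC is required for every type profile, with the other agents' reports held fixed.

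For a seller $j$ who does not trade at ask $s_j$, her received payment is $0$. Let $q'$ be the payment at any report $s_j'$ under which she trades, and suppose $q' > s_j$. Take a seller with true cost $c_j = s_j$. Truthful reporting gives her utility $0$. Reporting $s_j'$ gives her
\[
    q' - c_j = q' - s_j > 0,
\]
again contradicting uIC. Hence $q' \le s_j$.

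No monotonicity or anonymity assumption is needed. The argument works pointwise for deterministic mechanisms, since trade is an indicator.
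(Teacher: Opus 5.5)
Your proposal is correct and follows the same argument as the paper: apply uIC to a utility-maximizing agent whose true type equals the losing report. Truthful reporting then yields zero utility by the zero-transfer assumption, so any winning deviation must satisfy $p'\ge b_i$ for a buyer, and symmetrically $q'\le s_j$ for a seller.
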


\begin{proof}
Consider a utility-maximizing buyer whose true value is $b_i$. Truthful reporting gives allocation zero and payment zero, hence utility zero. If a deviation wins at payment $p_i$, utility truthfulness implies $0\ge b_i-p_i$, so $p_i\ge b_i$.
Similarly, a seller with true cost $s_j$ obtains utility zero from the losing truthful report. Any winning deviation with revenue $q_j$ must satisfy $0\ge q_j-s_j$, hence $q_j\le s_j$.
\end{proof}

\begin{theorem}\label{thm:poa_gft_lowerbound_truthful}
Let $M$ be any deterministic, anonymous, monotone, ex-post WBB, IR, and uIC mechanism{, with zero transfers to nontrading agents, applied identically in every round}. If at least one side of the market consists of value-maximizing agents, then the Price of Anarchy for Liquid Gains from Trade is unbounded: $\gftpoa(M)=\infty$.
Moreover, the bad-equilibrium instances can be chosen so that Trade Reduction under truthful target-normalized reports obtains a constant-factor approximation to the first-best LGFT.
\end{theorem}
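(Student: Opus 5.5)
The plan is to exhibit, for every $\eta>0$, a two-type family of rounds and a pure Nash equilibrium in which one value-maximizing buyer uses RoS slack earned in "cheap" rounds to win trades in rounds where her normalized value is essentially zero. This displaces an efficient trader and leaves equilibrium LGFT of order $\eta$ while the optimum stays of order $1$. The case of value-maximizing sellers is symmetric (swap roles, asks, and the RoC constraint), so I will only describe the buyer side. Throughout I work in target-normalized units, and all other agents bid truthfully.

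\emph{Construction.} I use two kinds of rounds. In a \emph{funding round}, the distinguished value-maximizing buyer $1$ has value $1+\eta$ and faces enough cheap sellers (cost $0$) and competing buyers that the mechanism trades with her. I need her payment there to be at most some small $\epsilon$. In a \emph{stealing round}, buyer $1$ has value $0$, a competing buyer $2$ has value $1$, and the seller has cost $1-\delta$; the optimum trades $2$ with the seller for gain $\delta$.

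\emph{Forcing buyer $1$ to win.} If buyer $1$ instead bids some $b$ slightly above $1$, anonymity plus monotonicity should force her to trade. Compare with the profile where she and buyer $2$ swap bids: anonymity transfers the allocation, and monotonicity prevents the lower bidder from trading when the higher one does not. By IR for a utility type with value $b$ she pays at most $b$; by \cref{lem:losing-thresholds} applied to her truthful losing bid $0$, this price is also pinned down relative to the threshold. That round's LGFT is then $0-(1-\delta)$.

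\emph{Accounting and the ratio.} One funding round contributes LGFT about $1+\eta$ and slack $1+\eta-\epsilon\ge b$. The totals are therefore
\[
\text{equilibrium LGFT}\approx \eta+\delta,\qquad \text{optimal LGFT}\approx 1+\eta+\delta .
\]
Letting $\eta,\delta,\epsilon\to 0$ gives unbounded PoA. It remains to check that this is a Nash equilibrium:
\begin{itemize}
\item Buyer $1$ cannot gain value, since she already trades in every round she can afford under RoS.
\item Buyer $2$ and the seller are truthful utility maximizers, so uIC makes truthful reporting a best response to any fixed reports of the others.
\item In the all-value-maximizer version, the truthful-looking bids are best responses by the same threshold argument. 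An agent who loses at her bid would have to pay at least that bid to win, by \cref{lem:losing-thresholds}, and she has no slack to do so.
\end{itemize}

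\emph{Main obstacle.} The delicate step is the funding round. A single buyer–seller pair may yield no trade at all under the axioms (Trade Reduction never trades with $r=1$). I therefore plan to replicate: $r$ buyers with value $1+\eta$ and $r+1$ sellers of cost $0$, with buyer $1$ among the traders. I will then argue via WBB, IR and \cref{lem:losing-thresholds} that the price is set by the excluded ask, which I make $\approx 0$.

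If the axioms alone cannot force trade in the funding rounds, I will invoke the degenerate case. A mechanism that never trades on such profiles already has zero LGFT against a positive optimum, so the PoA is again infinite.

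For the ``moreover'' part, I will pick $r\ge 2$ in every round. Under truthful reports, Trade Reduction then loses at most one trade per round, giving a $(1+\frac1{r-1})$-approximation of first-best LGFT on the same instances.
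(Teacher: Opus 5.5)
\paragraph{Gap in the construction.}
Your intuition matches the idea behind the paper's proof: a value maximizer pools her RoS constraint, so slack from a profitable trade can subsidize a wasteful one. However, your construction prescribes outcomes that the axioms do not force, and the steps meant to force them fail.
\begin{itemize}
\item Anonymity and monotonicity can at best tell you \emph{which} bidder trades if a trade occurs; they cannot make a trade occur. Your stealing round has a single seller, so Trade Reduction itself, a mechanism in the class, makes no trade there whatever buyer~$1$ bids.
\item Nothing bounds buyer~$1$'s funding-round payment by $\epsilon$. IR gives only payment at most bid. Even for TR, with $r$ buyers of value $1+\eta$ and $r+1$ zero-cost sellers, traded buyers pay the reduced buyer's bid $1+\eta$; the asks affect only the seller-side price. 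That leaves her no slack.
\item The replication breaks your accounting. If $M$ trades several high-value buyers in the funding round, only buyer~$1$'s surplus is cancelled by her stealing round. Equilibrium LGFT is then of order $r$ and the ratio stays bounded.
\item Your fallback ``degenerate case'' is not a real dichotomy, since $M$ may trade on some of your profiles but not others, or trade at high prices. It also comes with no equilibrium verification.
\item When buyers are the value-maximizing side, buyer~$2$ is a value maximizer, not a truthful utility maximizer. For a loser bidding exactly her value, \cref{lem:losing-thresholds} only gives payment at least value, which is RoS-feasible, so she may profitably deviate. You need values strictly below bids.
\end{itemize}

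\paragraph{The missing idea.}
Make no assumption about what $M$ does; read its outcome off a fixed profile and choose values afterwards. The paper fixes one report profile near the crossing (bids $1+\eps,1+\eps/2$, asks $1-\eps,1-\eps/2$) and lets $k(\eps)$ be whatever number of trades $M$ makes there.
\begin{itemize}
\item If $k(\eps)=0$ for arbitrarily small $\eps$, set values to bids minus $\eta$ and value-seller costs to asks plus $\eta$. Then \cref{lem:losing-thresholds} makes every winning deviation RoS/RoC-infeasible, so the zero-LGFT profile is an equilibrium against a positive optimum.
\item Otherwise, repeat the same profile in two rounds. Determinism and identical application then give the same traded sets in both rounds. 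Give each traded buyer values $(3/2,\,1/2+2\eps)$, which sum to her total bid $2+2\eps$, and give untraded agents values strictly below their bids. IR (payment at most bid in each round) then certifies RoS without knowing any prices, and \cref{lem:losing-thresholds} again blocks the untraded agents.
\end{itemize}
In the second case, equilibrium LGFT is at most $4k(\eps)\eps$, because all traded bids and asks lie within $\eps$ of $1$. The optimum collects at least $1/2$ from the value $3/2$ in round~$1$, so the ratio diverges as $\eps\to0$.
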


The proof constructs an explicit pure Nash equilibrium whose LGFT tends to zero while the optimal LGFT remains bounded away from zero. The key point is that a value-maximizing agent can pool the RoS or RoC constraint across rounds: a profitable trade in one round can subsidize an unprofitable trade in another round, and a utility-truthful mechanism cannot distinguish these rounds once the same bids are submitted.

\begin{proof}[Proof of \cref{thm:poa_gft_lowerbound_truthful}]
We prove the theorem for the case in which buyers are value maximizers. {The seller-side construction is given at the end of the proof.}

For $0<\eps<1/16$, put $\eta=\eps/4$ and consider a single round with two buyers and two sellers submitting the following reports:
\[
\begin{array}{c|cc}
  \textbf{Agent ID} & \textbf{Seller Ask} & \textbf{Buyer Bid}\\
  \hline
  1 & 1-\eps   & 1+\eps\\
  2 & 1-\eps/2 & 1+\eps/2
\end{array}.
\]
Let $k(\eps)\in\{0,1,2\}$ denote the number of trades at this report profile, and let $B$ and $S$ denote the sets of traded buyers and sellers, respectively. Thus $|B|=|S|=k(\eps)$.

\paragraph{Case 1: no trade near the efficient crossing.}

Suppose that for arbitrarily small $\eps>0$ we have $k(\eps)=0$. Fix such an $\eps$. Consider the one-round instance with the reports above and buyer values $v_i=b_i-\eta$. Set $c_j=s_j$ for utility-maximizing sellers and $c_j=s_j+\eta$ for value-maximizing sellers.

By construction, the mechanism makes no trade, so the equilibrium LGFT is zero. {Every buyer value is at least $1+\eps/4$, and every seller cost is at most $1-\eps/4$. Thus both trades have positive LGFT, and truthful Trade Reduction obtains at least half the first-best LGFT.}

It remains to verify that the displayed bid profile is a pure Nash equilibrium. {By \cref{lem:losing-thresholds}, any deviation that makes buyer $i$ trade charges her at least $b_i>v_i$, violating RoS. Similarly, any deviation that makes a value-maximizing seller $j$ trade pays her at most $s_j<c_j$, violating RoC. Utility-maximizing sellers report their true costs and have no profitable deviation by uIC.} Hence no agent has a profitable feasible deviation.

\paragraph{Case 2: at least one trade near the efficient crossing.}

We may therefore assume that for all sufficiently small $\eps>0$, $k(\eps)\ge1$. Fix such an $\eps$ and consider two rounds with the same bids and asks as above. Set
\[
(v_i^1,v_i^2)=
\begin{cases}
    (3/2,\;1/2+2\eps), & i\in B,\\
    (b_i-\eta,\;b_i-\eta), & i\notin B.
\end{cases}
\]
In both rounds, set $c_j^t=s_j$ for traded sellers and for utility-maximizing sellers, and set $c_j^t=s_j+\eta$ for untraded value-maximizing sellers.

Since the same deterministic mechanism is applied to the same reports in both rounds, the traded sets are $B$ and $S$ in both rounds.

Therefore, the equilibrium LGFT is at most

\[
    k(\eps)(2+2\eps)-2\sum_{j\in S}s_j
    \le 4k(\eps)\eps
    \le 8\eps.
\]
The first-best allocation can trade a buyer with value $3/2$ in round~1 with a seller whose cost is at most $1$, obtaining LGFT at least $1/2$.

Thus the ratio between first-best LGFT and equilibrium LGFT is at least {$1/(16\eps)$}, which diverges as $\eps\to0$.

We verify that the displayed bids form a pure Nash equilibrium. {Every buyer in $B$ wins in both rounds.} By ex-post IR for utility-maximizing buyers, her payment in each round is at most her bid {$b_i\le1+\eps$}, so her total payment is at most $2+2\eps$, exactly her total value. Hence her RoS constraint is feasible. {Since she already trades in both rounds, no feasible deviation increases her value.}

Every buyer outside $B$ has value $b_i-\eta$ in each round. By \cref{lem:losing-thresholds}, any newly won round costs at least $b_i$. Thus every nonempty deviation violates her aggregate RoS constraint.

A traded value-maximizing seller has cost $s_j$ in each round and receives at least $s_j$ by ex-post IR. Hence her RoC constraint holds, and she already trades in both rounds. An untraded value-maximizing seller has cost $s_j+\eta$ in each round, while any newly traded round pays her at most $s_j$ by \cref{lem:losing-thresholds}. Thus every nonempty deviation violates RoC. Utility-maximizing sellers report their true costs and have no profitable deviation by uIC.

Thus the profile is a pure Nash equilibrium.

Finally, in round~1 every buyer value exceeds every seller cost, and at least one buyer has value $3/2$. Thus TR with truthful target-normalized reports obtains LGFT at least $1/2$. First-best LGFT across both rounds is at most
\[
    2(1/2+\eps)+4\eps=1+6\eps<2.
\]
Hence truthful TR obtains at least a quarter of first-best LGFT, establishing the claimed constant-factor benchmark.

For the case in which sellers are value maximizers, use the same report profiles. If $k(\eps)=0$ for arbitrarily small $\eps$, fix such an $\eps$ and set $c_j=s_j+\eta$ for every seller, $v_i=b_i$ for utility-maximizing buyers, and $v_i=b_i-\eta$ for value-maximizing buyers. The same argument gives a pure equilibrium with zero LGFT, while truthful TR obtains at least half the positive first-best LGFT.

Otherwise, fix a sufficiently small $\eps$ with $k(\eps)\ge1$, use two rounds, and set
\[
(c_j^1,c_j^2)=
\begin{cases}
    (3/2,\;1/2-2\eps), & j\in S,\\
    (s_j+\eta,\;s_j+\eta), & j\notin S.
\end{cases}
\]
Set $v_i^t=b_i$ for traded buyers and utility-maximizing buyers, and $v_i^t=b_i-\eta$ for untraded value-maximizing buyers. A traded seller receives at least $2s_j\ge2-2\eps$, her total traded cost, and already trades in both rounds. Untraded sellers cannot make a nonempty RoC-feasible deviation by \cref{lem:losing-thresholds}. Traded value-maximizing buyers satisfy RoS by ex-post IR and already trade in both rounds; untraded value-maximizing buyers cannot make a nonempty RoS-feasible deviation. Utility-maximizing buyers report truthfully. Thus this is again a pure Nash equilibrium.

Its LGFT is at most
\[
    2\sum_{i\in B}b_i-k(\eps)(2-2\eps)
    \le4k(\eps)\eps\le8\eps.
\]
In round~2 every buyer value exceeds every seller cost, and at least one seller has cost $1/2-2\eps$. Consequently, both first-best LGFT and truthful-TR LGFT are at least $1/2$. Total first-best LGFT is at most
\[
    4\eps+2(1/2+3\eps)=1+10\eps<2.
\]
Truthful TR therefore again obtains at least a quarter of first-best LGFT, while the equilibrium ratio diverges as $\eps\to0$.

\end{proof}

{The second impossibility removes the determinism and anonymity assumptions, but applies to the case where exactly one side is value-maximizing.}

\begin{theorem}\label{thm:poa_gft_lowerbound_truthful_randomized}
Let $M$ be any uIC, IR, and ex-post WBB mechanism. If sellers are utility-maximizers and buyers are value-maximizers, or symmetrically if buyers are utility-maximizers and sellers are value-maximizers, then the Price of Anarchy for Liquid Gains from Trade is unbounded: $\gftpoa(M)=\infty$.
Moreover, the bad-equilibrium instances can be chosen so that Trade Reduction under truthful target-normalized reports achieves the first-best LGFT.
\end{theorem}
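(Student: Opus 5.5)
The plan is to reuse the \emph{pooling} idea from \cref{thm:poa_gft_lowerbound_truthful}, but in the simplest environment: bilateral trade, repeated over two rounds, with one value-maximizing buyer and one utility-maximizing seller. The seller-side case should be symmetric. Because $M$ is uIC, the seller bids truthfully in every equilibrium, so the only strategic choice left is the buyer's best response. The instance is as follows. The seller has cost $1$ in both rounds. The buyer has normalized values $v^1=3/2$ and $v^2=1/2+\delta$, where $\delta>0$ is small. The first-best allocation trades only in round~$1$, so first-best LGFT is $1/2$, and truthful TR (a posted-price-like rule in bilateral trade, or made so by adding a dummy agent pair) achieves it. The goal is to show that every buyer best response gives equilibrium LGFT at most $O(\delta)$.

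\textbf{Step 1: a payment lower bound that does not use determinism or anonymity.}
Let $x^t$ be the trade probability in round $t$ and $q^t$ the payment to the seller.
\begin{itemize}
\item A truthful seller receives nonnegative utility by IR, so $q^t\ge c\,x^t=x^t$.
\item Ex-post WBB then gives buyer payment $p^t\ge q^t\ge x^t$.
\end{itemize}
Substituting into the buyer's RoS constraint $1.5x^1+(0.5+\delta)x^2\ge p^1+p^2\ge x^1+x^2$ yields $0.5x^1\ge(0.5-\delta)x^2$. The equilibrium LGFT equals exactly $0.5x^1-(0.5-\delta)x^2$, which is therefore nonnegative and equals the slack in this inequality.

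\textbf{Step 2: the buyer's best response uses up that slack.}
This step needs $x^2\approx x^1$ at the best response. The buyer's objective $1.5x^1+(0.5+\delta)x^2$ rewards raising $x^2$. By Myerson's characterization for uIC mechanisms, $x^2$ is monotone in $b^2$ and $p^2$ is the threshold integral, so the buyer can raise $x^2$ continuously while paying for it along the allocation curve. I would argue as follows.
\begin{itemize}
\item If round~$1$ is won with probability $x^1$ at some payment, the buyer can at least replicate her round-$1$ bid in round~$2$. That gives $x^2=x^1$ and $p^2=p^1$, because the mechanism is identical across rounds and the seller's report is the same.
\item RoS for this replicated profile reads $(2+\delta)x^1\ge 2p^1$. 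This holds whenever round~$1$ alone satisfies $p^1\le(1+\delta/2)x^1$.
\end{itemize}
The case split is then:
\begin{itemize}
\item \textbf{(a)} The round-$1$ winning prices are at most $(1+\delta/2)$ per unit of allocation. Then replication is feasible and strictly increases value. So at any best response $x^2\ge x^1$ up to the pooling slack, and LGFT $=O(\delta)$.
\item \textbf{(b)} Every bid with positive allocation charges more than $(1+\delta/2)$ per unit. This is the analogue of ``no trade near the crossing'' (Case~1 of \cref{thm:poa_gft_lowerbound_truthful}). Here I would lower $v^1$ to just below that price level so that the buyer trades nowhere, giving LGFT $0$ while first best stays positive.
\end{itemize}
Existence of a pure equilibrium needs the buyer's sup to be attained. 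I would handle this by a compactness/threshold argument on the one-dimensional allocation curves, or by choosing $\delta$ along a sequence so the sup is attained.

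\textbf{Step 3: conclude.}
Letting $\delta\to0$ makes the ratio $\frac{1/2}{O(\delta)}$ diverge, so $\gftpoa(M)=\infty$. For the symmetric seller-side case, take a truthful buyer with value $1$, and seller costs $1/2$ in round~$1$ and $3/2-\delta$ in round~$2$. Use $p\le v\,x$ from buyer IR and WBB to bound the seller's revenue, then run the same replication argument.

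The main obstacle is Step~2, i.e., showing every best response, not just some, nearly equalizes $x^1$ and $x^2$ for an arbitrary randomized allocation curve. I would first try to prove that any profile with $x^2<x^1-\Theta(\delta)$ admits a RoS-feasible improvement by moving $b^2$ toward $b^1$. Feasibility should follow because Myerson payments per unit of marginal allocation are at most the bid, and the bid is at most $b^1$ in the relevant range.
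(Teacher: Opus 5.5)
\paragraph{Gap in Step~2.} The gap is in Step~2, and it cannot be closed for your fixed instance, because the claim that every buyer best response nearly equalizes $x^1$ and $x^2$ is false.

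Take $M$ to be the random posted price: draw $P\sim U[1,2]$ independently of the reports, trade iff $s\le P\le b$, and transfer $P$. It is DSIC, ex-post IR, and exactly budget balanced. With the seller asking $1$, bids $1+a$ and $1+c$ ($a,c\in[0,1]$) give $x^1=a$, $x^2=c$, $p^1=a+a^2/2$ and $p^2=c+c^2/2$. RoS then reads $a(1-a)/2\ge(\frac12-\delta)c+c^2/2$. Hence $c\le(1-a)/(1-2\delta)$, and the buyer's value is at most $\frac32 a+\frac{1/2+\delta}{1-2\delta}(1-a)$. For $\delta<1/4$ this bound is uniquely maximized at $a=1$, and RoS then forces $c=0$. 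So every best response trades fully in round~1 at average price $3/2$, never trades in round~2, and attains the first-best LGFT $1/2$.

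Your case split does not cover this $M$. Bids $b\in(1,1+\delta]$ win at per-unit price $(b+1)/2\le 1+\delta/2$, so case~(b) does not apply. But the best-response round-1 bid is not one of these cheap bids, so replicating it is RoS-infeasible. The reason is that each marginal unit costs the bid, which is at least $1$. RoS slack is therefore better spent buying round-1 allocation (value $3/2$) than round-2 allocation (value $\frac12+\delta$).

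Your own Step~1 inequality shows the obstruction: LGFT $\ge (p^1+p^2)-(x^1+x^2)$. Small LGFT thus requires an average price close to $1$ on essentially all purchased allocation. A mechanism whose allocation rises continuously from the threshold offers such prices only on a tiny allocation.

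\paragraph{What is missing.} The instance must be adapted to $M$'s allocation curve at the crossing, with values close to the cost rather than $3/2$. The paper fixes the other agents so that the buyer faces the same curve $x(\cdot)$ in both rounds, and shows $x(b)=0$ for $b<1$. It then splits on $b_0=\sup\{b:x(b)=0\}$ and on the behavior of $x$ at $1$.
\begin{enumerate}
\item If $b_0>1$, this is essentially your case~(b).
\item If $x$ is continuous at $1$, the paper takes values $1+\delta$ and $0$, with $\delta$ tuned so that RoS already fails at $1+\varepsilon$, where $x(1+\varepsilon)<1/K$. The loss then comes from RoS alone, with no pooling. For the posted price above, feasible bids stop at $1+2\delta$, so LGFT is at most $2\delta^2$ against a first best of $\delta$.
\item If $x$ jumps by $\alpha$ at $1$, the paper takes values $1\pm\delta$. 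Bidding $1$ in both rounds is RoS-feasible because the values sum to $2$. This forces $x(b^2)\ge\alpha$ in any equilibrium, while $x(b^1)<\alpha+1/K$, so LGFT is at most $\delta/K$.
\end{enumerate}

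\paragraph{The ``moreover'' clause.} Separately, this clause fails in bilateral trade: TR always reduces the only trade, so truthful TR gets zero LGFT. You need the extra pair, a value-$1$ buyer and a cost-$1$ seller. Step~1 must then also use that buyer's IR (her payment is at most her allocation) to obtain $p\ge x$ for the value maximizer. You must also check that the extra buyer's bid is a best response.
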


\label{app:poa_gft_lowerbound_truthful_randomized}

\begin{proof}[Proof of \cref{thm:poa_gft_lowerbound_truthful_randomized}]
We prove the case of value-maximizing buyers and utility-maximizing sellers. The reverse case follows by exchanging buyers and sellers and replacing RoS constraints by RoC constraints.

Consider two rounds, each with two buyers and two sellers. All sellers have cost~$1$. Buyer~2 has value~$1$ in both rounds. Buyer~1 has values $v_1$ and $v_2$, which will be chosen below.
\[
\begin{array}{c|cc}
  \multicolumn{3}{c}{\textbf{Round 1}} \\
  \textbf{Agent ID} & \textbf{Seller Cost} & \textbf{Buyer Value} \\
  \hline
  1& 1 & v_1\\
  2& 1 & 1
\end{array}
\qquad
\begin{array}{c|cc}
  \multicolumn{3}{c}{\textbf{Round 2}} \\
  \textbf{Agent ID} & \textbf{Seller Cost} & \textbf{Buyer Value} \\
  \hline
  1& 1 & v_2\\
  2& 1 & 1
\end{array}
\]
Since sellers are utility-maximizers and the mechanism is uIC and IR, the sellers truthfully ask~$1$. Buyer~2 bidding~$1$ in both rounds is a best response: any additional allocation would require crossing a buyer threshold strictly above~$1$, and the associated threshold payment would exceed buyer~2's value.

Now fix the other agents' reports as above and consider buyer~1. Because the two rounds are identical from the mechanism's perspective except for buyer~1's value, buyer~1 faces the same interim allocation and payment rule in both rounds. Let $x(b)$ and $p(b)$ denote buyer~1's allocation probability and expected payment when she bids $b$. Utility-IC and IR imply that $x$ is nondecreasing. We also have $x(b)=0$ for all $b<1$. Indeed, seller IR implies that total seller payments are at least the total allocation probability, buyer~2's IR at value~$1$ implies that buyer~2's payment is at most her allocation probability, and WBB then implies $p(b)\ge x(b)$ for buyer~1. If $b<1$ and $x(b)>0$, a utility-maximizing buyer with value $b$ would have negative utility, contradicting IR. Taking the standard right-continuous representative of the monotone allocation rule, Myerson's payment identity gives
\[
    p(b)=b\,x(b)-\int_1^b x(z)\,dz,\qquad b\ge1.
\]
Let
$
    b_0=\sup\{b:x(b)=0\}\ge1.
$
We distinguish three cases.

\paragraph{Case 1: $b_0>1$.}
Set $v_1=v_2=1+(b_0-1)/2$. Any positive allocation for buyer~1 requires payment at least~$b_0$, which is larger than her value. Hence buyer~1 cannot trade in any RoS-feasible deviation, so the equilibrium LGFT is zero. The first-best LGFT is $b_0-1>0$, obtained by trading buyer~1 in both rounds. This gives an infinite ratio.

\paragraph{Case 2: $b_0=1$ and $x$ is continuous at $1$.}
Choose $v_1=1+\delta$ and $v_2=0$, where $\delta>0$ will be chosen below. The first-best LGFT is $\delta$. Since $x$ is continuous at~$1$ and $x(1)=0$, for every $K>1$ there exists $\eps>0$ such that $x(1+\eps)<1/K$. Choose $\delta\in(0,\eps)$ small enough so that
\[
    \delta < \int_1^{1+\eps}\left(1-\frac{x(z)}{x(1+\eps)}\right)dz,
\]
where if $x(1+\eps)=0$ the conclusion is immediate.

If buyer~1 bids $b_1$ in round~1, her RoS constraint requires
\begin{equation}\label{eq:roi_1_updated}
    x(b_1)(1+\delta)\ge p(b_1)=b_1x(b_1)-\int_1^{b_1}x(z)\,dz .
\end{equation}
Equivalently,
\begin{equation}\label{eq:roi_2_updated}
    \int_{1+\delta}^{b_1}\bigl(x(b_1)-x(z)\bigr)dz
    \le
    \int_1^{1+\delta}x(z)\,dz .
\end{equation}
The choice of $\delta$ implies that \cref{eq:roi_1_updated} fails at $b_1=1+\eps$; since the left-hand side of \cref{eq:roi_2_updated} is nondecreasing in $b_1$, every RoS-feasible bid satisfies $b_1<1+\eps$. Therefore buyer~1's equilibrium allocation probability in round~1 is at most $x(1+\eps)<1/K$, and the equilibrium LGFT is at most $\delta/K$. Since the first-best LGFT is $\delta$, the ratio is at least~$K$.

\paragraph{Case 3: $b_0=1$ and $x$ has a jump at $1$.}
Let $\alpha=x(1)>0$. By right-continuity, for every $K>1$ there exists $\eps>0$ such that
\[
    x(1+\eps)<\alpha+\min\left\{\frac{1}{K},\frac{\alpha}{2}\right\}.
\]
Choose $v_1=1+\delta$ and $v_2=1-\delta$, for a sufficiently small $\delta\in(0,\eps)$ satisfying
\[
    \delta < \min\left\{
    \int_1^{1+\eps}\left(1-\frac{x(z)}{x(1+\eps)}\right)dz,
    \frac{1}{3}
    \right\}.
\]
The first-best LGFT is $\delta$.

Let $b_1,b_2$ be buyer~1's bids in the two rounds. We first claim that in any equilibrium $x(b_2)>0$. If $x(b_2)=0$, then the same RoS calculation as in Case~2 implies $b_1<1+\eps$, and hence
\[
    (1+\delta)x(b_1)<(1+\delta)\cdot\frac{3}{2}\alpha<2\alpha.
\]
By instead bidding $b_1=b_2=1$, buyer~1 obtains allocation probability $\alpha$ in both rounds, total value $\alpha(v_1+v_2)=2\alpha$, and total payment $2\alpha$, satisfying RoS. This is a profitable feasible deviation, contradiction. Therefore $x(b_2)>0$, and by the definition of the jump, $x(b_2)\ge\alpha$.

Since $v_2<1$, any positive allocation in round~2 consumes RoS slack:
\[
    x(b_2)(1-\delta)
    <
    p(b_2)
    =b_2x(b_2)-\int_1^{b_2}x(z)\,dz
    \qquad \text{for every } b_2>1.
\]
Consequently, the high-value round must itself satisfy the RoS inequality from \cref{eq:roi_1_updated}; the argument from Case~2 gives $b_1\le1+\eps$ and therefore
\[
    x(b_1)\le x(1+\eps)<\alpha+\min\left\{\frac{1}{K},\frac{\alpha}{2}\right\}.
\]
The equilibrium LGFT is thus at most
\[
    \delta x(b_1)-\delta x(b_2)
    \le
    \delta\left(\alpha+\min\left\{\frac{1}{K},\frac{\alpha}{2}\right\}\right)-\delta\alpha
    \le
    \frac{\delta}{K}.
\]
Since the first-best LGFT is $\delta$, the ratio is at least~$K$.

Because $K$ is arbitrary, the PoA is unbounded. In the constructed instances, TR with truthful target-normalized reports obtains the first-best LGFT: it trades buyer~1 in the positive-LGFT round and makes no positive-LGFT trade in the other round.
\end{proof}

\begin{remark}
The lower bounds above do not rely on the absence of equilibria. In each constructed instance, the proof explicitly specifies a pure bidding profile and verifies that it is a Nash equilibrium for the relevant value-maximizing and utility-maximizing agents. Thus the PoA ratios are well-defined on the instances used in the lower bounds.
\end{remark}

\begin{remark}
For both \cref{thm:poa_gft_lowerbound_truthful,thm:poa_gft_lowerbound_truthful_randomized}, the lower-bound instances are not pathological instances on which Trade Reduction is inherently poor under truthful reporting. In the deterministic construction, truthful Trade Reduction obtains a constant-factor approximation to first-best LGFT, and in the randomized construction it obtains first-best LGFT. The failure comes from the equilibrium behavior induced by value-maximizers' global RoS/RoC constraints, not from the absence of good trades.
\end{remark}

Our results leave open the possibility that non-uIC mechanisms tailored specifically to autobidding could achieve a finite PoA bound for Liquid GFT in the prior-free setting.

\section{Welfare Price of Anarchy of Trade Reduction}
\label{sec:welfare_poa_tr}
In Section~\ref{sec:prior_free_unbounded_poa}, we proved that no ``classic'' prior-free mechanism can hope to approximate the LGFT.
In this section, we show that classic prior-free mechanisms \emph{can} approximate the liquid welfare.
In fact, we make use of the simple Trade Reduction mechanism due to \cite{MCAFEE} which is defined as follows.
For a double auction instance, let $b_1 \geq \ldots \geq b_m$ be the bids of the buyers and $s_1 \leq \ldots \leq s_n$ denote the bids of the sellers.
If $b_1 < s_1$, there is no trade so assume $b_1 \geq s_1$.
Let $k^* = \max\{ k \,:\, b_k \geq s_k \}$.
We then trade buyers $b_1, \ldots, b_{k^*-1}$ and sellers $s_1, \ldots, s_{k^*-1}$.
Traded buyers pay $b_{k^*}$ while traded sellers are paid $s_{k^*}$.
In other words, we compute the optimal matching, \emph{reduce} the last trade, and use the reduced buyer's bid and seller's ask as the prices for the traded buyers and traded sellers, respectively.

In this section, we allow buyers to be hybrid buyers. See \cref{sec:prelim} for the formal definition. Let $\beta$ be defined in \cref{eq:hybrid_objective}. The endpoint $\beta_i=0$ gives a value-maximizing buyer, while $\beta_i=1$ gives the quasi-linear objective. For TR, a utility-maximizing buyer has a best response consisting only of rounds with nonnegative utility, so the aggregate RoS constraint does not restrict the optimum at this endpoint.

Before we state our results, we require the following observation.
\begin{observation}
For deterministic uIC threshold mechanisms such as Trade Reduction:
\begin{itemize}[itemsep=0pt,topsep=0pt]
\item A {hybrid} buyer has no incentive to bid below its value.
\item A value-maximizing seller has no incentive to bid above its cost.
\end{itemize}
\end{observation}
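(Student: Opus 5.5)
The plan is a per-round \emph{bid-raising} (for buyers) and \emph{ask-lowering} (for value-maximizing sellers) exchange argument, using only the threshold structure of a deterministic uIC mechanism. Fix the other agents' bid vectors and any bid vector of the hybrid buyer $i$. Suppose that in some round $t$ her bid satisfies $b_i^t<v_i^t$, where $v_i^t$ is her normalized value. We will show that replacing $b_i^t$ by $v_i^t$, leaving all other rounds unchanged, is weakly better for her objective $V_i-\beta_iP_i$ and preserves RoS feasibility $V_i\ge P_i$. Applying this round by round then shows that some best response never bids below value.

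The first step describes round $t$ in threshold form. Because the mechanism is deterministic and uIC (Myerson's characterization, as already used through \cref{lem:losing-thresholds}), there is a threshold $\theta$ depending on the others' reports such that buyer $i$ trades iff her bid exceeds $\theta$ (up to tie-breaking), and a winner pays exactly $\theta$. For Trade Reduction this threshold is explicit: it is the reduced bid or ask determined by the other reports. Now compare $b_i^t$ with $v_i^t$ and split into three cases.
\begin{itemize}
\item \emph{She already wins at $b_i^t$.} Raising the bid to $v_i^t$ keeps her winning at the same payment $\theta$, so nothing in her objective or constraint changes.
\item \emph{She loses at both bids.} Again nothing changes.
\item \emph{She loses at $b_i^t$ but wins at $v_i^t$.} Then $\theta\le v_i^t$. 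Her value increases by $v_i^t$ and her payment by $\theta\le v_i^t$. The RoS slack $V_i-P_i$ therefore changes by $v_i^t-\theta\ge0$, so feasibility is preserved. The objective changes by $v_i^t-\beta_i\theta\ge(1-\beta_i)v_i^t\ge0$, using $\beta_i\le1$.
\end{itemize}
In every case the deviation is weakly improving. Since rounds enter $V_i$ and $P_i$ additively and the bid in round $t$ affects only round $t$, we can repeat the step for each underbidding round.

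The seller statement is symmetric. For a value-maximizing seller $j$ with ask $s_j^t>c_j^t$, lower the ask to $c_j^t$. If this newly makes him trade, he is paid the threshold $\theta'\ge c_j^t$, so the RoC slack $Q_j-C_j$ changes by $\theta'-c_j^t\ge0$. His objective $C_j$ increases by $c_j^t\ge0$, and if he was already trading, nothing changes. (Costs are in normalized units, so the constraint reads $Q_j\ge C_j$.)

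The step I expect to need the most care is the threshold characterization with ties and randomness-free tie-breaking in Trade Reduction. When $v_i^t$ equals the threshold, whether she trades depends on the tie-breaking rule, and the payment must still equal $\theta$. I would handle this by appealing to monotonicity together with uIC. Any winning bid pays a bid-independent price, since otherwise a utility-maximizer would misreport. Any winning bid pays at most the bid, by IR, and at least the largest losing bid, by \cref{lem:losing-thresholds}. Together these pin the price at $\theta$ regardless of tie-breaking. The statement is ``no incentive,'' so weak improvement suffices, and ties cause no strictness issues.
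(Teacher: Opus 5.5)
Your proposal is correct and follows the paper's argument. It uses the same three-case analysis: already winning, losing at both bids, and newly winning at a threshold price $\pi\le v_i^t$, which adds $v_i^t-\beta_i\pi\ge 0$ to the hybrid objective and does not reduce RoS slack. It then gives the symmetric ask-lowering argument for value-maximizing sellers; your extra step pinning the winning price at the threshold via IR and \cref{lem:losing-thresholds} is a harmless refinement for ties that the paper leaves implicit.
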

  
To see this, suppose a buyer bids below its value.  
If the buyer wins the round, then increasing its bid (up to its true value) leaves the allocation and price unchanged, so there is no loss in doing so.  
If the buyer does not win the round, there are two cases to consider.  
If raising the bid to its value still does not result in winning, then underbidding provides no benefit.  
If, on the other hand, increasing the bid to its value would result in winning, then the buyer is incentivized to do so, {since the new trade has threshold price $\pi\le v_i^t$, contributes $v_i^t-\beta_i\pi\ge0$ to her hybrid objective, and does not decrease her RoS slack.}

A similar argument applies to show that a seller has no incentive to bid above its cost.
Throughout this section, following the standard no-dominated-bids convention, we restrict attention to bid profiles in which {hybrid buyers do not bid below their target-normalized values} and value-maximizing sellers do not ask above their target-normalized costs.

Utility-maximizing agents report truthfully. All upper bounds below are for feasible pure equilibria satisfying these conventions. Write $\underline\beta=\min_i\beta_i$ and $r=\min_t|\tbo(t)|\ge2$, where each optimal allocation has maximum cardinality among welfare-optimal allocations, with fixed tie-breaking.

Our first result concerns hybrid buyers and utility-maximizing sellers.
\begin{theorem}[Hybrid buyers and utility-maximizing sellers]
\label{thm:one_side_value_poa}
\label{thm:welfare_poa_value_buyers_utility_sellers}
Consider a sequence of double auctions with hybrid buyers \cref{eq:hybrid_objective} and utility-maximizing sellers. Let $r\ge2$ denote the size of the smallest optimal matching among the rounds. Then
\[
    \welpoa(TR)\le 2-\underline\beta+\frac{1}{r-1}.
\]
If all buyers share a common parameter $\beta_i=\beta$, then
\[
    \welpoa(TR)=2-\beta+\frac{1}{r-1}.
\]
\end{theorem}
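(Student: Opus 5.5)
Work in target-normalized units, so buyers have values $v_i^t$ with aggregate constraint $V_i\ge P_i$, and sellers are utility-maximizers who report truthfully. Fix a feasible pure equilibrium $(\bids,\sbids)$ in which no hybrid buyer bids below her value, and let $\alloc$ be the TR outcome at these bids.

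In each round $t$, let $\tbo(t)$ be the optimal trades under true values and costs. Split them into two kinds:
\begin{itemize}[itemsep=0pt,topsep=0pt]
\item $\tbo(t)\cap\alloc^t$, the optimal buyers who also trade in equilibrium;
\item the optimal buyers who do not trade in equilibrium.
\end{itemize}
The first kind is paid for directly by the equilibrium welfare, so all the work is in bounding the value of the second kind, together with the loss caused by the reduced trade.

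\textbf{Step 1: prices of TR.} For a buyer $i$ who loses in round $t$, TR gives a threshold price $\pi_i^t$ that she would pay if she deviated to win that round alone. Because sellers are truthful, this threshold is at most the larger of two quantities: the equilibrium price of round $t$, i.e.\ the reduced buyer's bid, and the asks of sellers in the optimal matching. Concretely, raising $i$'s bid makes her displace the reduced pair or cross the first untraded seller. Since she would win at threshold $\pi_i^t$, I will show
\[
\pi_i^t\;\le\;\max\{\text{a bid of a trading or reduced buyer},\ c_{j}^t\}
\]
for the seller $j$ matched to her in the optimal matching.

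\textbf{Step 2: equilibrium deviation.} Take an untraded optimal buyer $i$ whose value exceeds $\pi_i^t$, and let her add a win in round $t$. The RoS constraint may block this deviation. I handle the two possibilities as follows.
\begin{itemize}[itemsep=0pt,topsep=0pt]
\item If the deviation is feasible and raises her hybrid objective, this contradicts equilibrium. So $v_i^t\le\beta_i\pi_i^t$ unless the RoS slack is insufficient; in particular, if slack is available then $v_i^t\le\pi_i^t$.
\item If RoS blocks every such deviation, sum the deviations across rounds in the style of \cite{aggarwal2019autobidding}. The resulting inequality is
\[
\sum_{t\,:\,i\text{ misses}} v_i^t \;\le\; \sum_{t\,:\,i\text{ misses}} \pi_i^t \;+\; (1-\beta_i)\,(V_i-P_i)\ \text{slack terms}.
\]
The slack terms are bounded by $V_i$ since $P_i\ge0$. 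Where an individual constraint is not tight, I will use the hybrid tradeoff $V_i-\beta_iP_i$ to show that $(1-\beta_i)V_i$ is the correct coefficient.
\end{itemize}
The upshot is
\[
\sum_{\text{missed}} v \;\le\; \sum_{\text{missed}} \pi \;+\; (1-\underline\beta)\,\sum_i V_i .
\]

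\textbf{Step 3: charging the prices.} Next I charge $\sum\pi$ to equilibrium welfare. A price equal to the reduced buyer's bid is at most the value of an equilibrium-traded buyer or of the seller kept in the market, because bids are at least values only from below and the reduced buyer's bid is at most the winning bids. Prices equal to $c_j^t$ are charged to the untraded optimal sellers' retained costs, which count in $\wel$ because those sellers keep their items. One round of reduction loses at most one trade out of the $r$ optimal trades. Following McAfee's argument, the reduced trade's value is at most a $\frac{1}{r-1}$ fraction of the others, which contributes the additive $\frac{1}{r-1}$.

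Summing gives
\[
\wel(\alloc^*)\;\le\;\wel(\alloc)+\wel(\alloc)+(1-\underline\beta)\,\wel(\alloc)\ \text{correction}+\tfrac{1}{r-1}\,\wel(\alloc).
\]
I need to arrange the terms so that the total is $(2-\underline\beta+\tfrac{1}{r-1})\,\wel(\alloc)$. This means the price-charging step and the slack step must together contribute only $1-\underline\beta$ extra, not $2$. I expect this bookkeeping to be the \emph{main obstacle}. The way I plan to handle it is to note that with $\beta_i>0$ the payments $\beta_iP_i$ reduce the deviation term, and that the prices already double-count against the equilibrium value $V_i\ge P_i$.

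\textbf{Lower bound.} For a common $\beta$, I will build an instance with $r$ optimal trades per round. A hybrid buyer wins a subsidizing high-value round and bids on low-value rounds, so that TR's reduction together with the RoS pooling leaves a loss factor of $2-\beta+\frac{1}{r-1}$. The instance follows the tight second-price example of \cite{aggarwal2019autobidding} interpolated by $\beta$, with a reduction gap added to create the $\frac{1}{r-1}$ term.
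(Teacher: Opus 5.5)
Your overall shape is right: credit $\tbo(t)\cap\tbv(t)$ directly, charge the missing optimal agents to prices, and close with RoS. But the step you flag as the main obstacle is a missing idea, not bookkeeping. The factor $\underline\beta$ comes from a local drop-a-trade deviation by \emph{traded} buyers. Suppose buyer $i$ trades in round $t$ at TR price $p^t$ with $v_i^t<\beta_ip^t$. Dropping only that trade strictly increases $V_i-\beta_iP_i$. Since $\beta_i\le1$ forces $v_i^t<p^t$, it also relaxes RoS. So in equilibrium every traded buyer has $v_i^t\ge\beta_ip^t$. Hence each displacing buyer in $\tbv(t)\setminus\tbo(t)$ already contributes at least $\underline\beta p^t$ to equilibrium welfare $W$ while paying $p^t$. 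Only $(1-\underline\beta)p^t$ of her payment is then needed to cover a missing optimal agent. You instead try to put $\beta$ into deviations of \emph{untraded} optimal buyers plus an unspecified slack term. That inequality is not justified, and no deviation argument is needed there: no-underbidding already gives every untraded buyer $v_i^t\le b_i^t\le p^t$. Charging prices to welfare and then adding a separate $(1-\underline\beta)$ slack copy is exactly why your count lands near $3-\underline\beta$.

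Several charging claims in Step 3 are also wrong.
\begin{itemize}
\item The reduced buyer's bid is not bounded by the value of an equilibrium-traded buyer. The convention bounds bids only from below, and winners do overbid: in the tight instance they have value $\beta+2\varepsilon$, bid $1+\varepsilon$, and pay $1$. Prices must be charged to the payments $P_t=|\tbv(t)|\,p^t$, and only in aggregate to welfare via RoS, $P\le W$.
\item On the seller side you have the direction reversed. The loss to cover is the retained cost of sellers in $\rso(t)\setminus\rsv(t)$, i.e.\ retained in the optimum but trading at equilibrium. Because sellers are truthful, each has cost at most $p_s^t\le p^t$.
\item The $\frac1{r-1}$ term is not McAfee's truthful bound, since equilibrium trades need not be optimal. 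It comes from $|\tbo(t)\setminus\tbv(t)|+|\rso(t)\setminus\rsv(t)|\le|\tbv(t)\setminus\tbo(t)|+1$. It also uses $|\tbv(t)|\ge r_t-1\ge r-1$, because raising bids relative to truth cannot shrink the TR trade count. So the single unmatched loss $p^t$ is at most $P_t/(r-1)$.
\end{itemize}
Together these yield $\mathrm{OPT}\le W+(1-\underline\beta+\frac1{r-1})P\le(2-\underline\beta+\frac1{r-1})W$.

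Finally, your lower bound is only a description; it needs an explicit instance with a verified equilibrium. One such instance has two rounds, each with $r$ sellers of cost $\varepsilon$ and $2r-1$ buyers. In the first round, $r-1$ overbidding buyers of value $\beta+2\varepsilon$ displace $r$ value-$1$ buyers at price $1$. In the second round they are subsidized by value $1-\beta$ at price $2\varepsilon$. The endpoint $\beta=1$ needs a separate truthful single-round instance.
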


Our second result concerns hybrid buyers and value-maximizing sellers.
\begin{theorem}[Hybrid buyers and value-maximizing sellers]
\label{thm:two_side_value_poa}
\label{thm:welfare_poa_two_side_upperbound}
Consider a sequence of double auctions with hybrid buyers \cref{eq:hybrid_objective} and value-maximizing sellers. Let $r\ge2$ denote the size of the smallest optimal matching among the rounds. Then
\[
    \welpoa(TR)\le 3-\underline\beta+\frac{1}{r-1}.
\]
If all buyers share a common parameter $\beta_i=\beta$, then
\[
    3-\beta-\frac{2-\beta}{r+1}
    \le\welpoa(TR)\le
    3-\beta+\frac{1}{r-1}.
\]
\end{theorem}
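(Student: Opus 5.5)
The plan is a smoothness-style charging argument, done round by round and then summed using the aggregate RoS/RoC constraints. Fix a feasible pure equilibrium $(\bids,\sbids)$ that follows the no-dominated-bids convention, with allocation $\alloc$, and the optimal allocation $\alloc^*$ (maximum cardinality, $|\tbo(t)|\ge r$). Write $\wel(\alloc^*)=\sum_t\bigl[\sum_{i\in\tbo(t)}v_i^t+\sum_{j\notin \tbo(t)}c_j^t\bigr]$, where $\tbo(t)$ is read as the optimally traded buyers or sellers as appropriate. Partition each part according to whether the agent's status in $\alloc$ agrees with her status in $\alloc^*$:
\begin{itemize}
\item optimally traded buyers that also trade at equilibrium, and optimally untraded sellers that are also untraded, are charged directly to $\wel(\alloc)$;
\item the remaining terms are (a) optimal buyers who lose at equilibrium and (b) optimally untraded sellers who nevertheless trade at equilibrium.
\end{itemize}
The target is the following three-way split, with term (a) carrying both the $(1-\underline\beta)$ and the $1/(r-1)$ losses:
\begin{align*}
\wel(\alloc^*)\;\le\;\underbrace{\wel(\alloc)}_{\text{direct}}\;+\;\underbrace{\Bigl(1-\underline\beta+\tfrac{1}{r-1}\Bigr)\wel(\alloc)}_{\text{(a)}}\;+\;\underbrace{\wel(\alloc)}_{\text{(b)}}.
\end{align*}

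\textbf{Term (b).} Each seller in (b) is a value-maximizer who trades at equilibrium. Her aggregate RoC constraint gives $\sum_t c_j^t y_j^t\le \sum_t q_j^t$ in normalized units. TR is ex-post WBB, so total seller revenue is at most total buyer payment. Each buyer's aggregate RoS constraint, $P_i\le V_i$, then bounds total buyer payment by the buyer part of $\wel(\alloc)$. Summing gives term (b) $\le\wel(\alloc)$; this is the extra ``$+1$'' relative to \cref{thm:one_side_value_poa}.

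\textbf{Term (a).} Take an optimal buyer $i$ who loses in round $t$, and let $\pi_i^t$ be her TR threshold, i.e.\ the bid at which she would win given the others' equilibrium reports. Adding round $t$ to her strategy raises her objective by $v_i^t-\beta_i\pi_i^t$ and uses up $\pi_i^t-v_i^t$ of RoS slack. So equilibrium plus feasibility forces one of two outcomes: either $v_i^t\le\beta_i\pi_i^t$, or the deviation is infeasible, in which case $v_i^t<\pi_i^t$ minus her current slack. Both cases give $v_i^t\le\pi_i^t$. The finer accounting must also extract a $(1-\beta_i)$ saving: when RoS is slack, the objective-based condition bounds $v_i^t$ by $\beta_i\pi_i^t$. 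Aggregating over $i$ should therefore give
\[
\sum_{(a)} v_i^t\;\le\;\sum_{(a)}\pi_i^t-\sum_i\bigl(\text{slack}_i\bigr)\;\lesssim\;(1-\underline\beta)\cdot(\text{equilibrium buyer value})+\text{(threshold overhead)}.
\]
In TR, a losing buyer's threshold is the largest of the $k^*$-th bid and the competing asks. Since the equilibrium matching has at most as many trades as optimal buyers, the thresholds of the lost optimal buyers can be injectively charged to bids of equilibrium winners plus the single reduced pair in each round. Winners' bids are at least their values but are tied to their payments through RoS, which gives the $(1-\underline\beta)$ factor. The reduced pair in round $t$ is charged to the average of the $\ge r-1$ other trades, giving the $\frac{1}{r-1}$ term.

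\textbf{Main obstacle and lower bound.} The step I expect to be hardest is making this injective threshold charging precise under TR's reduction. Overbidding value-maximizers can shift $k^*$, so the reduced pair need not be one of the optimal trades, and sellers asking below cost can create spurious crossings. I would first try to charge each lost optimal buyer to the equilibrium winner (buyer or seller) displaced in the sorted order, and then separately bound the round's single reduced trade by $\frac{1}{r-1}$ of that round's welfare. For the lower bound with common $\beta$, I would build a round with $r$ optimal trades and design the instance so that three effects coexist:
\begin{itemize}
\item value-maximizing sellers with high cost ask low, subsidized by a companion round where they earn surplus;
\item buyers overbid up to the extent allowed by their $\beta$-weighted slack;
\item trade reduction discards one near-optimal pair.
\end{itemize}
I would then tune the parameters to reach $3-\beta-\frac{2-\beta}{r+1}$ and verify the equilibrium conditions in the same way as in the proof of \cref{thm:poa_gft_lowerbound_truthful}.
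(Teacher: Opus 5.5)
Your term (b) matches the paper's argument (seller RoC, ex-post WBB, buyer RoS). Term (a), however, has a genuine gap. The three pieces cannot be bounded separately, because the bound $\text{(a)}\le(1-\underline\beta+\frac{1}{r-1})\wel(\alloc)$ is false. Take the instance of \cref{lem:welfare_poa_two_side_lowerbound} with $\beta=1$. The $r$ round-1 buyers of value $1$ are optimal but lose, so term (a) equals $r$. Meanwhile $\frac{1}{r-1}\wel(\alloc)\to\frac{r+1}{r-1}<r$ as $\eps\to0$, for $r\ge3$. The theorem survives there only because your direct term is about $1$, far below $\wel(\alloc)\approx r+1$.

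Nor can the $(1-\underline\beta)$ saving come from the losing optimal buyers. Once their RoS binds, one-round deviations give only $v_i^t\le\pi_i^t$. Your inequality $\sum_{(a)}v_i^t\le\sum_{(a)}\pi_i^t-\sum_i\mathrm{slack}_i$ also does not follow: each deviation gives an either/or condition (infeasible, or $v_i^t\le\beta_i\pi_i^t$), not both.

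The missing idea is a condition on \emph{winners}. Let $p^t$ be the TR buyer price. A traded hybrid buyer with $v_i^t<\beta_ip^t$ could drop that round, which strictly improves her objective and relaxes RoS (as $\beta_i\le1$). Hence $v_i^t\ge\beta_ip^t$ at equilibrium. So the $e_t=|\tbv(t)\setminus\tbo(t)|$ equilibrium-only buyers hold at least $\underline\beta e_tp^t$ of $W_t$, which your direct term never uses. Term (a) must therefore be charged jointly with the direct term:
\[
v_t(\tbo(t))+c_t(\rsv(t))\le W_t+(d_t-\underline\beta e_t)p^t,\qquad d_t=|\tbo(t)\setminus\tbv(t)|.
\]

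Second, thresholds and winners' bids are the wrong currency for the charge. Winners may bid up to $M$, and a losing buyer's threshold can equal the last winner's bid, so RoS controls neither. What RoS controls is the uniform price $p^t$ paid by each of the $q_t=|\tbv(t)|$ winners: $P=\sum_tq_tp^t\le W$. Every untraded buyer's value is at most her bid, hence at most the reduced bid $p^t$. So $v_t(\tbo(t)\setminus\tbv(t))\le d_tp^t$ directly.

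The obstacle you flag about $k^*$ shifting is resolved by counting. Overbidding and underasking cannot lower the reported efficient trade count, so $q_t\ge r_t-1\ge r-1$. Hence $d_t-e_t=r_t-q_t\le1$ and $1\le q_t/(r-1)$, which gives $(d_t-\underline\beta e_t)p^t\le((1-\underline\beta)e_t+1)p^t\le(1-\underline\beta+\frac{1}{r-1})q_tp^t$. Adding your term (b) yields $\mathrm{OPT}\le 2W+(1-\underline\beta+\frac{1}{r-1})P$.

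Finally, the lower bound is not proved: you list ingredients but give no instance and no equilibrium check. The paper's instance has two rounds:
\begin{itemize}
\item sellers $1,\dots,r$ have cost $0$ and then $1$, so their round-1 revenue $1-\eps$ funds an ask of $\eps$ in round 2;
\item buyers of value $\beta+\eps$ bid $1+\eps$ in round 1 and recover the RoS deficit in round 2, where they bid $M$;
\item a seller of cost $1+\eps$ asking $1-\eps$ forms the reduced pair with a value-$1$ buyer.
\end{itemize}
You need an explicit construction of this kind together with its equilibrium verification.
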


At $\beta=0$, these theorems recover the value-maximizing buyer bounds: $2+1/(r-1)$ with utility-maximizing sellers, and the interval $[3-2/(r+1),\,3+1/(r-1)]$ with value-maximizing sellers. At $\beta=1$, the first theorem recovers the truthful utility-maximizer ratio $1+1/(r-1)$, while the second gives
\[
    2-\frac{1}{r+1}\le\welpoa(TR)\le2+\frac{1}{r-1}
\]
for utility-maximizing buyers and value-maximizing sellers. The lower-bound construction for this endpoint can use truthful buyer reports, as verified below.

We first give a high-level overview of the proofs, followed by the full arguments.
We focus on the setting where buyers are hybrid buyers and sellers are utility-maximizing as an example, though all cases follow a similar approach.

For each round \(t\), let \(\tbo(t)\) denote the set of buyers traded in the optimal allocation, and \(\tbv(t)\) the set of buyers traded at equilibrium (as in Theorems~\ref{thm:one_side_value_poa} and~\ref{thm:two_side_value_poa}). Similarly, let \(\rso(t)\) and \(\rsv(t)\) denote the sets of rejected sellers in the optimal and equilibrium allocations, respectively. It is straightforward that the equilibrium two-sided welfare is at least the welfare of \(\tbv(t) \cup \rsv(t)\), and hence at least the welfare of \((\tbv(t) \cap \tbo(t)) \cup (\rsv(t) \cap \rso(t))\).

What remains is to account for the welfare lost from \(\tbo(t) \setminus \tbv(t)\) and \(\rso(t) \setminus \rsv(t)\).
To do so, we “charge” this welfare loss to the payments made by traded buyers. More concretely, buyers in \(\tbv(t)\) can be partitioned into three groups:
\begin{enumerate}
  \item Those also in \(\tbo(t)\), whose welfare is directly accounted for.
  \item Those who replace buyers in \(\tbo(t) \setminus \tbv(t)\). These buyers “kick out” the missing optimal buyers, so the missing buyers\textquotesingle{} bids lower-bound the price paid by every traded buyer, covering the welfare loss from \(\tbo(t) \setminus \tbv(t)\).
  \item Those who create new trades that force sales with sellers in \(\rso(t) \setminus \rsv(t)\). Since sellers are truthful and utility-maximizing, individual rationality ensures they receive at least their cost, covering the welfare loss from these sellers.
\end{enumerate}

This partition is not perfectly clean, which introduces an additive \(\frac{1}{r-1}\) term in the Price of Anarchy bound.

Finally, using the Return-on-Spend constraint, total buyer welfare upper bounds total buyer payments. At the value-maximizing endpoint $\beta_i=0$, combining these observations, we conclude that roughly twice the equilibrium welfare (plus a small additive term) covers the welfare of \(\tbo(t) \cup \rso(t)\).

For a hybrid buyer $i$ traded at buyer price $p^t$, equilibrium additionally implies $v_i^t\ge\beta_i p^t$: otherwise dropping the trade improves her objective and relaxes RoS. Thus every equilibrium buyer outside the optimum already contributes at least $\underline\beta p^t$ to welfare. Subtracting this contribution leaves only $(1-\underline\beta)p^t$ to charge for each such buyer. This yields $\mathrm{OPT}\le W+(1-\underline\beta+1/(r-1))P$ with utility-maximizing sellers, where $W$ and $P$ are total equilibrium welfare and buyer payments. Since $P\le W$, the first bound follows. With value-maximizing sellers, their global RoC constraints and budget balance cover the missing retained-seller costs by one additional copy of $W$.

\begin{remark}
The hybrid upper bounds also apply to heterogeneous buyer parameters and a mix of utility-maximizing and value-maximizing sellers, using the value-seller upper bound in the latter case. Utility-maximizing sellers report truthfully and satisfy ex-post IR; value-maximizing sellers satisfy the global RoC constraint and the no-overasking convention. A buyer population mixing the two original objectives is the special case $\beta_i\in\{0,1\}$.
\end{remark}

\subsection{Welfare bounds and proofs}
\label{app:welfare_poa_tr}
\label{app:one_side_value_poa}
\label{app:two_side_value_poa}
\paragraph{Useful notation.}
For a round \(t\), let \(\tbu(t)\), \(\tbv(t)\), and \(\tbo(t)\) denote, respectively, the set of buyers traded by TR under truthful target-normalized reports, the set of buyers traded by TR at the equilibrium bidding profile, and the set of buyers traded by an optimal welfare-maximizing allocation.
Similarly, let \(\rsu(t)\), \(\rsv(t)\), and \(\rso(t)\) denote the corresponding sets of rejected sellers.
For a buyer set \(A\) and seller set \(B\), write
\[
    v_t(A)=\sum_{i\in A} v_i^t,
    \qquad
    c_t(B)=\sum_{j\in B} c_j^t .
\]
Let \(\TwoSidedWelfare(t)\), \(\BuyerWelfare(t)\), \(\SellerWelfare(t)\), \(\BuyerRevenue(t)\), and \(\SellerRevenue(t)\) denote the total welfare, buyer welfare, seller retained-cost welfare, total buyer payment, and total seller revenue achieved by TR at equilibrium in round \(t\).
The notation \(\BuyerRevenue\) is kept for consistency with the main text, but it should be read as buyer payment.

Let \(r_t=|\tbo(t)|\) and \(r=\min_t r_t\ge2\).
Under truthful target-normalized reports, TR trades all but the marginal optimal trade, so
\[
    |\tbu(t)|=r_t-1,
    \qquad
    \tbu(t)\subseteq \tbo(t),
    \qquad
    \rso(t)\subseteq\rsu(t).
\]

Let $S_t$ denote the set of sellers present in round $t$. The subscript $v$ continues to denote the equilibrium allocation, now allowing hybrid buyers. Utility-maximizing agents report truthfully, hybrid buyers do not underbid, and value-maximizing sellers do not overask. All welfare and payment quantities use target-normalized units.

\begin{proof}[Proof of the upper bounds in Theorems~\ref{thm:one_side_value_poa} and~\ref{thm:two_side_value_poa}]
Fix a pure equilibrium satisfying the stated conventions. Write $p^t=p_b^t$ for the TR buyer price, $q_t=|\tbv(t)|$, $W_t=\TwoSidedWelfare(t)$, and $P_t=q_tp^t$; set $W=\sum_tW_t$ and $P=\sum_tP_t$. By buyer RoS,
\begin{equation}\label{eq:hybrid-payment-value}
P\le\sum_t\BuyerWelfare(t)\le W.
\end{equation}
Increasing bids and decreasing asks relative to truthful reports cannot decrease the reported efficient trade count. Therefore $q_t\ge r_t-1\ge r-1$.
Every untraded buyer has value at most $p^t$, since her bid is at least her value.
Every traded hybrid buyer satisfies
\begin{equation}\label{eq:hybrid-local}
v_i^t\ge\beta_i p^t\ge\underline\beta p^t.
\end{equation}
Indeed, if $v_i^t<\beta_i p^t$, dropping just this trade strictly increases her hybrid objective. It also relaxes RoS, because $\beta_i\le1$ implies $v_i^t<p^t$. This is a feasible unilateral deviation, contradicting equilibrium.

Let $a_t=|\tbo(t)\cap\tbv(t)|$, $d_t=|\tbo(t)\setminus\tbv(t)|=r_t-a_t$, and $e_t=|\tbv(t)\setminus\tbo(t)|=q_t-a_t$.
By \eqref{eq:hybrid-local}, the equilibrium buyers outside the optimum contribute at least $\underline\beta e_tp^t$ to welfare.

\paragraph{Utility-maximizing sellers.}
Truthful sellers trade in increasing order of cost. Hence
$\ell_t:=|\rso(t)\setminus\rsv(t)|=\max\{q_t-r_t,0\}$, and each such seller has cost at most $p_s^t\le p^t$. Consequently,
\[
d_t+\ell_t=e_t+\max\{r_t-q_t,0\}\le e_t+1.
\]
Subtracting the equilibrium-only buyers' welfare before charging the missing optimal terms gives
\begin{align*}
v_t(\tbo(t))+c_t(\rso(t))
&\le W_t+\bigl(d_t+\ell_t-\underline\beta e_t\bigr)p^t\\
&\le W_t+\bigl((1-\underline\beta)e_t+1\bigr)p^t\\
&\le W_t+\left(1-\underline\beta+\frac1{r-1}\right)P_t.
\end{align*}
Summing over rounds and applying \eqref{eq:hybrid-payment-value},
\[
\mathrm{OPT}\le W+\left(1-\underline\beta+\frac1{r-1}\right)P
\le\left(2-\underline\beta+\frac1{r-1}\right)W.
\]

\paragraph{Value-maximizing sellers.}
Here $d_t=e_t+r_t-q_t\le e_t+1$. The same buyer charging argument yields
\[
v_t(\tbo(t))+c_t(\rsv(t))
\le W_t+\left(1-\underline\beta+\frac1{r-1}\right)P_t.
\]
The costs of optimal retained sellers that trade at equilibrium are covered using seller RoC and ex-post weak budget balance:
\[
\sum_t c_t(\rso(t)\setminus\rsv(t))
\le\sum_t c_t(S_t\setminus\rsv(t))
\le\sum_t\SellerRevenue(t)\le P\le W.
\]
Combining the last two inequalities proves
\[
\mathrm{OPT}\le 2W+\left(1-\underline\beta+\frac1{r-1}\right)P
\le\left(3-\underline\beta+\frac1{r-1}\right)W.
\]
Taking $\beta_i=1$ gives the upper bound for utility-maximizing buyers and value-maximizing sellers. Taking $\beta_i=0$ gives the original value-buyer bounds.
\end{proof}

\begin{lemma}\label{lem:welfare_poa_one_side_lowerbound}
For every common hybrid parameter $\beta\in[0,1]$ and integer $r\ge2$, the welfare PoA of TR with hybrid buyers and utility-maximizing sellers is at least $2-\beta+1/(r-1)$.
\end{lemma}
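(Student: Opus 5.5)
We prove the lemma with an explicit two-round instance built so that every inequality in the upper-bound proof of \cref{thm:one_side_value_poa} is nearly tight. Recall what tightness requires there. Equilibrium-only buyers must have value about $\beta p^t$. The missing optimal buyers must have value about $p^t$. Exactly one extra optimal trade must be lost to the trade reduction, i.e.\ $q_t=r_t-1$. Finally, $P\approx\BuyerWelfare$. The last condition forces the payment slack in a ``bad'' round to be funded by a separate ``good'' round in which the same buyers receive high value at price zero.

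\emph{Bad round.} The round contains:
\begin{itemize}[itemsep=0pt,topsep=0pt]
\item $r$ \emph{honest} buyers, each with value $1$ and bid $1$;
\item $r-1$ \emph{impostor} buyers, each with value $\beta$ (or $\beta+\eps$ to break ties) and bid $1+\eps$;
\item $r$ utility-maximizing sellers with cost $0$.
\end{itemize}
Sorted bids put the impostors first, so $k^*=r$ with $b_r=1\ge s_r=0$. TR therefore trades exactly the $r-1$ impostors at buyer price $1$. The optimum trades all $r$ honest buyers, so its welfare in this round is $r$ while equilibrium welfare is $(r-1)\beta$.

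\emph{Good round.} The round contains:
\begin{itemize}[itemsep=0pt,topsep=0pt]
\item the same $r-1$ impostors, each with value $h=1-\beta$, bidding $h$;
\item one dummy buyer with value and bid $0$;
\item $r$ sellers with cost $0$.
\end{itemize}
Under maximum-cardinality tie-breaking the optimal matching has size $r$, as required by the definition of $r$. TR reduces the dummy trade and trades the impostors at price $0$. Both the optimum and the equilibrium obtain $(r-1)(1-\beta)$ in this round. The total ratio is therefore
\[
\frac{r+(r-1)(1-\beta)}{(r-1)\beta+(r-1)(1-\beta)}=2-\beta+\frac1{r-1},
\]
taken in the limit $\eps\to0$.

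\emph{Equilibrium verification.} This is the main obstacle; I would handle the three kinds of agents in turn.
\begin{itemize}[itemsep=0pt,topsep=0pt]
\item \emph{Impostors.} Each has RoS slack $h+\beta-1=0$ (or slightly positive with the $\eps$-perturbation). Winning the bad round contributes $\beta-\beta\cdot1\ge0$ to her hybrid objective, so she does not want to drop it, and she cannot win more.
\item \emph{Honest buyers.} Each participates only in the bad round and has zero slack. Any winning deviation requires outbidding an impostor. The displaced impostor then becomes the marginal buyer, so the price becomes $1+\eps>1$, which violates RoS. Their bids equal their values, which is consistent with the no-underbidding convention.
\item \emph{Dummy buyer and sellers.} The dummy cannot win at a price she can afford. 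The sellers are utility-maximizers and report truthfully.
\end{itemize}
Some care is needed with $\eps$-perturbations so that ties in TR's sorting go the intended way and the impostors' constraints remain exactly feasible. I expect to set the impostors' bad-round value to $\beta+\eps$ and to let the good-round value absorb the remaining slack, then take $\eps\to0$.
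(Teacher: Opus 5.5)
Your construction is correct and is essentially the paper's: a bad round in which $r-1$ overbidding buyers of value about $\beta$ displace $r$ value-$1$ buyers at price $1$, with their RoS funded by a second round where the same buyers win at (near-)zero price with value about $1-\beta$, giving the ratio $\bigl(r+(r-1)(1-\beta)\bigr)/(r-1)$. The differences are cosmetic. You use zero seller costs and a single zero-value dummy buyer, so the good round's marginal trade has zero gain and you rely on the maximum-cardinality convention for $r$; the paper avoids this with $\varepsilon$-cost sellers and $2\varepsilon$-value buyers. Your $\varepsilon$-perturbed version also covers $\beta=1$, which the paper handles with a separate truthful one-round instance.
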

\begin{proof}
First suppose $\beta<1$, and choose $0<\varepsilon<(1-\beta)/4$.
Consider a two-round instance with $r$ sellers and $2r-1$ buyers in each round. All sellers have cost $\varepsilon$ and bid truthfully.
Round 1 is:
\[
\begin{array}{c|c|cc}
\textbf{Agent ID}&\textbf{Seller Cost}&\textbf{Buyer Value}&\textbf{Buyer Bid}\\\hline
1,\ldots,r&\varepsilon&1&1\\
r+1,\ldots,2r-1&-&\beta+2\varepsilon&1+\varepsilon
\end{array}
\]
Round 2 is:
\[
\begin{array}{c|c|cc}
\textbf{Agent ID}&\textbf{Seller Cost}&\textbf{Buyer Value}&\textbf{Buyer Bid}\\\hline
1,\ldots,r&\varepsilon&2\varepsilon&2\varepsilon\\
r+1,\ldots,2r-1&-&1-\beta&1-\beta
\end{array}
\]
TR trades buyers $r+1,\ldots,2r-1$ with $r-1$ sellers in both rounds, at buyer prices $1$ and $2\varepsilon$, respectively. These buyers each receive total value $1+2\varepsilon$ and pay $1+2\varepsilon$, so RoS binds. Their objective contributions in the two rounds are $2\varepsilon$ and $1-\beta-2\beta\varepsilon$, both nonnegative. They already win both rounds, and dropping either round cannot improve their hybrid objective.

For each remaining buyer, winning round 1 requires payment $1+\varepsilon>1$, and winning round 2 requires payment $1-\beta>2\varepsilon$. Neither round can provide RoS slack, so no nonempty deviation is feasible. Sellers report truthfully. Thus the displayed bids form a pure equilibrium with no underbidding.

Both rounds have optimal trade count $r$. The equilibrium welfare and optimal welfare are, respectively,
\[
W=(r-1)+2r\varepsilon,
\qquad
\mathrm{OPT}=r+(r-1)(1-\beta)+2\varepsilon.
\]
Letting $\varepsilon\to0$ gives $2-\beta+1/(r-1)$.
At $\beta=1$, take a single round with $r$ buyers of value $1$ and $r$ sellers of cost $0$, all reporting truthfully. TR welfare is $r-1$, whereas optimal welfare is $r$, giving $1+1/(r-1)$, as required.
\end{proof}

\begin{lemma}\label{lem:welfare_poa_two_side_lowerbound}
For every common hybrid parameter $\beta\in[0,1]$ and integer $r\ge2$, the welfare PoA of TR with hybrid buyers and value-maximizing sellers is at least $3-\beta-(2-\beta)/(r+1)$.
\end{lemma}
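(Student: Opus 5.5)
We will give an explicit multi-round instance and a pure equilibrium of TR on it, extending the construction in the proof of \cref{lem:welfare_poa_one_side_lowerbound}. The upper-bound proof shows where slack can arise: buyer payments $P$ are charged once through RoS with coefficient $1-\beta+\tfrac1{r-1}$, and the seller RoC/budget-balance step charges one further copy of $W$. The lower-bound instance should make both charges nearly tight at once. The target ratio is
\[
3-\beta-\frac{2-\beta}{r+1}=\frac{(3-\beta)r+1}{r+1}.
\]
So we aim for an instance whose equilibrium welfare tends to about $r+1$ and whose optimum tends to about $(3-\beta)r+1$, after normalization and letting $\varepsilon\to0$.

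\textbf{Buyer gadget.} We reuse the two-round buyer structure of \cref{lem:welfare_poa_one_side_lowerbound}. In round~1 there are $r$ high-value buyers with value $1$ who bid $1$. They are displaced by ``overbidding'' buyers with value $\beta+2\varepsilon$ and bid $1+\varepsilon$. These overbidders recover RoS slack in round~2, where they have value $1-\beta$ and pay only about $2\varepsilon$. This part loses roughly $r$ buyer-value units in round~1 and $(r-1)(1-\beta)$ units in round~2, and the equilibrium keeps little of it. Both facts come from the already verified lemma, so its equilibrium checks carry over.

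\textbf{Seller gadget.} We add a seller-side gadget that exploits the value-maximizing sellers' global RoC constraint. We introduce sellers whose normalized cost is close to the buyer price, about $1$, in round~1. They ask near $\varepsilon$, so they trade at equilibrium even though the optimum keeps them. In doing so they destroy about $1$ unit of retained-cost welfare each. They fund their RoC deficit in another round (round~2 or a third round), where they are paid a TR seller price well above their cost. That price is set by the reduced seller's ask, which must stay below the reduced buyer's bid by ex-post WBB. We choose their number, roughly $r$, and the prices so that their total revenue equals their total traded cost; this makes RoC bind. Thus the retained-cost loss matches the bound $\sum_t c_t(\rso(t)\setminus\rsv(t))\le P$. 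The $+1$ in the numerator and the $r+1$ in the denominator should come from the single reduced trade per round. It forces one more seller or buyer slot than $r$ and leaves one marginal agent whose welfare appears in $W$.

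\textbf{Order of verification.} After fixing the values, costs, bids and asks, we check four things in order. First, we compute the TR outcome in each round: the efficient count, the reduced pair and the two prices. Second, we confirm that the trading buyers and sellers satisfy RoS and RoC with near equality, already win every round they can, and cannot profitably drop a trade. The drop check uses $v_i^t\ge\beta p^t$ for hybrid buyers. Third, we show that non-trading agents have no RoS/RoC-feasible deviation, using \cref{lem:losing-thresholds}: any newly won trade costs a buyer at least her bid, which exceeds her value, and pays a seller at most her ask, which is below her cost. We also check the no-underbidding/no-overasking conventions. Fourth, we compute $W$ and $\mathrm{OPT}$ and let $\varepsilon\to0$. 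The case $\beta=1$ is handled as in the paper, with truthful buyers and only the seller gadget plus the reduction loss.

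\textbf{Main obstacle.} The hardest step will be making the seller subsidy round compatible with the buyer gadget. The high seller price that funds the sellers' RoC is bounded by the buyer price through WBB, and the buyers' RoS is already tight. We would first try a third round dedicated to the sellers, with truthful or value buyers of value slightly above $1$ and a reduced seller ask near $1$. We would then tune the seller counts to get exactly $(3-\beta)r+1$ versus $r+1$.
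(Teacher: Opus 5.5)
\paragraph{There is a genuine gap.} What you have is a plan rather than a proof. No instance is fixed, so none of the TR outcomes, equilibrium checks, or welfare computations is carried out. The step you defer as the ``main obstacle'' is exactly where the construction has to be found.

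More importantly, the seller gadget cannot work where you place it: you have the seller rounds in the wrong roles.
\begin{itemize}
\item \emph{The deficit cannot be in round~1.} Suppose there are $k$ overbidders and $r\ge2$ value-$1$ buyers, all bidding at least $1$. Then at most $k+1$ asks may be at most $1$; otherwise $k^*\ge k+2$ and TR trades a value-$1$ buyer. Value-maximizing sellers do not overask, so every seller of cost below $1$ asks below $1$. Hence round~1 can retain at most one cheap seller, and the optimum gains only $O(1)$ there by keeping expensive sellers, not about $r$.
\item \emph{Round~2 cannot fund the sellers.} Its seller price is at most its buyer price, which is about $2\varepsilon$.
\item \emph{A separate third subsidy round fails, for the reason you note yourself.} RoS, ex-post WBB and RoC give $W\ge P\ge$ total seller revenue $\ge$ total traded seller cost. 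If the round-1 payments that price out the displaced buyers are not passed on to sellers, you need roughly $r$ of fresh buyer payments. Then $W$ grows to about $2r$ while the seller gadget adds only about $r$ to $\mathrm{OPT}$, so the ratio falls well short of $3-\beta$.
\end{itemize}

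\paragraph{The missing idea.} The same buyer payments must do double duty, so the buyer and seller gadgets share both rounds with opposite roles. The paper's construction does this as follows.
\begin{itemize}
\item \emph{Round~1 (seller subsidy).} There are $r$ overbidders with value $\beta+\varepsilon$ and bid $1+\varepsilon$, and $r$ value-$1$ buyers bidding $1$. On the seller side there are $r$ sellers of cost $0$ and one seller of cost $1+\varepsilon$ who asks $1-\varepsilon$. TR reduces this last pair, so buyers pay $1$ and sellers receive $1-\varepsilon$. The reduced seller is retained both at equilibrium and in the optimum; this is the $+1$ on both sides.
\item \emph{Round~2 (seller deficit).} The overbidders bid $M$ and pay $3\varepsilon$ against value $1-\beta+3\varepsilon$. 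The same $r$ sellers now have cost $1$, ask $\varepsilon$, and receive $2\varepsilon$. They displace $r$ truthful sellers of cost $2\varepsilon$.
\item \emph{Accounting.} Each traded seller earns $1+\varepsilon$ against cost $1$. Each traded buyer pays $1+3\varepsilon$ for value $1+4\varepsilon$. The equilibrium welfare is $W=r+1+(6r+1)\varepsilon$, against a feasible welfare of $(3-\beta)r+1+(3r+1)\varepsilon$.
\end{itemize}

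\paragraph{Two smaller points.}
\begin{itemize}
\item The one-sided gadget cannot be reused verbatim. With $r-1$ overbidders, an extra ask near $1$ would let a value-$1$ buyer trade.
\item For truthful untraded sellers, \cref{lem:losing-thresholds} only bounds a winning payment by the cost. That does not rule out a break-even deviation, which a value-maximizer would take. You need the direct computation that such a seller would be paid $\varepsilon<2\varepsilon$.
\end{itemize}
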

\begin{proof}
Consider a two-round instance. Choose $0<\varepsilon<1/6$ and, when $\beta<1$, also $\varepsilon<(1-\beta)/4$.
Round 1 has buyers
\[
\begin{array}{c|cc}
\textbf{Buyer Agent ID}&\textbf{Value}&\textbf{Bid}\\\hline
1,\ldots,r&\beta+\varepsilon&1+\varepsilon\\
r+1,\ldots,2r&1&1
\end{array}
\]
and sellers
\[
\begin{array}{c|cc}
\textbf{Seller Agent ID}&\textbf{Cost}&\textbf{Bid}\\\hline
1,\ldots,r&0&0\\
r+1&1+\varepsilon&1-\varepsilon
\end{array}.
\]
Round 2 has buyers
\[
\begin{array}{c|cc}
\textbf{Buyer Agent ID}&\textbf{Value}&\textbf{Bid}\\\hline
1,\ldots,r&1-\beta+3\varepsilon&M\\
r+1&3\varepsilon&3\varepsilon\\
r+2,\ldots,2r&0&0
\end{array}
\]
where $M$ exceeds the total value in the instance, and sellers
\[
\begin{array}{c|cc}
\textbf{Seller Agent ID}&\textbf{Cost}&\textbf{Bid}\\\hline
1,\ldots,r&1&\varepsilon\\
r+1,\ldots,2r&2\varepsilon&2\varepsilon
\end{array}.
\]
Sellers $r+2,\ldots,2r$ appear only in round 2.

TR trades buyers $1,\ldots,r$ with sellers $1,\ldots,r$ in both rounds. In round 1, buyers pay $1$ and sellers receive $1-\varepsilon$. In round 2, buyers pay $3\varepsilon$ and sellers receive $2\varepsilon$.
Each traded buyer receives total value $1+4\varepsilon$ and pays $1+3\varepsilon$. Her hybrid objective contributions are $\varepsilon$ and $(1-\beta)(1+3\varepsilon)$, both nonnegative. Each traded seller receives revenue $1+\varepsilon$ against total cost $1$. These buyers and sellers already trade in both rounds, so no feasible deviation improves their objectives.

The other buyers cannot win round 1 without paying $1+\varepsilon>1$. In round 2 they would pay $M$, exceeding their total possible value, so they cannot generate offsetting RoS slack. An untraded seller who wins round 1 receives $0$, and one who wins round 2 receives $\varepsilon<2\varepsilon$; no such seller has a RoC-feasible nonempty deviation. Thus the displayed profile is a pure equilibrium under the stated conventions.

The optimal trade count is $r$ in each round, including when $\beta=1$. The equilibrium welfare is
\[
W=r(\beta+\varepsilon)+(1+\varepsilon)
+r(1-\beta+3\varepsilon)+2r\varepsilon
=r+1+(6r+1)\varepsilon.
\]
A feasible allocation trades buyers $r+1,\ldots,2r$ with sellers $1,\ldots,r$ in round 1, and buyers $1,\ldots,r$ with sellers $r+1,\ldots,2r$ in round 2. Its welfare is
\[
r+(1+\varepsilon)+r(1-\beta+3\varepsilon)+r
=(3-\beta)r+1+(3r+1)\varepsilon.
\]
Letting $\varepsilon\to0$ gives
\[
\frac{(3-\beta)r+1}{r+1}
=3-\beta-\frac{2-\beta}{r+1}.
\]
\end{proof}

\begin{corollary}\label{lem:welfare_poa_utility_buyers_value_sellers_lower_asymptotic}
With utility-maximizing buyers and value-maximizing sellers, the welfare ratio can approach $2-1/(r+1)$ for each $r\ge2$, and hence approaches $2$ as $r\to\infty$.
\end{corollary}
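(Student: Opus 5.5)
The corollary is the endpoint $\beta=1$ of \cref{lem:welfare_poa_two_side_lowerbound}. Utility-maximizing buyers are exactly hybrid buyers with $\beta_i=1$. So I would reuse that two-round construction with $\beta=1$ and check two things: that the equilibrium argument still holds when buyers are quasi-linear rather than RoS-constrained, and that the ratio evaluates to the claimed value.

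\textbf{Step 1: the ratio.} Plugging $\beta=1$ into $3-\beta-(2-\beta)/(r+1)$ gives $2-1/(r+1)$. Letting $\varepsilon\to0$ in the ratio $\bigl((3-\beta)r+1+(3r+1)\varepsilon\bigr)/\bigl(r+1+(6r+1)\varepsilon\bigr)$ yields the same value. For the limit in $r$, note $2-1/(r+1)\to2$ as $r\to\infty$, which matches the upper bound $2+1/(r-1)$ from \cref{thm:two_side_value_poa} at $\beta=1$.

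\textbf{Step 2: the buyers' equilibrium conditions.} This is the main obstacle. A utility-maximizing buyer must report truthfully, but with $\beta=1$ the construction's round-1 buyers $1,\dots,r$ have value $1+\varepsilon$ and bid $1+\varepsilon$, so they are already truthful. In round~2 buyers $1,\dots,r$ have value $3\varepsilon$ and bid $M$. I would replace $M$ by the truthful bid $3\varepsilon$ and recheck TR.

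\begin{itemize}
\item In round~2 the buyer bids become $r+1$ copies of $3\varepsilon$ together with zeros.
\item The seller asks are $r$ copies of $\varepsilon$ and $r$ copies of $2\varepsilon$.
\item Sorting, $k^*$ is the largest $k$ with $b_k\ge s_k$, which gives $k^*=r+1$.
\item TR therefore trades $r$ pairs at buyer price $3\varepsilon$ and seller payment $2\varepsilon$. Tie-breaking among the equal bids of $3\varepsilon$ must select buyers $1,\dots,r$, which is fine under fixed tie-breaking; if needed, I would give buyer $r+1$ value $3\varepsilon-\varepsilon^2$ to break the tie.
\end{itemize}

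In round~1 the traded buyers pay $1$ and get value $1+\varepsilon$, so their utility is nonnegative. In round~2 their utility is $0$. Since TR is uIC and buyers bid truthfully, no buyer has a profitable deviation.

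\textbf{Step 3: the sellers and the welfare accounting.} The value-maximizing sellers' RoC checks are unchanged from the lemma, since the seller payments $1-\varepsilon$ and $2\varepsilon$ are the same as before.

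For the accounting, equilibrium welfare has the following components.
\begin{itemize}
\item Traded buyers contribute $r(1+\varepsilon)+3r\varepsilon$.
\item The retained seller $r+1$ contributes $1+\varepsilon$ in round~1.
\item The retained sellers contribute $2r\varepsilon$ in round~2.
\end{itemize}
The total is $r+1+O(\varepsilon)$. The alternative allocation from the lemma has welfare
\[
r+(1+\varepsilon)+3r\varepsilon+r=2r+1+O(\varepsilon).
\]
The ratio therefore tends to $(2r+1)/(r+1)=2-1/(r+1)$, which completes the corollary.
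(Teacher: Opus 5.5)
Your proposal is correct and is essentially the paper's proof. Like the paper, you set $\beta=1$ in the construction of \cref{lem:welfare_poa_two_side_lowerbound}, replace the round-2 bids $M$ of buyers $1,\dots,r$ by their true values $3\varepsilon$ with tie-breaking favoring them, and note that the allocations, prices, seller equilibrium conditions and welfare computation are unchanged, so the ratio tends to $(2r+1)/(r+1)$. The one check you justify only through the equilibrium payments is that untraded sellers' deviation revenues are also unchanged; this holds because the $(r+1)$-th and $(r+2)$-th highest round-2 buyer bids are still $3\varepsilon$ and $0$, so $k^*=r+1$ and a deviating seller still receives only $\varepsilon$.
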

\begin{proof}
Take $\beta=1$ in the construction of \cref{lem:welfare_poa_two_side_lowerbound}. The buyers' selected rounds have nonnegative quasi-linear utility. Replacing their winning bids by truthful values preserves the allocation and prices with the fixed tie-breaking priority favoring buyers $1,\ldots,r$ in round 2. The seller equilibrium and welfare calculations are unchanged.
\end{proof}

\begin{corollary}
\label{thm:welfare_poa_utility_buyers_value_sellers}
Under the conventions above, every pure equilibrium of TR with utility-maximizing buyers and value-maximizing sellers has welfare PoA at most \(2+1/(r-1)\), where \(r=\min_t|\tbo(t)|\).
\end{corollary}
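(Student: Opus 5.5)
This is the $\beta_i=1$ specialization of the upper bound in \cref{thm:two_side_value_poa}. The plan is to check that the hypotheses used in that proof hold for utility-maximizing buyers, and then reread the final inequality.

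\textbf{Step 1: buyers fit the hybrid framework with $\beta_i=1$.} A utility-maximizing buyer is the hybrid objective \eqref{eq:hybrid_objective} with $\lambda_i=\tau_i=1$.
\begin{itemize}
\item Under the conventions of this section she reports truthfully.
\item By ex-post IR of TR she pays at most her value in every round she wins. Hence the aggregate RoS constraint $V_i\ge P_i$ holds automatically, and \eqref{eq:hybrid-payment-value}, namely $P\le W$, follows by summing over buyers and rounds.
\item The local inequality \eqref{eq:hybrid-local} with $\beta_i=1$ reads $v_i^t\ge p^t$. For a truthful buyer this is immediate: she wins only if her bid, equal to her value, is at least the TR buyer price.
\item Truthful bids trivially satisfy the no-underbidding convention.
\end{itemize}
So every ingredient of the buyer-side charging argument is available with $\underline\beta=1$.

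\textbf{Step 2: the seller side is unchanged.} Sellers are value-maximizers. Their analysis in the proof of \cref{thm:two_side_value_poa} uses only three facts:
\begin{itemize}
\item sellers do not overask;
\item the aggregate RoC constraint gives $\sum_t c_t(S_t\setminus\rsv(t))\le\sum_t\SellerRevenue(t)$;
\item ex-post weak budget balance of TR gives $\SellerRevenue(t)\le\BuyerRevenue(t)$.
\end{itemize}
None of these depends on the buyers' objective, so the bound
\[
\sum_t c_t(\rso(t)\setminus\rsv(t))\le P\le W
\]
carries over verbatim.

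\textbf{Step 3: combine.} The per-round buyer charging bound is $v_t(\tbo(t))+c_t(\rsv(t))\le W_t+(1-\underline\beta+\tfrac1{r-1})P_t$. It relies on $q_t\ge r_t-1$, which holds because truthful bids and asks at or below cost can only enlarge the reported efficient trade count. With $\underline\beta=1$ it becomes $W_t+\tfrac1{r-1}P_t$. Summing over rounds and adding the seller bound gives
\[
\mathrm{OPT}\le 2W+\tfrac{1}{r-1}P\le\left(2+\tfrac1{r-1}\right)W,
\]
which is the claimed bound.

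\textbf{Main obstacle.} The one delicate point is Step 1's treatment of equilibrium for utility maximizers. One must confirm that the convention ``utility-maximizing agents report truthfully'' is consistent with the equilibrium notion, since TR is uIC and a truthful report is a best response. One must also check that nothing in the hybrid proof used RoS tightness or slack beyond $P\le W$. On inspection it did not.
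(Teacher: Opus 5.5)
Your argument is correct. It is precisely the specialization the paper points to at the end of the upper-bound proof of \cref{thm:two_side_value_poa}. The checks you run are the right ones: truthful bids give \eqref{eq:hybrid-local} with $\underline\beta=1$, and ex-post IR gives $P\le W$. The seller charge uses only no-overasking, RoC, and ex-post WBB.

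The paper's own proof of this corollary, however, is a separate, direct argument that never uses prices on the buyer side. Because buyers are truthful and sellers do not overask, the buyer order coincides with that under truthful TR and the trade count can only grow, so $\tbu(t)\subseteq\tbv(t)$. Since $\tbu(t)$ consists of the top $r_t-1$ of the $r_t$ buyers in $\tbo(t)$, averaging gives $v_t(\tbu(t))\ge\frac{r_t-1}{r_t}v_t(\tbo(t))$. Hence $\bigl(1+\frac1{r-1}\bigr)\TwoSidedWelfare(t)\ge v_t(\tbo(t))+c_t(\rsv(t))$. The missing seller costs are then covered by $\sum_t\BuyerWelfare(t)\ge\sum_t c_t(\rsu(t)\setminus\rsv(t))$ together with $\rso(t)\subseteq\rsu(t)$.

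So the paper charges the reduced marginal trade to welfare, losing the factor $\frac{r}{r-1}$ on $W_t$ directly. You instead charge it to payments: one price $p^t\le P_t/(r-1)$ per round, via $d_t-e_t\le1$, $v_i^t\ge p^t$ for winners, and value at most $p^t$ for losers. You then invoke $P\le W$.

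What each route buys:
\begin{itemize}
\item The paper's route is self-contained and elementary, but it relies on buyer truthfulness to obtain the containment $\tbu(t)\subseteq\tbv(t)$.
\item Yours needs no new argument and shows the corollary is literally the $\beta=1$ case of the hybrid bound. That bound also covers heterogeneous $\beta_i$, where the containment can fail, as in the lower-bound instance of \cref{lem:welfare_poa_one_side_lowerbound}.
\end{itemize}
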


\begin{proof}
Fix a pure equilibrium.
Utility-maximizing buyers bid truthfully. Since value-maximizing sellers do not overask under the no-dominated-bids convention, the number of TR trades is at least the number of truthful-TR trades. Therefore the buyers served by truthful TR, \(\tbu(t)\), are also served at equilibrium. Hence, for every round \(t\),
\begin{equation}
\label{eq:uv_tsw_basic}
\TwoSidedWelfare(t)
\ge
v_t(\tbu(t))+c_t(\rsv(t)).
\end{equation}
Since \(\tbu(t)\) consists of the top \(r_t-1\) optimal buyers and \(\tbo(t)\) consists of the top \(r_t\) optimal buyers,
\[
    \frac{r_t}{r_t-1}v_t(\tbu(t))\ge v_t(\tbo(t)).
\]
Because \(r_t\ge r\), we also have \(r/(r-1)\ge r_t/(r_t-1)\). Multiplying \cref{eq:uv_tsw_basic} by \(r/(r-1)\) gives
\begin{equation}
\label{eq:uv_tsw_scaled}
\left(1+\frac{1}{r-1}\right)\TwoSidedWelfare(t)
\ge
v_t(\tbo(t))+c_t(\rsv(t)).
\end{equation}

Next, summing across rounds, buyer IR and ex-post budget balance give
\[
    \sum_t\BuyerWelfare(t)
    \ge
    \sum_t\BuyerRevenue(t)
    \ge
    \sum_t\SellerRevenue(t).
\]
The value-maximizing sellers' global RoC constraints imply
\[
    \sum_t\SellerRevenue(t)
    \ge
    \sum_t c_t({S_t}\setminus\rsv(t))
    \ge
    \sum_t c_t(\rsu(t)\setminus\rsv(t)).
\]
Therefore,
\begin{equation}
\label{eq:uv_seller_charge}
    \sum_t\TwoSidedWelfare(t)
    \ge
    \sum_t\BuyerWelfare(t)
    \ge
    \sum_t c_t(\rsu(t)\setminus\rsv(t)).
\end{equation}
Combining \cref{eq:uv_tsw_scaled,eq:uv_seller_charge}, and using \(\rso(t)\subseteq\rsu(t)\), yields
\begin{align*}
\left(2+\frac{1}{r-1}\right)\sum_t\TwoSidedWelfare(t)
&\ge
\sum_t \Bigl(v_t(\tbo(t))+c_t(\rsv(t))+c_t(\rsu(t)\setminus\rsv(t))\Bigr)\\
&\ge
\sum_t \Bigl(v_t(\tbo(t))+c_t(\rso(t))\Bigr).
\end{align*}
The final expression is the optimal welfare.
\end{proof}
\subsection{Price of Anarchy with Respect to Truthful TR Welfare}
\label{app:poa_tr}

In this subsection, the benchmark is the welfare obtained by TR when all agents report their true target-normalized values and costs.
Equivalently, the numerator is
\[
    \sum_t \bigl(v_t(\tbu(t))+c_t(\rsu(t))\bigr)
\]
instead of first-best welfare.

\paragraph{Value-maximizing buyers and utility-maximizing sellers.}

\begin{lemma}
\label{lem:welfare_poa_one_side_lowerbound_tr}
The PoA of TR with respect to truthful-TR welfare is at least \(2\) with value-maximizing buyers and utility-maximizing sellers.
\end{lemma}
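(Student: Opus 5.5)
We reuse the two-round instance from the proof of \cref{lem:welfare_poa_one_side_lowerbound} with $\beta=0$, that is, with value-maximizing buyers, and only change the numerator. The equilibrium part carries over verbatim. The work is to recompute the welfare of TR under truthful target-normalized reports and check that it is roughly twice the equilibrium welfare.

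\textbf{The instance and its equilibrium.} Fix $r\ge2$ and $0<\varepsilon<1/4$. Each of the two rounds has $r$ sellers of cost $\varepsilon$ and $2r-1$ buyers.
\begin{itemize}[itemsep=0pt,topsep=0pt]
\item In round~1, buyers $1,\ldots,r$ have value $1$ and bid $1$, while buyers $r+1,\ldots,2r-1$ have value $2\varepsilon$ and bid $1+\varepsilon$.
\item In round~2, buyers $1,\ldots,r$ have value and bid $2\varepsilon$, while buyers $r+1,\ldots,2r-1$ have value and bid $1$.
\end{itemize}
The proof of \cref{lem:welfare_poa_one_side_lowerbound} already shows that this profile is a pure equilibrium satisfying the no-underbidding convention. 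At this equilibrium, buyers $r+1,\ldots,2r-1$ trade in both rounds. The equilibrium welfare is therefore $W=(r-1)+2r\varepsilon$: the value-$1$ trades of round~2, the $2\varepsilon$-value trades of round~1, and one retained seller cost $\varepsilon$ in each round.

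\textbf{Truthful-TR welfare.} We apply TR round by round to the true normalized values.
\begin{itemize}[itemsep=0pt,topsep=0pt]
\item In round~1 the sorted buyer values are $r$ copies of $1$ followed by copies of $2\varepsilon$, and all $r$ asks equal $\varepsilon$. Hence $k^*=r$, and TR trades $r-1$ of the value-$1$ buyers. Which ones are chosen under ties does not matter, since they have equal value. The round welfare is $(r-1)+\varepsilon$, counting the retained seller.
\item In round~2 the sorted values are $r-1$ copies of $1$ followed by copies of $2\varepsilon\ge\varepsilon$. Again $k^*=r$, so TR trades exactly the $r-1$ value-$1$ buyers and the round welfare is $(r-1)+\varepsilon$.
\end{itemize}
The truthful-TR benchmark is thus $2(r-1)+2\varepsilon$. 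The ratio to $W$ is
\[
\frac{2(r-1)+2\varepsilon}{(r-1)+2r\varepsilon},
\]
which tends to $2$ as $\varepsilon\to0$. This holds for every $r$, giving the claimed lower bound.

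\textbf{Expected obstacle.} The only delicate point is confirming that the truthful reduction in each round removes a single marginal trade and does not remove a high-value one. In round~2 this requires that the $r$-th highest value, $2\varepsilon$, still clears the ask $\varepsilon$. That is why the low values are set strictly above the seller cost. If a cleaner statement is desired, one could also note that taking $r$ large is unnecessary, since the ratio is independent of $r$.
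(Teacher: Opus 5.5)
Your proof is correct, but it takes a different route from the paper's proof of this lemma.

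The paper's proof puts two zero-cost sellers in every round. TR then always trades exactly one buyer at the second-highest bid, so it collapses to a single-item second-price auction whose truthful outcome is efficient. The paper then imports the known factor-$2$ lower bound for second-price auctions with value-maximizing bidders from \cite{aggarwal2019autobidding}.

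You instead stay inside the paper. You reuse the $\beta=0$ instance of \cref{lem:welfare_poa_one_side_lowerbound}, whose equilibrium (including the no-underbidding convention) is already verified there, and recompute only the numerator: $2(r-1)+2\varepsilon$ against $(r-1)+2r\varepsilon$. This is the same computation the paper itself performs later for tightness in \cref{cor:hybrid-tr-benchmark}, where truthful-TR welfare $(r-1)(2-\beta)$ is compared with $r-1$.

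What each route buys:
\begin{itemize}
\item The paper's reduction is a one-liner. It makes clear that the factor $2$ is inherited directly from one-sided autobidding. However, it rests on an external instance, one must check that instance against this section's no-underbidding convention, and it only yields markets with at most two trades per round.
\item Your argument is self-contained and gives a ratio tending to $2$ for every fixed $r\ge 2$. This shows that the factor $2$ against the truthful-TR benchmark does not disappear in thick markets, consistent with the paper's claim that these ratios are independent of $r$.
\end{itemize}
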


\begin{proof}
Take two sellers with cost zero in every round.
TR then reduces exactly one trade and is equivalent to a single-item second-price auction among the buyers.
The lower bound of~\cite{aggarwal2019autobidding} for second-price auctions with value-maximizing bidders therefore gives instances whose ratio approaches \(2\).
\end{proof}

\begin{theorem}
\label{thm:welfare_poa_one_side_tr}
The PoA of TR with respect to truthful-TR welfare is at most \(2\) with value-maximizing buyers and utility-maximizing sellers.
\end{theorem}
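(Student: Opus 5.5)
We compare equilibrium welfare against truthful-TR welfare round by round, reusing the charging idea from the proof of the upper bounds in \cref{thm:one_side_value_poa}, but with $\tbu(t)$ in place of $\tbo(t)$. The $1/(r-1)$ slack should then disappear. Fix a pure equilibrium satisfying the conventions: buyers do not underbid and sellers report truthfully. Let $W_t=\TwoSidedWelfare(t)$, $p^t$ the equilibrium buyer price, $q_t=|\tbv(t)|$, and $P_t=q_tp^t$.

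\textbf{Step 1: trade counts.} Truthful TR trades $r_t-1$ pairs, the top optimal buyers with the cheapest sellers. Buyers only raise bids and sellers are truthful, so the reported crossing index $k^*$ weakly increases, giving $q_t\ge r_t-1=|\tbu(t)|$. Since sellers are truthful, the equilibrium traded sellers are the $q_t$ cheapest. Hence $\rsv(t)\subseteq\rsu(t)$, and
\[
\rsu(t)\setminus\rsv(t)
\]
consists of $\ell_t=q_t-(r_t-1)$ sellers, each with cost at most the equilibrium seller price $p_s^t\le p^t$. The seller cost is bounded because these sellers trade at equilibrium and TR pays at least their ask.

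\textbf{Step 2: buyer charging.} Let $a_t=|\tbu(t)\cap\tbv(t)|$, $d_t=|\tbu(t)\setminus\tbv(t)|$, and $e_t=|\tbv(t)\setminus\tbu(t)|$. Then
\[
d_t=(r_t-1)-a_t=e_t-\ell_t .
\]
Each missing truthful-TR buyer is untraded at equilibrium, so her bid, hence her value, is at most $p^t$. Therefore
\[
v_t(\tbu(t))+c_t(\rsu(t))\le W_t+(d_t+\ell_t)p^t=W_t+e_tp^t\le W_t+P_t .
\]
Here $v_t(\tbu(t)\cap\tbv(t))$ and $c_t(\rsv(t))$ are both contained in $W_t$. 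The key point is that, unlike the first-best benchmark, there is no extra $+1$ from the reduced trade: truthful TR itself drops the marginal pair, so $d_t+\ell_t$ equals $e_t$ exactly rather than $e_t+1$.

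\textbf{Step 3: summing.} Summing over rounds and using buyer RoS, $P\le\sum_t\BuyerWelfare(t)\le W$ as in \eqref{eq:hybrid-payment-value}, gives a benchmark of at most $W+P\le 2W$.

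\textbf{Main obstacle.} The delicate step is Step 1, namely verifying $\rsv(t)\subseteq\rsu(t)$ and the exact identity $d_t+\ell_t=e_t$ under ties and tie-breaking. It requires $q_t\ge r_t-1$ to hold with the fixed tie-breaking rule, and the seller prices must satisfy $p_s^t\le p^t$, which follows from $s_{k^*}\le b_{k^*}$. If $q_t<r_t-1$ could occur, I would fall back on a monotonicity argument for $k^*$: raising buyer bids keeps $b_k\ge s_k$ for every index that previously satisfied it.
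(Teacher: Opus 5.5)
Your proposal is correct and is essentially the paper's argument. The paper also bounds $\TwoSidedWelfare(t)\ge v_t(\tbu(t)\cap\tbv(t))+c_t(\rsv(t))$, charges $v_t(\tbu(t)\setminus\tbv(t))+c_t(\rsu(t)\setminus\rsv(t))$ to $\BuyerRevenue(t)$ via the counting identity $d_t+\ell_t=e_t$ (made explicit in \cref{cor:hybrid-tr-benchmark}), and closes with buyer RoS, while your checks of $q_t\ge r_t-1$, $\rsv(t)\subseteq\rsu(t)$, and $p_s^t\le p^t$ under fixed tie-breaking are exactly what that charging step needs.
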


\begin{proof}
For every round \(t\),
\[
\TwoSidedWelfare(t)
\ge
v_t(\tbu(t)\cap\tbv(t))+c_t(\rsv(t)).
\]
Moreover, by the same price and counting argument as in \cref{thm:welfare_poa_value_buyers_utility_sellers}, but with \(\tbu,\rsu\) in place of \(\tbo,\rso\), we have
\[
\BuyerRevenue(t)
\ge
v_t(\tbu(t)\setminus\tbv(t))+c_t(\rsu(t)\setminus\rsv(t)).
\]
Summing over rounds and using buyer RoS,
\[
    \sum_t\BuyerRevenue(t)\le\sum_t\BuyerWelfare(t)\le\sum_t\TwoSidedWelfare(t),
\]
gives
$
    2\sum_t\TwoSidedWelfare(t)
    \ge
    \sum_t \bigl(v_t(\tbu(t))+c_t(\rsu(t))\bigr).$
\end{proof}

\paragraph{Utility-maximizing buyers and value-maximizing sellers.}

\begin{lemma}
\label{lem:welfare_tr_poa_lb_utility_buyers_value_sellers}
The PoA of TR with respect to truthful-TR welfare is at least \(2\) with utility-maximizing buyers and value-maximizing sellers.
\end{lemma}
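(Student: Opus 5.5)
We mirror \cref{lem:welfare_poa_one_side_lowerbound_tr} with the roles of the two sides exchanged. Utility-maximizing buyers bid truthfully. Value-maximizing sellers are allowed to underask, which the no-overasking convention permits. The goal is a family of instances whose truthful-TR welfare is close to twice the equilibrium welfare.

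The intended mechanism is the analogue of the Aggarwal et al.\ second-price instance. A seller with high cost underasks and gets traded in a round where her revenue is tiny, and the welfare lost is her retained cost. She repays the resulting RoC deficit in a second round where she has cost $0$ and is paid a high price. We amortize with $k$ such sellers, so that the second round's welfare, which appears in both benchmarks, is only about $k+1$ while the loss in the first round is about $k$.

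\emph{Construction.} Fix $k\ge1$ and small $0<\delta,\varepsilon$ with $\varepsilon<\delta$.

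Round~1:
\begin{itemize}[itemsep=0pt,topsep=0pt]
\item $k+1$ buyers, each with value $\delta$.
\item Sellers $A_1,\dots,A_k$, each with cost $1$ and ask $0$.
\item $k+1$ further sellers, each with cost $2\varepsilon$ and ask $\varepsilon$.
\end{itemize}
With the asks sorted, $k^*=k+1$: the $(k+1)$-st buyer bid is $\delta\ge\varepsilon$, and the $(k+2)$-nd ask exceeds the next buyer bid since there are only $k+1$ buyers. So TR trades $A_1,\dots,A_k$ at seller price $\varepsilon$, and the cheap sellers are all retained.

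Round~2:
\begin{itemize}[itemsep=0pt,topsep=0pt]
\item $k+1$ buyers, each with value $1$.
\item $A_1,\dots,A_k$, now with cost $0$ and ask $0$.
\item One seller $C$ with cost $1$ and ask $1-\varepsilon$.
\end{itemize}
TR reduces the trade involving $C$. It trades the $A_i$ at seller price $1-\varepsilon$ and buyer price $1$.

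\emph{Welfare comparison.} The equilibrium welfare is as follows.
\begin{itemize}[itemsep=0pt,topsep=0pt]
\item Round~1 contributes $k\delta+2(k+1)\varepsilon$.
\item Round~2 contributes $k+1$, namely $k$ traded buyers plus $C$ retained.
\end{itemize}
Truthful TR behaves differently only in round~1. There it trades $k$ of the cheap sellers, the $(k+1)$-st cheap seller is the reduced one, and all the $A_i$ are retained. This gives round-1 welfare at least $k\delta+k$. Round~2 is unchanged up to $O(\varepsilon)$ in prices, and its welfare is still $k+1$. The ratio is therefore at least
\[
\frac{2k+1+k\delta}{k+1+k\delta+O(\varepsilon)}.
\]
This tends to $(2k+1)/(k+1)$ as $\delta,\varepsilon\to0$, and to $2$ as $k\to\infty$.

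\emph{Equilibrium verification.}
\begin{itemize}[itemsep=0pt,topsep=0pt]
\item \emph{Buyers.} They are truthful utility-maximizers, so by uIC they have no profitable deviation.
\item \emph{The sellers $A_i$.} Each has revenue $\varepsilon+1-\varepsilon=1\ge 1$, which equals her total cost, so RoC holds. She already trades in both rounds, so her objective cannot increase.
\item \emph{Untraded cheap sellers in round~1.} By \cref{lem:losing-thresholds}, any deviation that makes such a seller trade pays her at most her ask $\varepsilon$. This is below her cost $2\varepsilon$, and she has no other round in which to build RoC slack, so no nonempty deviation is RoC-feasible.
\item \emph{Seller $C$.} The same argument applies: any trade pays her at most $1-\varepsilon<1$, her cost.
\end{itemize}
All asks are at most the corresponding costs, so the no-overasking convention holds.

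\emph{Main obstacle.} We expect the delicate step to be confirming the round-1 TR outcome under fixed tie-breaking. In particular, $k^*$ must equal $k+1$ so that only the $A_i$ trade, and the truthful run must reduce a cheap seller rather than trading an $A_i$. If ties cause trouble, we will perturb the cheap sellers' asks and costs slightly. The other point to check is that $C$'s ask $1-\varepsilon$ does not push the round-2 buyer price above $1$. The buyer price is the reduced buyer's bid $1$, so the traded buyers are unaffected.
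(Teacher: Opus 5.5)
Your construction is correct and uses the same idea as the paper. The paper's proof just specializes the $\beta=1$ instance of \cref{lem:welfare_poa_two_side_lowerbound}, where value-maximizing sellers pool RoC slack between a zero-cost round paid about $1$ and a high-cost round in which they underask and trade at negligible revenue; that gives $2r/(r+1)$ there and $(2k+1)/(k+1)$ in your leaner, round-swapped instance. One harmless slip: truthful TR in your round~1 trades the cheap sellers only if $\delta\ge 2\varepsilon$, but otherwise nothing trades and round-1 welfare is $k+2(k+1)\varepsilon\ge k+k\delta$, so your bound is unaffected.
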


\begin{proof}
The $\beta=1$ case in \cref{lem:welfare_poa_two_side_lowerbound} has equilibrium welfare \((r+1)+O(r\varepsilon)\), while truthful TR obtains welfare at least \(2r-O(r\varepsilon)\).
Letting \(\varepsilon\to0\) and then \(r\to\infty\) gives a ratio approaching \(2\).
\end{proof}

\begin{theorem}
\label{thm:welfare_tr_poa_utility_buyers_value_sellers}
The PoA of TR with respect to truthful-TR welfare is at most \(2\) with utility-maximizing buyers and value-maximizing sellers.
\end{theorem}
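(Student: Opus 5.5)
We mirror the proof of \cref{thm:welfare_poa_utility_buyers_value_sellers}, with the truthful-TR welfare as benchmark. This removes the need for the $r/(r-1)$ rescaling, so the bound becomes exactly $2$.

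Fix a pure equilibrium satisfying the no-dominated-bids convention. Utility-maximizing buyers bid truthfully, and value-maximizing sellers do not ask above their target-normalized costs. Lowering asks relative to truthful reports can only increase the reported efficient trade count and lower the seller side of the crossing. Hence the TR trade count at equilibrium is at least the truthful one. Because buyer bids are unchanged and TR serves the top-ranked buyers, every buyer in $\tbu(t)$ is also served at equilibrium, so $\tbu(t)\subseteq\tbv(t)$. This gives the per-round bound
\[
\TwoSidedWelfare(t)\ge \BuyerWelfare(t)+c_t(\rsv(t))\ge v_t(\tbu(t))+c_t(\rsv(t)).
\]
This monotonicity step is the main point to verify carefully. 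The value-seller asks can also change which buyer is the reduced one, so I would argue it directly. With buyer bids fixed and asks weakly lower, $k^*$ computed on the equilibrium asks is at least $k^*$ computed on the true costs. The served buyers are the top $k^*-1$ bidders, so they include the top $r_t-1$ truthful ones. Ties are broken by the fixed priority.

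Second, the missing seller costs are charged to welfare, using buyer IR, ex-post WBB, and seller RoC. Summing over rounds,
\[
\sum_t \TwoSidedWelfare(t)\ge\sum_t\BuyerWelfare(t)\ge\sum_t\BuyerRevenue(t)\ge\sum_t\SellerRevenue(t)\ge\sum_t c_t(S_t\setminus\rsv(t))\ge \sum_t c_t(\rsu(t)\setminus\rsv(t)).
\]
The first inequality holds because seller retained costs are nonnegative. The second is buyer IR under truthful bidding with threshold prices. The third is budget balance. The fourth is the aggregate RoC constraint in normalized units.

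Finally, add the two bounds. The per-round bound, summed over rounds, gives $\sum_t\TwoSidedWelfare(t)\ge\sum_t\bigl(v_t(\tbu(t))+c_t(\rsv(t))\bigr)$. Adding the seller-charging chain yields
\[
2\sum_t\TwoSidedWelfare(t)\ge\sum_t\bigl(v_t(\tbu(t))+c_t(\rsv(t))+c_t(\rsu(t)\setminus\rsv(t))\bigr)\ge\sum_t\bigl(v_t(\tbu(t))+c_t(\rsu(t))\bigr).
\]
The last expression is exactly the truthful-TR benchmark, which proves the ratio is at most $2$. This matches \cref{lem:welfare_tr_poa_lb_utility_buyers_value_sellers}.
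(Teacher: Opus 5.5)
Your proof is correct and follows the paper's argument essentially step for step. You use the per-round bound $\TwoSidedWelfare(t)\ge v_t(\tbu(t))+c_t(\rsv(t))$ from truthful buyers and no-overasking sellers. You then use the buyer-IR, budget-balance and aggregate-RoC chain to charge $\sum_t c_t(\rsu(t)\setminus\rsv(t))$ to one more copy of equilibrium welfare, exactly as the paper does. Your explicit check that weakly lower asks give a weakly larger $k^*$, so that $\tbu(t)\subseteq\tbv(t)$, is a useful detail the paper only asserts.
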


\begin{proof}
For every round \(t\),
\[
\TwoSidedWelfare(t)
\ge
v_t(\tbu(t))+c_t(\rsv(t)),
\]
because buyers bid truthfully and value-maximizing sellers do not overask under the no-dominated-bids convention.
As in \cref{thm:welfare_poa_utility_buyers_value_sellers}, buyer IR, budget balance, and seller RoC imply
\[
    \sum_t\TwoSidedWelfare(t)
    \ge
    \sum_t c_t(\rsu(t)\setminus\rsv(t)).
\]
Therefore,
\begin{align*}
2\sum_t\TwoSidedWelfare(t)
&\ge
\sum_t\bigl(v_t(\tbu(t))+c_t(\rsv(t))+c_t(\rsu(t)\setminus\rsv(t))\bigr)\\
&\ge
\sum_t\bigl(v_t(\tbu(t))+c_t(\rsu(t))\bigr). \qedhere
\end{align*}
\end{proof}

\paragraph{Value-maximizing buyers and value-maximizing sellers.}

\begin{lemma}
\label{lem:welfare_poa_two_side_lowerbound_tr}
The PoA of TR with respect to truthful-TR welfare is at least \(3\) with value-maximizing buyers and value-maximizing sellers.
\end{lemma}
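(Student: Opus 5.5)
The plan is to reuse the two-round construction from \cref{lem:welfare_poa_two_side_lowerbound} at the value-maximizing endpoint $\beta=0$. Its equilibrium is already verified there, and the equilibrium welfare is $W=r+1+(6r+1)\varepsilon$. The only new work is to compute the welfare of TR under \emph{truthful} target-normalized reports on the same instance. I then show that this welfare is about $3r$, so the ratio tends to $3$ as $\varepsilon\to0$ and $r\to\infty$.

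Round~1 with truthful reports is as follows. Buyers $1,\ldots,r$ have value $\varepsilon$ and buyers $r+1,\ldots,2r$ have value $1$. Sellers $1,\ldots,r$ have cost $0$ and seller $r+1$ has cost $1+\varepsilon$. Sorting bids and asks, the efficient crossing pairs the $r$ value-$1$ buyers with the $r$ zero-cost sellers. The $(r+1)$-st pair (value $\varepsilon$ against cost $1+\varepsilon$) does not cross, so $k^*=r$ and TR executes $r-1$ of these trades. The truthful-TR welfare in round~1 is therefore at least $(r-1)+(1+\varepsilon)\ge r$. Here $1+\varepsilon$ is the retained cost of seller $r+1$, and the reduced seller retains cost $0$.

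Round~2 with truthful reports is as follows. Buyers $1,\ldots,r$ have value $1+3\varepsilon$, buyer $r+1$ has value $3\varepsilon$, and the rest have value $0$. Sellers $r+1,\ldots,2r$ have cost $2\varepsilon$ and sellers $1,\ldots,r$ have cost $1$. The top $r$ buyers cross the $r$ cheap sellers. The next pair is buyer value $3\varepsilon$ against seller cost $1$, which does not cross, so again $k^*=r$ and TR executes $r-1$ trades. The welfare is at least $(r-1)(1+3\varepsilon)+r\cdot 1$, since sellers $1,\ldots,r$ all keep their unit cost. This is about $2r-1$.

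Summing the two rounds, truthful TR obtains welfare at least $3r-1$. The ratio to the equilibrium welfare is therefore at least
\[
\frac{3r-1}{r+1+(6r+1)\varepsilon}.
\]
Letting $\varepsilon\to0$ gives $(3r-1)/(r+1)=3-4/(r+1)$, and letting $r\to\infty$ gives $3$.

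The main point needing care is the tie-breaking and sorting in the truthful TR computation, in particular confirming that the reduced trade in each round is a real crossing pair and that $k^*=r$ rather than $r+1$. I would check the $(r+1)$-st pair explicitly in each round, as above. I would also confirm that the equilibrium verification from \cref{lem:welfare_poa_two_side_lowerbound} at $\beta=0$ is compatible with the no-dominated-bids convention. Buyers $1,\ldots,r$ overbid, which is allowed. Sellers $1,\ldots,r$ in round~2 underask, which is also allowed. So that lemma's argument carries over unchanged.
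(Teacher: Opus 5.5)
Your proposal is correct and matches the paper's proof. The paper likewise reuses the two-round construction of \cref{lem:welfare_poa_two_side_lowerbound} at the value-maximizing endpoint, compares equilibrium welfare $r+1+O(r\varepsilon)$ against truthful-TR welfare at least $3r-1-O(r\varepsilon)$, and lets $\varepsilon\to0$ and $r\to\infty$; your explicit check that $k^*=r$ in both truthful rounds just fills in the computation the paper states without detail.
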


\begin{proof}
A variant of the lower-bound construction in \cref{lem:welfare_poa_two_side_lowerbound} gives ratio approaching \(3\) against truthful-TR welfare.
Use the same two rounds, but compare the equilibrium welfare to the TR welfare under truthful target-normalized reports.
The equilibrium welfare is \(r+1+O(r\varepsilon)\), while truthful TR obtains welfare at least \(3r-1-O(r\varepsilon)\).
Letting \(\varepsilon\to0\) and \(r\to\infty\) gives the claim.
\end{proof}

\begin{theorem}
\label{thm:welfare_poa_two_side_tr}
The PoA of TR with respect to truthful-TR welfare is at most \(3\) with value-maximizing buyers and value-maximizing sellers.
\end{theorem}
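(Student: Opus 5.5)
We will combine the per-round charging argument of \cref{thm:welfare_poa_one_side_tr} (buyer side) with the seller-side covering argument of \cref{thm:welfare_tr_poa_utility_buyers_value_sellers}. The goal is to show
\[
3\sum_t\TwoSidedWelfare(t)\ \ge\ \sum_t\bigl(v_t(\tbu(t))+c_t(\rsu(t))\bigr).
\]
The target sum is split into three pieces, each covered by one copy of equilibrium welfare:
\[
v_t(\tbu(t))+c_t(\rsu(t))\le \bigl[v_t(\tbu(t)\cap\tbv(t))+c_t(\rsv(t))\bigr]+v_t(\tbu(t)\setminus\tbv(t))+c_t(\rsu(t)\setminus\rsv(t)).
\]
The first bracket is at most $\TwoSidedWelfare(t)$, since it is a subset of the equilibrium buyer welfare plus the equilibrium retained-seller costs.

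For the second piece, the missing truthful-TR buyers, we charge against buyer payments. Value-maximizing buyers do not underbid, and value-maximizing sellers do not overask. Hence the reported efficient trade count at equilibrium is at least the truthful one, and $q_t=|\tbv(t)|\ge|\tbu(t)|$. Every buyer in $\tbu(t)\setminus\tbv(t)$ is untraded at equilibrium. Her bid is therefore at most the TR buyer price $p^t$, and her value is at most her bid, so her value is at most $p^t$.

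The equilibrium traded set has at least $|\tbu(t)|$ members. It therefore contains at least $|\tbu(t)\setminus\tbv(t)|$ buyers outside $\tbu(t)$, each paying $p^t$. This gives $v_t(\tbu(t)\setminus\tbv(t))\le\BuyerRevenue(t)$. Summing over rounds and using buyer RoS, $\sum_t\BuyerRevenue(t)\le\sum_t\BuyerWelfare(t)\le\sum_t\TwoSidedWelfare(t)$, which is the second copy. We do not need the additive $1/(r-1)$ slack here, because the benchmark is truthful TR, which already drops the marginal trade. This is exactly why the counting inequality holds without the extra term.

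For the third piece, the truthful-TR retained sellers who trade at equilibrium, we use seller RoC summed over each seller's rounds, then ex-post weak budget balance, then buyer RoS:
\[
\sum_t c_t(\rsu(t)\setminus\rsv(t))\le\sum_t c_t(S_t\setminus\rsv(t))\le\sum_t\SellerRevenue(t)\le\sum_t\BuyerRevenue(t)\le\sum_t\TwoSidedWelfare(t).
\]
This is the third copy. Adding the three bounds gives the factor $3$.

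The main obstacle is the counting step in the second piece. It requires checking that monotone deviations, meaning buyers bidding up and sellers asking down, cannot reduce the TR trade count below the truthful count. It also requires checking that untraded buyers' bids are bounded by the TR price $p^t$. Both follow from the structure of $k^*$ in TR: raising bids or lowering asks weakly raises every $b_k-s_k$. Ties need the fixed tie-breaking convention. This is the same fact already used in the upper-bound proof of \cref{thm:two_side_value_poa}, so we will invoke it there.
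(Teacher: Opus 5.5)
Your proposal is correct and is essentially the paper's argument: the same three-way split into $v_t(\tbu(t)\cap\tbv(t))+c_t(\rsv(t))$, the missing truthful-TR buyers charged to $\BuyerRevenue(t)$ via no-underbidding and trade-count monotonicity, and the missing retained sellers charged via seller RoC, ex-post weak budget balance, and buyer RoS. Your counting step (equilibrium-only buyers outside $\tbu(t)$) is slightly sharper than the paper's, which uses only $|\tbu(t)\setminus\tbv(t)|\le|\tbv(t)|$, but this does not change the bound.
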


\begin{proof}
For every round \(t\), the equilibrium welfare covers
\[
    v_t(\tbu(t)\cap\tbv(t))+c_t(\rsv(t)).
\]
Because value-maximizing buyers do not underbid under the no-dominated-bids convention, every buyer in \(\tbu(t)\setminus\tbv(t)\) has value at most the equilibrium buyer price. Since \(|\tbu(t)\setminus\tbv(t)|\le|\tbv(t)|\), we have
\[
    \BuyerRevenue(t)
    \ge
    v_t(\tbu(t)\setminus\tbv(t)).
\]
Buyer RoS, budget balance, and seller RoC imply
\[
    \sum_t\BuyerWelfare(t)
    \ge
    \sum_t c_t(\rsu(t)\setminus\rsv(t)).
\]
Hence
\begin{align*}
3\sum_t\TwoSidedWelfare(t)
&\ge
\sum_t\Bigl(\BuyerWelfare(t)+\SellerWelfare(t)+\BuyerRevenue(t)+\BuyerWelfare(t)\Bigr)\\
&\ge
\sum_t\Bigl(v_t(\tbu(t))+c_t(\rsv(t))+c_t(\rsu(t)\setminus\rsv(t))\Bigr)\\
&\ge
\sum_t\Bigl(v_t(\tbu(t))+c_t(\rsu(t))\Bigr),
\end{align*}
which is truthful-TR welfare.
\end{proof}

\begin{corollary}[Hybrid buyers and the truthful-TR benchmark]\label{cor:hybrid-tr-benchmark}
With hybrid buyers, the PoA relative to truthful-TR liquid welfare is at most $2-\underline\beta$ with utility-maximizing sellers and at most $3-\underline\beta$ with value-maximizing sellers. For a common parameter $\beta_i=\beta$, both bounds are tight.
\end{corollary}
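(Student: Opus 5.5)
The plan has two parts: upper bounds obtained by redoing the truthful-TR charging proofs of \cref{thm:welfare_poa_one_side_tr,thm:welfare_poa_two_side_tr} with the hybrid refinement from the upper-bound proof of \cref{thm:one_side_value_poa,thm:two_side_value_poa}, and lower bounds obtained by adapting the constructions of \cref{lem:welfare_poa_one_side_lowerbound,lem:welfare_poa_two_side_lowerbound} to the truthful-TR benchmark.

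\textbf{Upper bounds.} Fix a round $t$ and a pure equilibrium satisfying the conventions. Let $p^t$ be the equilibrium buyer price. Since buyers do not underbid, every buyer in $\tbu(t)\setminus\tbv(t)$ has value at most $p^t$. Since reported trade counts only increase, $|\tbv(t)|\ge|\tbu(t)|$, and $\tbv(t)\setminus\tbu(t)$ contains at least $|\tbu(t)\setminus\tbv(t)|$ buyers.

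\emph{Utility-maximizing sellers.} Sellers trade in cost order, so each seller in $\rsu(t)\setminus\rsv(t)$ is matched to an extra equilibrium trade and has cost at most $p_s^t\le p^t$. Hence
\[
|\tbu(t)\setminus\tbv(t)|+|\rsu(t)\setminus\rsv(t)|\le|\tbv(t)\setminus\tbu(t)|.
\]
No $+1$ slack is needed here, because truthful TR already discarded the marginal trade. I then pair each missing truthful-TR buyer or seller with a distinct buyer $i\in\tbv(t)\setminus\tbu(t)$. By \eqref{eq:hybrid-local}, such a buyer contributes $v_i^t\ge\underline\beta p^t$ to the equilibrium welfare $W_t$. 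So only $(1-\underline\beta)p^t$ per paired buyer has to be charged to payments, which gives
\[
v_t(\tbu(t))+c_t(\rsu(t))\le W_t+(1-\underline\beta)P_t.
\]
Summing over rounds and using $P\le W$ from \eqref{eq:hybrid-payment-value} yields the bound $2-\underline\beta$.

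\emph{Value-maximizing sellers.} The buyer side is charged the same way. The seller loss $\sum_t c_t(\rsu(t)\setminus\rsv(t))$ is charged to one further copy of $W$, via seller RoC, ex-post WBB and buyer RoS, exactly as in \cref{thm:welfare_poa_two_side_tr}. This gives $3-\underline\beta$.

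The delicate step is the pairing. I must make sure that each equilibrium-only buyer whose welfare is counted as $\underline\beta p^t$ is not also counted inside the term $v_t(\tbu(t)\cap\tbv(t))$. This holds because such buyers lie outside $\tbu(t)$.

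\textbf{Lower bounds (tightness) for a common $\beta$.} I would reuse the two-round constructions.

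For utility-maximizing sellers, I take the instance of \cref{lem:welfare_poa_one_side_lowerbound} and let $r\to\infty$. Truthful TR then obtains $r-1$ trades of value about $1$ in round~1 and $r-1$ trades of value about $1-\beta$ in round~2, i.e.\ welfare about $(r-1)(2-\beta)$. The equilibrium welfare is about $r-1$, so the ratio tends to $2-\beta$. At $\beta=1$, \cref{lem:welfare_poa_one_side_lowerbound_tr} is superseded trivially: the bound $1$ is tight with truthful reports.

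For value-maximizing sellers, I take the instance of \cref{lem:welfare_poa_two_side_lowerbound}. Truthful TR there obtains roughly $r-1$ in round~1, plus $(r-1)(1-\beta)$ from the high-value buyers in round~2, plus the retained seller costs of about $r$. This is about $(3-\beta)r$, against an equilibrium welfare of about $r+1$. Letting $\varepsilon\to0$ and $r\to\infty$ gives a ratio tending to $3-\beta$.

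\textbf{Main obstacle.} I expect the main difficulty to be the exact truthful-TR computation in these instances. Because TR drops a marginal trade, its welfare loses one unit of the relevant trade value. This costs only $O(1)$, which vanishes relative to $r$. I would check each round's truthful sorted order explicitly, especially the seller costs $1$ versus $2\varepsilon$ in round~2.
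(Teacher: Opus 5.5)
Your proposal is correct and follows the paper's proof essentially step for step. It uses the same hybrid charging argument with $\tbu(t),\rsu(t)$ in place of $\tbo(t),\rso(t)$, where $d_t+\ell_t=e_t$ removes the $1/(r-1)$ term, one extra copy of $W$ for the seller side with value sellers, and the same two lower-bound instances. Only your bookkeeping needs minor fixes: in the value-seller instance, truthful TR also retains the round-1 seller of cost $1+\varepsilon$, giving $(3-\beta)r+\beta-1-O(r\varepsilon)$; and with utility sellers the ratio already tends to $2-\beta$ as $\varepsilon\to0$ for every fixed $r$, so taking $r\to\infty$ is unnecessary there.
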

\begin{proof}
Repeat the upper-bound proof using $\tbu(t),\rsu(t)$ in place of $\tbo(t),\rso(t)$. Put $k_t=|\tbu(t)|=r_t-1$. Since $q_t\ge k_t$, the missing and extra buyer counts satisfy $d_t\le e_t$. With truthful sellers, their missing retained-seller count is $\ell_t=q_t-k_t$, so $d_t+\ell_t=e_t$. Thus there is no marginal-trade additive term, giving $\mathrm{TR}_{\mathrm{truth}}\le W+(1-\underline\beta)P$ with utility sellers and $\mathrm{TR}_{\mathrm{truth}}\le2W+(1-\underline\beta)P$ with value sellers. Use $P\le W$.

For utility sellers and $\beta<1$, the construction in \cref{lem:welfare_poa_one_side_lowerbound} has truthful-TR welfare tending to $(r-1)(2-\beta)$ and equilibrium welfare tending to $r-1$ as $\varepsilon\to0$. The endpoint $\beta=1$ is immediate from truthful reporting. For value sellers, the construction in \cref{lem:welfare_poa_two_side_lowerbound} has truthful-TR welfare at least $(3-\beta)r+\beta-1-O(r\varepsilon)$ and equilibrium welfare $r+1+O(r\varepsilon)$. Let $\varepsilon\to0$ and then $r\to\infty$.
\end{proof}

\subsection{Existence of Equilibria for Trade Reduction}
\label{app:equilibriaexistance}

In this subsection, we prove the existence of pure Nash equilibria satisfying the no-dominated-bids convention used in Section~\ref{sec:welfare_poa_tr} for hybrid buyers and utility-maximizing sellers, and for the reverse one-sided value-maximizer case.
Recall that the repeated market is modeled as a normal-form game: each agent submits a fixed vector of bids, one bid for each round in which the agent participates.
We use ``round'' throughout; this is the same object sometimes called a ``query'' in online-advertising terminology.

We fix a deterministic tie-breaking rule for Trade Reduction.
Equivalently, one may perturb equal bids and asks by infinitesimals according to a fixed priority order.
The proof below only uses the resulting consistency of the reduced buyer and reduced seller.

Let
\[
    M > \sum_t \sum_{i\in B_t} v_i^t + 1
\]
be a bid larger than the total value any {hybrid} buyer can obtain across all rounds.

\begin{theorem}
\label{thm:existence_value_buyers_utility_sellers}
Suppose buyers are {hybrid buyers}, sellers are utility-maximizers, and every round is run by Trade Reduction with the fixed tie-breaking rule described above.
Then the repeated bidding game admits a pure Nash equilibrium satisfying the no-dominated-bids convention.
\end{theorem}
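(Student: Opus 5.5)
We would build the equilibrium explicitly instead of using a general fixed-point theorem, since the bidding game need not be finite or have continuous payoffs.

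\textbf{Step 1: sellers.} Utility-maximizing sellers report their true costs. Trade Reduction is uIC, so truthful asks are a best response to any buyer profile, and they trivially satisfy the no-overasking convention. Sellers' asks are therefore fixed, and only the buyers' game remains.

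\textbf{Step 2: reduce buyer strategies to a discrete form.} We would show that each hybrid buyer has a best response of the following form: in every round $t$ she bids either her normalized value $v_i^t$ or the large bid $M$. Given the other bids, a round in which she trades has a fixed threshold price $\pi_i^t$. Under fixed tie-breaking and with $r\ge2$, bidding $M$ puts her at the top of the buyer order, so she is never the reduced buyer and pays exactly $\pi_i^t$. Rounds with $\pi_i^t\le v_i^t$ are already won by bidding $v_i^t$. By the Observation, no underbidding is needed. A best response then reduces to choosing a set $F_i$ of \emph{forced} rounds (bid $M$) to maximize $\sum_{t\text{ won}}(v_i^t-\beta_i\pi_i^t)$ subject to $\sum_{t\text{ won}}(v_i^t-\pi_i^t)\ge0$. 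This is a finite knapsack-type problem, so a maximizer exists. The choice of $M$ guarantees that a forced round can never lower the price in another round in a way that matters; $M$ exceeds every buyer's total value, so it never becomes the reduced bid that sets other buyers' prices.

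\textbf{Step 3: run a monotone procedure and show it stops.} Start from the profile where every buyer bids $v_i^t$ in every round. Let buyers, in a fixed priority order, update $F_i$ to a best response, always breaking ties toward keeping previously forced rounds. In each round the TR buyer price is the bid of the reduced buyer. Forcing by one buyer can only push weaker buyers down the order, so it weakly raises the threshold that every other buyer faces in that round. We would therefore aim to show two monotonicity properties:
\begin{itemize}
\item the set of (buyer, round) pairs traded with bid $M$ only grows;
\item a buyer whose prices rise drops only rounds she was winning at her truthful bid, never forced rounds.
\end{itemize}
If both hold, a potential such as the lexicographic vector (number of forced pairs, total traded value) strictly increases at each update. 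There are finitely many pure profiles of the form in Step 2, so the procedure terminates at a profile where every buyer best responds. That profile is a pure Nash equilibrium with no underbidding.

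\textbf{Main obstacle.} The hard part is the second monotonicity claim in Step 3. A rise in prices can shrink a buyer's RoS slack and force her to abandon a forced round, and that can lower prices for others and create cycling. The first thing to try is to bound each forced buyer's exposure. In any round, a forced buyer pays the bid of the reduced buyer, and that bid is either some buyer's value or the reduced seller's cost, not any $M$-bid. So by the Step 2 argument, new forcing by others never changes the price a forced buyer pays in an already-forced round, except by pushing her out of it. That case is excluded as long as at most $q_t-1$ buyers force in round $t$; we would enforce this by admitting forcing into a round only when a trade slot remains. If exact price invariance fails, the fallback is to replace the order-based process with a lexicographic selection: choose the profile of forced sets that maximizes the potential among all profiles in which every buyer's RoS constraint holds. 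Then argue directly that any profitable unilateral deviation would yield a feasible profile with a larger potential.
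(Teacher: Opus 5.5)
\textbf{There is a genuine gap in Step 3, and it is the step you flagged.} You start from all-truthful bids, so buyers who win at their truthful bid keep bidding their values. With such buyers present, your claim that new forcing never changes the price a forced buyer pays (except by pushing her out) is false.

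Take one round with three sellers asking $0$: buyer $A$ bids $M$, $B$ bids $5$, $C$ bids $3$, and $D$ has value $2$. TR trades $A$ and $B$ at price $3$. If $D$ bids $M$, TR trades $A$ and $D$, and $B$ becomes the reduced buyer. So $B$ loses a truthful win, and $A$'s price in her forced round jumps from $3$ to $5$. This is exactly the slack-shrinking cascade that makes buyers drop forced rounds and can cycle, so monotonicity claim (b), and with it termination, is unproven.

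Your patches do not close the gap:
\begin{itemize}
\item \emph{Admitting forcing only when a slot remains.} If $D$ has enough RoS slack elsewhere, forcing here at price $5$ is a strictly profitable deviation. A terminal profile of the restricted dynamics therefore need not be an equilibrium.
\item \emph{Maximizing (forced pairs, traded value) over RoS-feasible profiles.} This fails for hybrid buyers. With $\beta_i>0$, dropping a forced round with $v_i^t<\beta_i\pi_i^t$ is a profitable deviation that lowers the potential. Also, a unilateral deviation such as $D$'s can break another buyer's RoS, so the deviated profile need not lie in the set you maximized over.
\end{itemize}
Two smaller errors: an $M$ bid \emph{can} become the reduced bid when $M$-bidders outnumber trade slots, and the TR buyer price is the reduced buyer's bid, never a seller's cost.

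\textbf{The missing idea is to eliminate truthful-bid winners before any updates.} In each round, raise the bid of every buyer who trades under truthful reports to $M$. This changes no allocation or price, since the reduced buyer is not among them. Then maintain the invariant that every traded buyer bids $M$ and every untraded buyer bids her value.

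Under this invariant, when an untraded, non-reduced buyer bids $M$, one of two things happens:
\begin{itemize}
\item The old reduced buyer's bid still clears the next seller's ask. Then she is added as one extra trade, with the same reduced buyer and the same price for everyone.
\item It does not. Then some $M$-bid becomes the reduced bid and the price is $M$. This exceeds any buyer's total value, so such a round never enters a RoS-feasible best response.
\end{itemize}

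It follows that each buyer can pick a best response containing every round she currently wins. These rounds have $\pi_i^t\le v_i^t$, so they contribute $v_i^t-\beta_i\pi_i^t\ge 0$ and relax RoS. Each update then:
\begin{itemize}
\item only enlarges the updater's winning set;
\item leaves every earlier buyer's winning set and prices unchanged;
\item weakly raises all other posted prices.
\end{itemize}
An earlier buyer's chosen set thus stays feasible with the same objective, while every alternative has become weakly worse. A single pass over the buyers therefore ends at a pure equilibrium with no underbidding, and no potential function or termination argument is needed. This is the paper's route.
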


\begin{proof}
Since TR is DSIC and ex-post IR for quasi-linear utility-maximizing sellers in each round, truthful bidding is a dominant strategy for every seller.
We therefore fix seller bids to be their true costs and construct a pure Nash equilibrium for the buyers satisfying the no-underbidding convention.

Initialize every buyer's bid in every round to her true value and run TR in every round.
For every buyer-round pair in which the buyer trades, raise that buyer's bid in that round to \(M\).
This does not affect the allocation or prices in that round: the buyer-side price is the bid of the reduced buyer, and the reduced buyer is not one of the traded buyers whose bid was raised to \(M\).

We now process the buyers one at a time in a fixed order.
We maintain the invariant that, in every round, every currently traded {hybrid} buyer bids \(M\), while every currently untraded {hybrid} buyer bids her true value.
Thus the no-underbidding convention is maintained throughout.

Given the bids of all other buyers and the truthful seller bids, buyer \(i\) faces a posted-price problem across rounds.
For each round \(t\), define \(\pi_i^t\in \mathbb{R}_{\ge0}\cup\{\infty\}\) to be the payment buyer \(i\) would make in round \(t\) if she changed only her bid in that round to \(M\) and traded.
If she cannot trade at any finite payment, set \(\pi_i^t=\infty\).
Under TR and the invariant above, this price is independent of the exact bid as long as the bid is large enough to make buyer \(i\) trade.

Buyer \(i\)'s best-response problem is therefore
\[
    \max_{S\subseteq\{t:i\in B_t\}} \sum_{t\in S} {(v_i^t-\beta_i\pi_i^t)}
    \qquad
    \text{s.t.}
    \qquad
    \sum_{t\in S} \pi_i^t \le \sum_{t\in S} v_i^t .
\]
Among optimal solutions, choose one containing all rounds currently won by buyer \(i\).
{Such a choice exists because, before buyer $i$ is processed, every currently won round has price $\pi_i^t\le v_i^t$. Keeping it weakly relaxes RoS and adds $v_i^t-\beta_i\pi_i^t\ge0$ to the objective. The same argument lets us choose an optimal set containing every other round with $\pi_i^t\le v_i^t$ that is available at her truthful bid; hence bidding her value outside the chosen set does not force an unwanted trade.}
Buyer \(i\) then bids \(M\) in the rounds in this chosen set and bids her value in all other rounds.

It remains to verify that processing buyer \(i\) does not destroy the best responses of previously processed buyers.
Consider any round.
If buyer \(i\) already traded in the round, nothing changes.
If buyer \(i\) was the reduced buyer, then bidding \(M\) would either still leave her reduced or would make the buyer-side price equal to \(M\).
Since \(M\) exceeds buyer \(i\)'s total possible value, such a round is never part of a feasible best response.
Finally, suppose buyer \(i\) was neither traded nor reduced.
If bidding \(M\) lets her trade at a finite price, then the efficient matching with respect to bids contains all previously traded buyers, buyer \(i\), and the previously reduced buyer.
The previously reduced buyer still has the lowest bid among these buyers and remains the reduced buyer.
Hence all previously traded buyers continue to trade at the same price, while buyer \(i\) is added as a new traded buyer.
If buyer \(i\) cannot trade at a finite price, then nothing changes except that the posted price for some untraded buyers may increase to \(\infty\).

Thus after buyer \(i\)'s update:
\begin{enumerate}[itemsep=0pt,topsep=0pt]
    \item buyer \(i\)'s winning set only expands;
    \item the winning set and prices of every previously processed buyer are unchanged;
    \item prices on rounds not won by a previously processed buyer weakly increase.
\end{enumerate}
Therefore every previously processed buyer’s strategy remains a best response.
After all buyers have been processed, every buyer is best-responding and every seller is truthfully reporting.
The resulting bid profile is a pure Nash equilibrium satisfying the no-dominated-bids convention.
\end{proof}

\begin{corollary}
\label{cor:existence_utility_buyers_value_sellers}
The symmetric one-sided case also admits a pure Nash equilibrium satisfying the no-dominated-bids convention: if buyers are utility-maximizers, sellers are value-maximizers, and every round is run by TR with the same fixed tie-breaking convention, then such an equilibrium exists.
\end{corollary}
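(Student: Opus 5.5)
The plan is to mirror the proof of \cref{thm:existence_value_buyers_utility_sellers} with the roles of buyers and sellers exchanged. Since TR is DSIC and ex-post IR for quasi-linear buyers in each round, truthful bidding is dominant for every utility-maximizing buyer. We fix buyer bids at their true target-normalized values and build an equilibrium among the value-maximizing sellers that respects the no-overasking convention. The symmetric ``extreme'' bid is an ask of $0$: under the fixed tie-breaking, a seller asking $0$ is placed first on the seller side.

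We initialize every seller's ask to her true cost in every round and run TR. For each seller-round pair in which the seller trades, we lower her ask to $0$. This leaves allocation and prices unchanged, because the seller-side price is the reduced seller's ask and the reduced seller is not among the traded sellers.

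We then process sellers one at a time, maintaining the invariant that every currently traded seller asks $0$ and every untraded seller asks her cost. Given all other reports, seller $j$ faces a posted-price problem. For each round $t$, define $\rho_j^t\in\mathbb{R}_{\ge0}\cup\{-\infty\}$ as the revenue she would receive in round $t$ if she asked $0$ there and traded, with $-\infty$ if she cannot trade. Her best response solves
\[
\max_{S} \sum_{t\in S} c_j^t \quad\text{s.t.}\quad \sum_{t\in S}(\rho_j^t - c_j^t)\ge 0 .
\]
Among optimal sets we choose one containing every round she currently wins, and every round with $\rho_j^t\ge c_j^t$ available at her truthful ask. This is possible because such rounds weakly relax RoC and weakly increase the objective. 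She then asks $0$ in the chosen rounds and her cost elsewhere, so she never overasks.

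The main obstacle is checking that seller $j$'s update preserves the previously processed sellers' best responses. We verify the three monotonicity properties by the mirrored case analysis.
\begin{enumerate}
\item If $j$ already traded in a round, nothing changes.
\item If $j$ was the reduced seller, asking $0$ either leaves her reduced or makes the seller price $0$. With a price of $0$ the trade yields no revenue, so the round only consumes RoC slack.
\item If $j$ was neither traded nor reduced and asking $0$ lets her trade, then the efficient bid-matching now contains all previously traded sellers, seller $j$, and the previously reduced seller. That reduced seller still has the highest ask among them and remains reduced, so existing trades and prices are unchanged.
\item Otherwise, only other sellers' posted revenues can weakly decrease.
\end{enumerate}

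Item 2 is the one real asymmetry with the buyer case. There, a price of $M$ was automatically infeasible; here, a price of $0$ is merely unattractive. I would handle it by noting that any round containing such a zero-price trade could be dropped from $S$: doing so removes $c_j^t$ from the objective but relaxes RoC. This requires care, since the seller's objective values traded cost. The clean fix is the following. If the reduced seller instead asks $0$, TR recomputes the matching with her among the first $k^*$. The new reduced pair then determines the price, which is typically the former $(k^*-1)$-th ask rather than $0$. So this case reduces to the ``neither traded nor reduced'' analysis with a posted price $\rho_j^t$, and the tie-breaking convention handles degenerate equalities.

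Hence after each update, $j$'s winning set only expands, earlier sellers' winning sets and prices are unchanged, and the revenues on rounds they do not win weakly decrease. Their strategies therefore remain best responses, and after all sellers are processed we have a pure Nash equilibrium satisfying the convention.
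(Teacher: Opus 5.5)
You follow the same route as the paper: buyers report truthfully, then a single pass over sellers in which every traded seller is moved to an extreme ask. The gap sits exactly at the point you flag in item 2, and your proposed fix does not work.

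By your own invariant every traded seller asks $0$. Suppose the reduced seller $j$ asks $0$ and the fixed priority ranks her ahead of some traded seller. Then $k^*$ is unchanged, the new reduced seller is the lowest-priority zero-ask seller (an incumbent winner), and the seller price is exactly $0$. The ``former $(k^*-1)$-th ask'' you appeal to \emph{is} $0$. So this case does not reduce to item 3, because an incumbent is evicted.

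Item 3 has the same hole. Suppose $j$ is neither traded nor reduced, and the old reduced ask exceeds $b_{k^*+1}$ (or there is no $(k^*+1)$-th buyer). Then asking $0$ lets $j$ trade only through this same eviction at revenue $0$. Your claim that the new matching contains all previous winners plus the old reduced seller fails there. In the buyer-side proof this sub-case is excluded by the words ``at a finite price'': eviction costs $M$, which is never RoS-feasible.

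A revenue-$0$ round is not merely unattractive to a value-maximizing seller. It adds $c_j^t$ to her objective and uses only $c_j^t$ of slack, so it enters her best response whenever she has slack elsewhere. Here is a concrete failure:
\begin{itemize}
\item Take a round with three buyers of value $10$ and sellers $D,B,A$ of costs $1,1,3$, with priority $D\succ A\succ B$. Truthful TR pays $D$ and $B$ revenue $3$.
\item Suppose $A$ also wins another round with revenue $9$ and cost $1$, and $D$ and $B$ are processed before $A$.
\item $A$'s best response then includes asking $0$ in the first round. This evicts $B$ and cuts $D$'s revenue to $0$, violating $D$'s RoC if this is her only round.
\end{itemize}
So the property that earlier sellers keep their winning sets and revenues fails, and the single pass need not end at an equilibrium.

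To repair the argument, the extreme ask must make eviction RoC-infeasible, as $M$ does for buyers. For example, let traded sellers ask $-M$ with $M$ exceeding the total buyer value. An evicting trade then earns $-M$, which no other revenue can offset, because each round's seller price is at most some truthful buyer's value. This needs an ask space containing such a $-M$. The paper's sketch tacitly needs the same thing: it simply asserts that a lowered ask either trades at the reduced seller's revenue or cannot trade at a finite feasible revenue. With asks confined to $\mathbb{R}_{\ge0}$, as in the preliminaries, the single-pass argument does not go through and a different argument is required.
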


\begin{proof}[Proof sketch]
The proof is the seller-side analogue of Theorem~\ref{thm:existence_value_buyers_utility_sellers}.
Utility-maximizing buyers bid truthfully because TR is DSIC and ex-post IR for quasi-linear buyers.
For value-maximizing sellers, initialize asks at true costs, lower the ask of every currently traded value-maximizing seller to the lowest ask allowed by the bid space, breaking ties by the fixed priority rule, and process sellers one by one.
The no-overasking convention is maintained throughout.

Given the other agents' reports, each seller faces a posted-revenue problem across rounds: lowering her ask enough either makes her trade at a revenue determined by the reduced seller, or cannot make her trade at a finite feasible revenue.
Each seller chooses a revenue-feasible set of rounds maximizing total traded cost subject to the RoC constraint.
{The analogous argument using TR\textquotesingle{}s allocation and payment rules shows} that when a seller adds a round, previously processed sellers keep their winning sets and revenues, while revenues on rounds not chosen by previously processed sellers can only become less attractive.
Hence a single pass over sellers yields a pure Nash equilibrium satisfying the no-dominated-bids convention.
\end{proof}

\begin{remark}
\label{rem:existence_two_sided_value}
The argument above is specific to the one-sided cases, where one side is quasi-linear and therefore truthfully reports under TR.
{It includes heterogeneous hybrid buyers, and it extends to a mixed seller population in the reverse one-sided case by fixing the utility-maximizing sellers' truthful reports and processing only the value-maximizing sellers.}
The proof does not, by itself, establish pure-equilibrium existence when both buyers and sellers are value-maximizers: then both sides' bid changes can affect the posted prices/revenues faced by the other side, and the monotone one-pass invariant used above need not survive.
The welfare upper bounds in Section~\ref{sec:welfare_poa_tr} should therefore be read as applying to pure equilibria under the no-dominated-bids convention that exist; the lower-bound constructions explicitly exhibit such equilibria.
\end{remark}


\section{Full GFT Mechanism for Bayesian Matching Markets}
\label{sec:fullgft}

In this section, we turn to the Bayesian setting, where the designer has access to prior distributional information about agents' private values and costs. Our goal is to design mechanisms tailored to Return-on-Spend (RoS) autobidders. We show that, by leveraging the value-maximizing behavior of these agents, it is possible to restore full Liquid Gains from Trade (LGFT) while satisfying truthfulness, individual rationality (IR), and ex-ante weak budget balance (WBB).

Our mechanism applies to general matching markets with downward-closed feasibility constraints. This stands in sharp contrast to the impossibility result of Myerson and Satterthwaite, which established that no mechanism can simultaneously satisfy BIC, IR, WBB, and first-best efficiency for utility maximizers, even in the simplest bilateral trade setting \cite{MS83}.

\begin{theorem*}[Informal]
Given any matching market instance with value-maximizing buyers and utility-maximizing sellers, \cref{mech:onedimension} achieves the optimal LGFT and liquid welfare ex post. It is truthful for value-maximizing buyers, DSIC for utility-maximizing sellers, IR, and ex-ante WBB.
\end{theorem*}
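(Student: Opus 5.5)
Since \cref{mech:onedimension} has not been displayed yet, I will fix a concrete version and check the four properties against it. All quantities are in target-normalized units, and buyers report values $\bar v_i$ while sellers report costs $c_j$.

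\emph{Allocation.} The mechanism chooses a feasible matching $\alloc^*$ that maximizes reported GFT $\sum_{ij} x_{ij}(\bar v_i - c_j)$. Ties are broken by a fixed rule, so the allocation is monotone in each report.

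\emph{Seller payments.} Each seller receives a VCG/Myerson threshold payment $q_j$: the largest ask at which she would still trade, given the other reports. This is exactly the payment that makes the seller side DSIC and ex-post IR for quasi-linear sellers.

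\emph{Buyer payments.} Each buyer pays her full (normalized) reported value when she trades, $p_i=\bar v_i x_i$. Equivalently, she pays $v_i/\tau_i$ in original units. This choice makes her RoS constraint bind exactly.

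The key steps are as follows.

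First, efficiency. The allocation maximizes GFT on reported types, so under truthful reporting it attains first-best LGFT and liquid welfare ex post. This uses the identity $\wel=\sum c_j/\tau_j+\gft$ from \cref{sec:prelim}.

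Second, buyer truthfulness for value maximizers. Under truthful reports, buyer $i$'s RoS holds with equality in every realization, so it is feasible. Now consider any misreport. Underbidding weakly shrinks her trade set, by monotonicity of the efficient allocation, so it can only reduce her value. Overbidding adds trades in rounds, or realizations, where she pays her report $b_i>\bar v_i$ against value $\bar v_i$. Each such trade strictly violates RoS, and the rounds where she trades truthfully contribute zero slack, so no slack exists to offset the deficit. The same holds for a vector of misreports across rounds, because every truthfully won round has zero slack and every newly won round charges more than its value. Hence truthful reporting is optimal among RoS-feasible strategies. This gives vIC ex post, and a fortiori BIC.

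Third, seller DSIC and IR. These follow from the standard threshold-payment argument for a monotone allocation. Buyer IR, meaning the RoS constraint, holds with equality.

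Fourth, ex-ante WBB, which I expect to be the main obstacle. Per realization, revenue minus payouts is $\sum_i \bar v_i x_i - \sum_j q_j$. Since $q_j\ge c_j$, this can be negative. The intended argument compares $q_j$ to the buyer-side surplus. For a traded seller $j$ matched to buyer $i$, the threshold $q_j$ is at most the largest cost at which including $j$ remains optimal. Raising $j$'s ask up to $q_j$ keeps the optimal GFT nonnegative, so the marginal GFT contributed by $j$ is at least $q_j-c_j$. Formally, $q_j-c_j = \text{OPT}-\text{OPT}_{-j}\le \text{OPT}$ term-wise, the VCG bound. Summing, the required inequality is $\sum_j (q_j - c_j)\le \sum_{ij}x_{ij}(\bar v_i-c_j)$, that is, $\sum_j q_j\le \sum_i \bar v_i x_i$.

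This inequality can fail ex post for multiple sellers, since VCG marginal contributions may sum to more than the total. That is why the statement claims only ex-ante WBB. To handle it, I would switch to Bayesian seller payments via the interim Myerson form. Using the virtual-cost identity, expected seller payment equals $\E[\sum_j y_j\,(c_j + F_j(c_j)/f_j(c_j))]$. I would then charge buyers their full value, which in expectation exceeds this amount. Should the bound fail, I would fall back to an AGV/Crémer–McLean-style transfer that recycles the buyers' full-surplus payments, using the fact that the buyer side extracts the entire $\E[\sum_i \bar v_i x_i]\ge \E[\text{OPT}+\sum_j c_jy_j]$. Verifying the comparison $\E[\text{VCG seller payments}]\le \E[\text{OPT}]+\E[\sum_j c_jy_j]$ is the step I would check most carefully. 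The planned route is to show $\E[q_j-c_j \mid \cdot\,]\le$ the expected marginal contribution of $j$, and then use submodularity-free summation via the buyer-side full extraction. The remark that broker profit equals expected LGFT when both sides are value maximizers suggests that in that case both sides pay or receive their reports and the gap is exactly GFT.
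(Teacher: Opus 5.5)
Your allocation rule, the efficiency argument, and seller DSIC/IR via threshold payments match the paper; the gap is the buyer payment rule. Charging a trading buyer her reported value $\bar v_i$ is not truthful for a value maximizer once RoS is aggregated: across rounds or, as in the paper's vBIC-B, in expectation over her own type and the others' reports. Your claim that underbidding ``can only reduce her value'' overlooks one effect. A shaded bid that still wins creates strictly positive slack $(\bar v_i-b_i)x_i$, and that slack can fund overbids elsewhere.

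Concretely, take one buyer and one utility-maximizing seller. The seller's cost is $0$ or $0.9$ and the buyer's normalized value is $2$ or $1/2$, each with probability $1/2$, independently. Truthful bidding yields expected value $\tfrac12\cdot2+\tfrac14\cdot\tfrac12=\tfrac98$. The constant bid $f\equiv1$ trades in every realization, yields expected value $\tfrac54$, and pays $1\le\tfrac54$, so it is RoS-feasible and strictly better. The same happens in two rounds with values $2$ and $1/2$ against cost $0.9$: bidding $1$ in both rounds wins both, pays $2$, and collects value $5/2>2$.

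Your ``vIC ex post, and a fortiori BIC'' also runs the implication backwards. As the paper's remark on conditional versus global truthfulness notes, the global constraint enlarges the set of feasible deviations. So realization-by-realization truthfulness does not imply global truthfulness.

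The missing idea is to charge a \emph{scaled Myerson payment} $\frac{1+\zeta_i}{\zeta_i}q_i^B$, with $\zeta_i>0$ chosen so that truthful RoS binds in expectation. Then $V_i-\frac{\zeta_i}{1+\zeta_i}P_i=V_i-\E[q_i^B]$ is exactly quasi-linear utility under threshold payments, which truth maximizes for every type. Combined with $P_i^T=V_i^T$ and $P_i^f\le V_i^f$ for every RoS-feasible $f$, this gives $\frac{1}{1+\zeta_i}V_i^T\ge\frac{1}{1+\zeta_i}V_i^f$. Your pay-your-bid rule corresponds instead to a first-price Lagrangian, and in general no multiplier makes truth optimal for it.

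Your budget-balance step is left as an unverified plan (virtual costs, an AGV-style fallback), and it rests on a false premise. The inequality $\sum_j q_j\le\sum_i\bar v_ix_i$ holds ex post, not just in expectation. Suppose seller $j$ is matched to buyer $i$ in the optimal matching $M$ of weight $W$. Downward-closedness makes $M\setminus\{(i,j)\}$ feasible, with weight $W-(\bar v_i-c_j)$ independent of $j$'s ask. Every matching containing $j$ loses exactly $s_j-c_j$ when $j$ asks $s_j$. Hence $j$ cannot trade at any ask above $\bar v_i$, i.e.\ $q_j\le\bar v_i$; this is \cref{lem:seller-threshold-bound}. Each seller's VCG surplus is thus bounded by the weight of her own matched edge, so the surpluses cannot sum to more than the total.

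In the paper, this pointwise bound is combined with $\E[\sum_ip_i^b]=\E[\sum_i\bar v_ix_i^*]$, which is how $\zeta_i$ is chosen, to give ex-ante WBB. Only ex-ante WBB is claimed because scaled threshold payments collect $\bar v_ix_i$ on average, not realization by realization. It has nothing to do with seller surpluses overshooting.
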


This contrasts with \cref{sec:prior_free_unbounded_poa}, where we show that classical mechanisms can have unbounded Price of Anarchy. By tailoring the payment rule to RoS behavior, the mechanism circumvents the Myerson--Satterthwaite impossibility, which applies to quasi-linear utility maximizers.

\begin{theorem*}[\cite{MS83}]
When all agents are utility maximizers, no BIC, IR, and WBB mechanism can guarantee full efficiency, even in a bilateral trade instance with one buyer and one seller.
\end{theorem*}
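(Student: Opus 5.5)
We will prove the statement by exhibiting a single bilateral-trade instance on which every BIC, IR, and WBB mechanism fails to be fully efficient. For concreteness, take one utility-maximizing buyer with value $v\sim U[0,1]$ and one utility-maximizing seller with cost $c\sim U[0,1]$, drawn independently. Any overlapping intervals with positive densities would do. Suppose, for contradiction, that a mechanism is BIC, IR, ex-ante WBB, and first-best efficient. Then its allocation is $x(v,c)=\mathbf{1}\{v\ge c\}$ up to a measure-zero tie set. The strategy is revenue equivalence. Efficiency pins down the interim allocation rules, BIC pins down the interim payments up to one additive constant per agent, and IR forces the constants to have a sign that creates an expected deficit.

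\emph{Step 1 (envelope identities).} Write the buyer's interim trade probability as $X(v)=\Pr_c[c\le v]=v$ and the seller's as $Y(c)=\Pr_v[v\ge c]=1-c$. Let $U_B(v)$ and $U_S(c)$ be the interim utilities. BIC and the standard Myerson argument give
\[
U_B(v)=U_B(0)+\int_0^v X(z)\,dz,\qquad U_S(c)=U_S(1)+\int_c^1 Y(z)\,dz .
\]
Hence every such mechanism has the same expected total utility up to the constants $U_B(0)$ and $U_S(1)$.

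\emph{Step 2 (surplus accounting).} The expected broker surplus $\E[p-q]$ equals the expected gains from trade minus the total expected utility:
\[
\E[p-q]=\E\bigl[(v-c)\mathbf{1}\{v\ge c\}\bigr]-\E[U_B(v)]-\E[U_S(c)].
\]
By Step~1 this equals $K-U_B(0)-U_S(1)$, where $K$ depends only on the allocation rule. To evaluate $K$ we can use any mechanism with this allocation. The convenient one is the VCG (pivot) mechanism: on trade the buyer pays $c$ and the seller receives $v$, and this mechanism gives $U_B(0)=U_S(1)=0$. So $K$ equals the VCG surplus, which is $\E[(c-v)\mathbf{1}\{v\ge c\}]=-\E[(v-c)^+]=-1/6<0$.

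\emph{Step 3 (conclusion).} IR requires $U_B(0)\ge0$ and $U_S(1)\ge0$. Therefore $\E[p-q]\le K<0$, which contradicts ex-ante WBB. Thus no BIC, IR, WBB mechanism achieves full efficiency on this instance, even in bilateral trade.

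The main obstacle is Step~1 together with the reduction in Step~2. We must justify that BIC alone yields the envelope identity for arbitrary, possibly randomized, transfer rules, and that payments depend on the allocation only through these interim quantities, so that the comparison to VCG is legitimate. We would handle this with the standard argument: BIC makes $U_B$ convex with subgradient $X$, so it is absolutely continuous, and integrating $U_B'=X$ gives the identity; the seller case is symmetric. A secondary care point is that the binding IR types must be the extreme types $v=0$ and $c=1$. This follows because $U_B$ is nondecreasing in $v$ and $U_S$ is nonincreasing in $c$, so IR at every type reduces to IR at these endpoints.
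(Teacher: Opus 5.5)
The paper does not prove this statement. It is quoted from \cite{MS83} only as a contrast to the Bayesian mechanism of Section~\ref{sec:fullgft}, so your argument has to stand on its own, and it does. Efficiency fixes $X(v)=v$ and $Y(c)=1-c$. BIC gives the envelope identities; your convexity argument, with $U_B(v)=\max_b[vX(b)-P(b)]$ finite and lower semicontinuous, also gives continuity at the endpoints. The accounting identity then gives $\E[p-q]=K-U_B(0)-U_S(1)$ with $K=\tfrac16-\E[v^2/2]-\E[(1-c)^2/2]=-\tfrac16$, which matches your VCG evaluation.

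This is the VCG-comparison form of the revenue-equivalence argument. The original route of \cite{MS83} is different: it shows that every BIC, interim-IR, ex-ante-WBB mechanism must satisfy $\E[(\phi_B(v)-\phi_S(c))\,x(v,c)]\ge 0$. Here $\phi_B(v)=v-\frac{1-F(v)}{f(v)}$ and $\phi_S(c)=c+\frac{G(c)}{g(c)}$, where $F,G$ are the value and cost distributions with densities $f,g$. Your constant $K$ is exactly this integral at the efficient rule; in your instance it is $\E[(2v-1-2c)\mathbf{1}\{v\ge c\}]=-\tfrac16$. Evaluating $K$ through VCG avoids virtual values and identifies the shortfall as the VCG deficit. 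The virtual-value inequality, in contrast, applies to every allocation rule, so it also describes the feasible second-best.

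You should state one hypothesis explicitly rather than leave it inside your ``secondary care point.'' Step~3 uses IR type by type (interim, or ex-post), namely $U_B(0)\ge 0$ and $U_S(1)\ge 0$. The monotonicity of $U_B$ and $U_S$ is not needed for this direction. Interim IR is the correct reading of \cite{MS83}. However, the Bayesian appendix of this paper defines only \emph{ex-ante} IR, and under that notion the statement is false. Ex-ante IR only yields $U_B(0)\ge-\tfrac16$ and $U_S(1)\ge-\tfrac16$, which allows $\E[p-q]$ up to $\tfrac16$. Concretely, the expected-externality (AGV) mechanism, combined with a lump-sum transfer that splits the expected gains from trade between the two agents, is efficient, BIC, budget balanced, and ex-ante IR.
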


All the missing proofs can be found in ~\cref{app:seller_truthful}.

\subsection{Model Setup}

We consider a single-round matching market with $n$ buyers and $m$ sellers. Buyer $i$'s value and seller $j$'s cost are independently drawn from publicly known distributions, denoted by $v_i\sim D_i$ and $c_j\sim S_j$. A mechanism takes buyers' bids $\bids=(b_1,\dots,b_n)$ and sellers' bids $\sbids=(s_1,\dots,s_m)$ as input and outputs an allocation $\alloc$ and payments $\pb,\ps$. Here, $x_{ij}$ denotes whether buyer $i$ trades with seller $j$, $p_i^b$ denotes the payment collected from buyer $i$, and $p_j^s$ denotes the payment paid to seller $j$.

Let $E=\{(i,j):i\in[n],j\in[m]\}$ be the set of feasible buyer-seller pairs. We use $\mathcal F\subseteq 2^E$ to denote the feasible sets of trades that can be implemented simultaneously. We assume that every feasible set is a matching and that $\mathcal F$ is downward closed: if $A\in\mathcal F$ and $A'\subseteq A$, then $A'\in\mathcal F$. This includes the standard double-auction setting as a special case. We also fix a deterministic tie-breaking rule for maximum-weight feasible matchings.

For a value-maximizing buyer, the objective is to maximize expected received value subject to an expected RoS constraint. Since we use normalized units, the RoS target is one. Let $O_i$ denote buyer $i$'s optimal expected value:
\begin{align}
    O_i
    ={}&\max_f\ \E_{v_i,\val_{-i},\cost}\left[v_i\sum_j x_{ij}(f(v_i),\val_{-i},\cost)\right] \nonumber\\
    &\text{s.t. }\E_{v_i,\val_{-i},\cost}\left[v_i\sum_j x_{ij}(f(v_i),\val_{-i},\cost)-p_i^b(f(v_i),\val_{-i},\cost)\right]\ge 0 . \tag{RoS}
\end{align}
Here $f:\R_{\ge0}\to\Delta(\R_{\ge0})$ is an arbitrary randomized bidding strategy; we do not restrict attention to uniform bid scalings. A mechanism is vBIC for value-maximizing buyers if, for every buyer $i$,
\begin{align}\label{eq:buyer-vbic}
    O_i=
    \E_{\val,\cost}\left[v_i\sum_j x_{ij}(v_i,\val_{-i},\cost)\right]. \tag{vBIC-B}
\end{align}

For utility-maximizing agents, we use the standard notions of DSIC, BIC, IR, and ex-ante WBB from the two-sided-market literature. Formal definitions are stated in \cref{app:prelimBays}.

\begin{remark}[Conditional versus global truthfulness for value maximizers]
For utility-maximizers, conditioning on the other side's realized reports gives a stronger IC requirement: a threshold-paid seller is DSIC profile by profile, and hence also BIC after averaging. For value maximizers the comparison is different because IC is defined through a constrained optimization problem. If we fix a seller-cost state \(s\) and require the RoS constraint to hold conditional on that same \(s\), then we shrink the set of feasible deviations relative to the global definition, where RoS is required only in expectation over both buyers' values and sellers' costs. Thus conditioning on \(s\) can make value-maximizer truthfulness easier, not harder. 
\end{remark}

\subsection{The Mechanism}

At a high level, the mechanism computes the first-best normalized-welfare matching. Utility-maximizing sellers are paid their procurement-side threshold payments, which gives DSIC for those sellers. Value-maximizing buyers are charged their buyer-side threshold payments scaled up by an agent-specific ex-ante constant so that truthful bidding makes each buyer's global RoS constraint bind.

For a fixed allocation rule $\alloc^*$, let $q_i^B(\bids,\sbids)$ denote the buyer-side Myerson threshold payment for buyer $i$, and let $q_j^S(\bids,\sbids)$ denote the procurement-side Myerson threshold payment for seller $j$. We write $x_i(\bids,\sbids)=\sum_jx_{ij}(\bids,\sbids)$ and $x_j(\bids,\sbids)=\sum_ix_{ij}(\bids,\sbids)$.

\begin{mechanism}\label{mech:onedimension}
Given bids \((\bids,\sbids)\), assign weight \(w_{ij}(b_i,s_j)=b_i-s_j\) to each feasible edge \((i,j)\in E\).
\begin{enumerate}
    \item Compute a maximum-weight feasible matching  using the fixed tie-breaking rule.
    \[
        \alloc^*(\bids,\sbids)
        \in
        \argmax_{\alloc\in\mathcal F}
        \sum_{i,j}x_{ij}(b_i-s_j),
    \]

    \item For every utility-maximizing seller \(j\), set the seller payment to the procurement-side threshold payment:
    \[
        p_j^s(\bids,\sbids)=q_j^S(\bids,\sbids).
    \]

    \item For every value-maximizing buyer \(i\), set
    \begin{equation}\label{eq:buyerpayment}
        p_i^b(\bids,\sbids)
        =
        \frac{1+\zeta_i}{\zeta_i}\,q_i^B(\bids,\sbids),
    \end{equation}
    where \(q_i^B\) is the buyer-side threshold payment and \(\zeta_i>0\) is the ex-ante constant chosen so that truthful reporting makes buyer \(i\)'s RoS constraint bind:
    \begin{equation}\label{eq:buyerexpectedpayment}
       \E_{\val,\cost}\!\left[v_i x_i^*(\val,\cost)\right]
       =
       \E_{\val,\cost}\!\left[p_i^b(\val,\cost)\right].
    \end{equation}
\end{enumerate}
\end{mechanism}

\begin{remark}
    \cref{mech:onedimension} is stated for value-maximizing buyers and utility-maximizing sellers. The symmetric one-sided result, with utility-maximizing buyers and value-maximizing sellers, is obtained by applying the same scaling idea to the seller side and using the standard buyer-side threshold payments for utility-maximizing buyers.
\end{remark} 

We first show that the buyer payment rule, together with the efficient allocation, is vBIC-B. The proof uses the payment-scaled threshold-payment argument of \cite{balseiro2021landscape}, adapted to the global expectation over both values and costs.

\begin{lemma}\label{lem:bicb}
For value-maximizing buyers, the allocation rule $\alloc^*$ together with the buyer payments in \cref{eq:buyerpayment} is vBIC-B under the global RoS definition in \cref{eq:buyer-vbic}.
\end{lemma}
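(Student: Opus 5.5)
The plan is to reduce buyer $i$'s constrained problem to a Lagrangian argument over interim allocation and payment functions, following \cite{balseiro2021landscape}. Fix buyer $i$ and suppose every other agent reports truthfully. For a bid $b$, define the interim quantities
\[
X_i(b)=\E_{\val_{-i},\cost}\bigl[x_i^*(b,\val_{-i},\cost)\bigr],
\qquad
Q_i(b)=\E_{\val_{-i},\cost}\bigl[q_i^B(b,\val_{-i},\cost)\bigr].
\]
The allocation $\alloc^*$ is a maximum-weight feasible matching with weights $b_i-s_j$. Raising $b_i$ raises only the weights of buyer $i$'s edges, so $x_i^*$ is monotone nondecreasing in $b_i$ profile by profile. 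Hence $q_i^B$ is the usual Myerson threshold payment: it equals the critical bid times $x_i^*$. Averaging over $(\val_{-i},\cost)$, the pair $(X_i,Q_i)$ is a monotone allocation with Myerson payments. This gives, for every $v$ and every $b$,
\[
vX_i(v)-Q_i(v)\ \ge\ vX_i(b)-Q_i(b).
\]
Any randomized strategy $f$ induces $\E[vX_i(f(v))]$ and $\E[Q_i(f(v))]$ by linearity, so it suffices to handle deterministic bids pointwise inside the expectation.

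The key step is a weighted-sum argument. Buyer $i$'s payment is $\frac{1+\zeta_i}{\zeta_i}Q_i$. For any strategy $f$, the objective plus $\zeta_i$ times the RoS slack is
\[
\E\bigl[vX_i(f(v))\bigr]+\zeta_i\,\E\Bigl[vX_i(f(v))-\tfrac{1+\zeta_i}{\zeta_i}Q_i(f(v))\Bigr]
=(1+\zeta_i)\,\E\bigl[vX_i(f(v))-Q_i(f(v))\bigr].
\]
By the pointwise utility-IC inequality, this is maximized by truthful bidding $f(v)=v$. Now take any RoS-feasible $f$. Its slack is nonnegative, while by \eqref{eq:buyerexpectedpayment} truthful bidding has slack exactly zero. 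Therefore
\[
\E\bigl[vX_i(f(v))\bigr]\le (1+\zeta_i)\,\E\bigl[vX_i(f(v))-Q_i(f(v))\bigr]\le(1+\zeta_i)\,\E\bigl[vX_i(v)-Q_i(v)\bigr]=\E\bigl[vX_i(v)\bigr].
\]
Truthful bidding is itself feasible, so $O_i=\E[v_ix_i^*]$, which is \eqref{eq:buyer-vbic}.

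Two technical points need attention.

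First, I must check that a $\zeta_i>0$ solving \eqref{eq:buyerexpectedpayment} exists. Set $A=\E[v_ix_i^*]$ and $B=\E[q_i^B]$. Since the threshold payment never exceeds the bid, $B\le A$. The map $\zeta\mapsto\frac{1+\zeta}{\zeta}B$ is continuous and decreasing from $\infty$ to $B$, so whenever $0<B<A$ there is a unique $\zeta_i=B/(A-B)>0$. In the degenerate cases, either $B=A$ (no buyer surplus) or $B=0$ with $A>0$, I would let $\zeta_i\to\infty$ or treat $\zeta_i=0$ as a limit. In the latter case the buyer can win whenever she trades at zero threshold, and truthful bidding already wins in every profile where winning is possible, so \eqref{eq:buyer-vbic} holds directly.

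Second, I must handle ties in the matching through the fixed tie-breaking rule, so that monotonicity, and hence the threshold characterization, is exact. I expect this monotonicity check under general downward-closed $\mathcal F$ to be the main obstacle, though a standard exchange argument on max-weight matchings should settle it.
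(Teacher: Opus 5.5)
Your main argument is correct and is essentially the paper's proof. The paper's scaled-Myerson inequality $V_i^T-\frac{\zeta_i}{1+\zeta_i}P_i^T\ge V_i^f-\frac{\zeta_i}{1+\zeta_i}P_i^f$, multiplied by $1+\zeta_i$, is exactly your comparison of ``objective plus $\zeta_i$ times RoS slack.'' Both arguments then close with $P_i^T=V_i^T$ and $P_i^f\le V_i^f$.

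Your side remarks need correcting, although the lemma needs none of them. Like the paper's proof, the lemma takes a $\zeta_i>0$ satisfying \eqref{eq:buyerexpectedpayment} as part of the mechanism. The degenerate cases cannot be patched as you suggest; they must be excluded by a nondegeneracy assumption.

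\emph{The case $B=0<A$.} Take one seller whose cost is $0$ or $3/2$ with equal probability, and $v_i\sim U[0,1]$. Then $B=0<A=1/4$, and truthful bidding never trades at cost $3/2$, although bidding $3/2$ would trade there. For every finite $\zeta_i$, types in a small interval near $1$ can bid $3/2$, spend part of the slack $1/4$, and strictly raise the buyer's value while staying RoS-feasible. The $\zeta_i\to0$ limit would require infinite payments, so it does not define a valid mechanism.

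\emph{The case $A=B$.} Here the multiplier-one payment gives only quasi-linear IC with zero surplus, which does not imply value-IC. Take $v_i\equiv1$, two sellers with i.i.d.\ costs in $\{1,100\}$, and the tie-breaking order $\{(i,1)\}\succ\emptyset\succ\{(i,2)\}$. Then $A=B=1/2$. Bidding slightly above $1$ wins with probability $3/4$, at payment equal to value, so RoS still holds and her value rises.

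\emph{Monotonicity.} The step you flag as the main obstacle is immediate. If $i$ is matched at $b_i$, raising her bid to $b_i'>b_i$ increases that matching's weight by $b_i'-b_i$. Every matching avoiding $i$ keeps weight at most the old optimum. Hence every maximizer at $b_i'$ contains $i$, whatever the tie-breaking rule.
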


\begin{proof}
Fix an active value-maximizing buyer \(i\); inactive buyers with zero expected allocation and payment are trivial. Let \(q_i^B\) be the buyer-side threshold payment for the allocation rule \(\alloc^*\). By \cref{mech:onedimension},
\[
    p_i^b(\bids,\sbids)
    =
    \frac{1+\zeta_i}{\zeta_i}q_i^B(\bids,\sbids),
\]
where \(\zeta_i>0\) is chosen so that truthful reporting makes buyer \(i\)'s RoS constraint bind:
\[
    \E_{\val,\cost}\!\left[p_i^b(\val,\cost)\right]
    =
    \E_{\val,\cost}\!\left[v_i x_i^*(\val,\cost)\right].
\]

Let \(T\) denote truthful reporting, and let \(f\) be any randomized deviation by buyer \(i\). Define
\[
    V_i^f
    :=
    \E_{v_i,\val_{-i},\cost}
    \!\left[v_i x_i^*(f(v_i),\val_{-i},\cost)\right],
    \qquad
    P_i^f
    :=
    \E_{v_i,\val_{-i},\cost}
    \!\left[p_i^b(f(v_i),\val_{-i},\cost)\right].
\]
Define \(V_i^T\) and \(P_i^T\) analogously for truthful reporting.

Since \(q_i^B\) is the buyer-side threshold payment for the monotone allocation rule \(\alloc^*\), Myerson's lemma implies that truthful reporting maximizes quasi-linear utility with respect to the unscaled threshold payment. Hence, for every deterministic deviation, and therefore by linearity for every randomized deviation,
\[
    V_i^T
    -
    \E\!\left[q_i^B(T)\right]
    \ge
    V_i^f
    -
    \E\!\left[q_i^B(f)\right].
\]
Using \(q_i^B=\frac{\zeta_i}{1+\zeta_i}p_i^b\), this becomes
\begin{equation}
\label{eq:buyer_scaled_myerson}
    V_i^T
    -
    \frac{\zeta_i}{1+\zeta_i}P_i^T
    \ge
    V_i^f
    -
    \frac{\zeta_i}{1+\zeta_i}P_i^f .
\end{equation}

By construction, truthful reporting makes the RoS constraint bind, so \(P_i^T=V_i^T\). Therefore the left-hand side of \cref{eq:buyer_scaled_myerson} is
\[
    V_i^T-\frac{\zeta_i}{1+\zeta_i}V_i^T
    =
    \frac{1}{1+\zeta_i}V_i^T.
\]
Now suppose \(f\) is globally RoS-feasible, so \(P_i^f\le V_i^f\). Then the right-hand side of \cref{eq:buyer_scaled_myerson} is at least
\[
    V_i^f-\frac{\zeta_i}{1+\zeta_i}V_i^f
    =
    \frac{1}{1+\zeta_i}V_i^f.
\]
Thus
\[
    \frac{1}{1+\zeta_i}V_i^T
    \ge
    \frac{1}{1+\zeta_i}V_i^f,
\]
and since \(\zeta_i>0\), we get \(V_i^T\ge V_i^f\). Therefore truthful reporting maximizes buyer \(i\)'s expected value among all globally RoS-feasible deviations. This proves vBIC-B.
\end{proof}

We next establish ex-ante budget balance. The key observation is that a utility-maximizing seller's procurement threshold payment is no larger than the value of the buyer to whom she is matched.

\begin{lemma}\label{lem:seller-threshold-bound}
Fix any realized profile $(\val,\cost)$. If seller $j$ is matched to buyer $i$ by $\alloc^*(\val,\cost)$, then her procurement-side threshold payment satisfies
\[
    q_j^S(\val,\cost)\le v_i.
\]
If seller $j$ is unmatched, then $q_j^S(\val,\cost)=0$.
\end{lemma}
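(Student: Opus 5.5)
We argue directly from the definition of the procurement-side threshold payment for the monotone allocation rule $\alloc^*$: with all other reports fixed at $(\val_{-},\cost_{-j})$ (in particular buyer $i$'s report $v_i$), $q_j^S(\val,\cost)=\sup\{s: y_j(\val,s,\cost_{-j})=1\}$ if seller $j$ trades at $c_j$, and $0$ otherwise. The unmatched case is then immediate from the zero-transfer convention for nontrading sellers. For the matched case, it suffices to show that every ask $s>v_i$ makes seller $j$ untraded. Then the set of winning asks lies in $[0,v_i]$, so its supremum is at most $v_i$.

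\emph{Step 1.} Fix $s>v_i$ and suppose, for contradiction, that $\alloc^*(\val,s,\cost_{-j})$ matches seller $j$ to some buyer $i'$. The partner need not be $i$, since changing $s$ can reshuffle the matching. So we need a bound valid for every partner.

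\emph{Step 2.} Use the exchange argument on the initial allocation. At the truthful profile, $j$ is matched to $i$ with $v_i-c_j\ge 0$. This holds because removing the edge $(i,j)$ gives a feasible set by downward closure, so an optimal matching contains no negative-weight edge; with tie-breaking, zero-weight edges are harmless for the argument. The key observation is that in any maximum-weight matching, every matched edge has nonnegative weight, for the same downward-closure reason. So if $j$ is matched to $i'$ at ask $s$, we get $v_{i'}\ge s>v_i$.

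\emph{Step 3.} It remains to rule out this case, where $j$ is matched to a higher buyer $i'$ at the larger ask $s$. Here I would use monotonicity of the allocation in $s$ together with the threshold characterization. The seller-side allocation of $\alloc^*$ is monotone: raising the ask only lowers the weight of all edges at $j$ uniformly, so $j$ trades for asks below a threshold and not above it. The threshold equals the difference between the optimal weight without $j$ and the optimal weight of the other edges in a matching that includes $j$, namely
\[
q_j^S=\max_{e\ni j}\bigl(\mathrm{OPT}\text{ with }j\text{ forced on edge }e \text{ excluding }c_j\bigr)-\mathrm{OPT}_{-j},
\]
that is, $\max_{i'}\bigl[v_{i'}+\mathrm{OPT}(\mathcal F\text{ restricted},\ (i',j)\in A)-\mathrm{OPT}_{-j}\bigr]$. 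This is a VCG-style critical value, and it can exceed $v_i$ when $j$ would switch to a higher buyer $i'$.

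This is the main obstacle: the naive bound fails in general, because the threshold can be as large as some $v_{i'}$ rather than $v_i$. I therefore expect the intended statement to need, and the proof to use, the fact that the winning partner at the threshold ask must be $i$ itself. I would try to show this by a fixed-tie-breaking consistency argument. As $s$ increases from $c_j$, the matching restricted to edges at $j$ changes only when $j$ loses, because all edges at $j$ decrease equally. Hence the relative comparison between matchings that both include $j$ is invariant in $s$. Consequently, the best matching containing $j$ stays the same matching, with $j$ still matched to $i$, until it is overtaken by a matching without $j$. In that best matching $j$ is matched to $i$, and nonnegativity of its edge weight at threshold gives $q_j^S\le v_i$.
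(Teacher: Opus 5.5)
Your closing argument is correct and is essentially the paper's proof. Raising the ask lowers every $j$-containing matching by the same amount, so $M$ stays the heaviest among them, and the downward-closed competitor $M\setminus\{(i,j)\}$, whose weight does not depend on $j$'s report, rules out every ask $s>v_i$. Equivalently, at any winning ask $M$ is still optimal, so $v_i-s\ge 0$.

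However, your intermediate claim that the threshold ``can exceed $v_i$'' when $j$ switches to a higher buyer $i'$ is false, and your own invariance observation refutes it. Let $W$ be the optimal truthful weight. The maximum in your VCG-style formula is attained by $M$, so $q_j^S=W+c_j-\mathrm{OPT}_{-j}\le v_i$, because $\mathrm{OPT}_{-j}\ge W-(v_i-c_j)$. Hence the case in your Step~3 never needs separate treatment, and no tie-breaking or partner-identity argument is needed.
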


\begin{proof}
Fix a profile \((\val,\cost)\), and suppose seller \(j\) is matched to buyer \(i\) in
\(M=\alloc^*(\val,\cost)\). Let
\[
    W
    =
    \sum_{(a,\ell)\in M}(v_a-c_\ell)
\]
be the maximum weight at the truthful profile.

The procurement threshold for seller \(j\) is the highest report at which seller \(j\) can remain allocated. Suppose seller \(j\) raises her reported cost from \(c_j\) to \(s_j=c_j+\Delta\). Every feasible matching that includes seller \(j\) loses exactly \(\Delta\) in weight, while every matching that excludes seller \(j\) is unchanged. Since \(M\) is optimal and includes \(j\), the best matching that includes \(j\) after this deviation has weight at most \(W-\Delta\).

Because \(\mathcal F\) is downward closed, removing edge \((i,j)\) from \(M\) gives a feasible matching that excludes \(j\) and has truthful weight
\[
    W-(v_i-c_j).
\]
This matching is unaffected by seller \(j\)'s report. Hence, if
\[
    \Delta>v_i-c_j,
\]
then some matching excluding \(j\) has strictly larger weight than every matching including \(j\). Equivalently, seller \(j\) cannot remain allocated at any report \(s_j>v_i\). Therefore her procurement-side threshold payment is at most \(v_i\). If \(j\) is unmatched, the threshold payment is zero by the normalization of the procurement payment rule.
\end{proof}

\begin{lemma}\label{lem:bb}
With value-maximizing buyers and utility-maximizing sellers, \cref{mech:onedimension} is ex-ante WBB.
\end{lemma}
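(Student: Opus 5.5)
Ex-ante WBB means $\E_{\val,\cost}\bigl[\sum_i p_i^b(\val,\cost)\bigr]\ge \E_{\val,\cost}\bigl[\sum_j p_j^s(\val,\cost)\bigr]$ under truthful reporting. I will compare the two sides matching edge by edge and then take expectations. On the buyer side I will not use the structure of the scaled threshold payment \cref{eq:buyerpayment} at all. The only fact needed is the binding constraint \cref{eq:buyerexpectedpayment}, which holds by the choice of $\zeta_i$:
\[
\E\Bigl[\sum_i p_i^b(\val,\cost)\Bigr]=\sum_i \E\bigl[v_i x_i^*(\val,\cost)\bigr]=\E\Bigl[\sum_{(i,j)\in \alloc^*(\val,\cost)} v_i\Bigr],
\]
using that $\alloc^*$ is a matching, so $x_i^*=\sum_j x_{ij}^*\in\{0,1\}$. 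Expected buyer payments therefore equal the expected total value of matched buyers.

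On the seller side I apply \cref{lem:seller-threshold-bound} pointwise. For every realized profile, an unmatched seller receives $q_j^S=0$. A seller $j$ matched to buyer $i$ receives $q_j^S(\val,\cost)\le v_i$. Since each matched seller has a unique partner and each matched buyer has a unique partner,
\[
\sum_j p_j^s(\val,\cost)=\sum_{(i,j)\in\alloc^*(\val,\cost)} q_j^S(\val,\cost)\le \sum_{(i,j)\in\alloc^*(\val,\cost)} v_i
\]
holds ex post. Taking expectations over $(\val,\cost)$ and combining with the buyer-side identity gives ex-ante WBB.

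The main step that needs care is justifying that $\zeta_i>0$ satisfying \cref{eq:buyerexpectedpayment} exists for every active buyer. Otherwise the buyer-side identity is unavailable, while inactive buyers contribute zero to both sides. The map $\zeta\mapsto \frac{1+\zeta}{\zeta}\E[q_i^B]$ is continuous and decreasing, tends to $\infty$ as $\zeta\to0$, and tends to $\E[q_i^B]$ as $\zeta\to\infty$. Ex-post IR of threshold payments gives $\E[q_i^B]\le \E[v_ix_i^*]$. So a solution exists as long as the inequality is strict, i.e.\ $\E[v_i x_i^* - q_i^B]>0$. This is the buyer's Myerson information rent, which is positive whenever the allocation is non-degenerate. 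If the rent is zero, $q_i^B=v_ix_i^*$ almost surely, and the unscaled threshold payment already gives the same equality, taken as the limiting case $\zeta_i=\infty$. Either way $\E[p_i^b]=\E[v_ix_i^*]$, so the argument goes through. Beyond this, the proof is just linearity of expectation plus the pointwise bound of \cref{lem:seller-threshold-bound}.
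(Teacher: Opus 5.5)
Your proof is correct and is essentially the paper's argument: the binding condition \cref{eq:buyerexpectedpayment} turns expected buyer payments into the expected value of matched buyers, \cref{lem:seller-threshold-bound} bounds total seller payments by that same quantity profile by profile, and linearity of expectation finishes. One caveat concerns your side remark on the existence of $\zeta_i$: the claim that $\frac{1+\zeta}{\zeta}\E[q_i^B]\to\infty$ as $\zeta\to0$ requires $\E[q_i^B]>0$. If buyer $i$ trades with positive probability but always at threshold $0$ (e.g.\ one buyer with $v\sim U[0,1]$ and one seller with cost identically $0$, where $q^B\equiv 0$ but $q^S=v$), then no $\zeta_i$ exists and WBB genuinely fails for every choice of $\zeta$. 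So the existence of $\zeta_i$ is a standing hypothesis built into the mechanism, which the paper implicitly assumes, rather than something that ``goes through either way.''
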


\begin{proof}
By \cref{lem:seller-threshold-bound}, for every realized profile $(\val,\cost)$,
\begin{align}\label{eq:seller-payment-upper-bound}
    \sum_j p_j^s(\val,\cost)
    =\sum_j q_j^S(\val,\cost)
    \le \sum_{i,j}x_{ij}^*(\val,\cost)v_i .
\end{align}
Taking expectations and using \cref{eq:buyerexpectedpayment},
\begin{align*}
    \E_{\val,\cost}\left[\sum_i p_i^b(\val,\cost)\right]
    =\E_{\val,\cost}\left[\sum_i v_i x_i^*(\val,\cost)\right]
    =\E_{\val,\cost}\left[\sum_{i,j}x_{ij}^*(\val,\cost)v_i\right]
    \ge \E_{\val,\cost}\left[\sum_j p_j^s(\val,\cost)\right]. \quad \qedhere
\end{align*}
\end{proof}

We are now ready to state the main theorem of this section. 

\begin{theorem}\label{thm:gftoneside}
Given any matching market instance with value-maximizing buyers and utility-maximizing sellers, \cref{mech:onedimension} achieves the optimal LGFT and liquid welfare ex post. It is vBIC-B, BIC-S, IR, and ex-ante WBB.
\end{theorem}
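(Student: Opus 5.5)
The proof assembles the pieces already established, checking each property of \cref{thm:gftoneside} in turn.

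\emph{Efficiency.} Under truthful reports, \cref{mech:onedimension} selects $\alloc^*(\val,\cost)\in\argmax_{\alloc\in\mathcal F}\sum x_{ij}(v_i-c_j)$ pointwise. In normalized units this objective is exactly LGFT. Since $\wel=\sum_j c_j+\gft$, the same allocation also maximizes liquid welfare ex post. For this step we only need that truthful reporting is an equilibrium, which the incentive properties below supply.

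\emph{Incentives.} vBIC-B is \cref{lem:bicb}. For BIC-S (indeed DSIC), fix any bids of the others. Seller $j$'s allocation under $\alloc^*$ is monotone nonincreasing in $s_j$: raising $s_j$ by $\Delta$ lowers the weight of every matching containing $j$ by $\Delta$ and leaves other matchings unchanged, so with consistent tie-breaking, if $j$ is excluded at $s_j$ she is excluded at any higher ask. She is paid the procurement threshold $q_j^S$. By Myerson's lemma on the procurement side, truthful reporting is dominant for a quasi-linear seller, and averaging over buyers' values gives BIC.

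\emph{IR.} The threshold payment gives a traded seller $q_j^S\ge c_j$, and an untraded seller gets zero, so seller IR holds ex post. For value-maximizing buyers, IR means the RoS constraint is satisfied under truthful reporting. This holds with equality by the choice of $\zeta_i$ in \cref{eq:buyerexpectedpayment}.

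\emph{Well-definedness of $\zeta_i$.} The remaining point, and the main obstacle, is that a finite $\zeta_i>0$ exists. Since $q_i^B\le v_i x_i^*$ pointwise (threshold at most the bid when allocated), we have $\E[q_i^B]\le\E[v_ix_i^*]$. Whenever $\E[v_i x_i^*]>0$, the threshold is strictly below $v_i$ on a set of positive measure, for instance for non-atomic values. In that case $\E[q_i^B]<\E[v_ix_i^*]$, and the choice
\[
\frac{1+\zeta_i}{\zeta_i}=\frac{\E[v_ix_i^*]}{\E[q_i^B]}>1
\]
yields a unique $\zeta_i\in(0,\infty)$. Two edge cases remain. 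If the buyer is inactive, she is trivial. If $\E[q_i^B]=0$ while $\E[v_ix_i^*]>0$, we would instead set the buyer payment to $v_i x_i^*$ directly. The charge then equals the value she receives, and the RoS constraint already makes the value-maximizing buyer indifferent in the limiting sense of \cref{lem:bicb}. I expect to handle this by the limiting argument or by a mild no-atom assumption.

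\emph{Budget.} Ex-ante WBB is \cref{lem:bb}, and that completes the proof.
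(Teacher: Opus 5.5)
Your proof is correct and follows the paper's own route. Efficiency comes from the pointwise max-weight allocation, with the constant seller-endowment term giving liquid welfare. Seller DSIC and IR come from procurement threshold payments, vBIC-B from \cref{lem:bicb}, buyer IR from the tightness condition \cref{eq:buyerexpectedpayment}, and ex-ante WBB from \cref{lem:bb}. Your seller-monotonicity check and your existence argument for $\zeta_i$ are extra detail that the paper leaves implicit, since it simply treats $\zeta_i>0$ as part of the mechanism's definition.

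One caution on that extra paragraph: drop the fallback of charging $v_ix_i^*$ when $\E[q_i^B]=0$. That is a different (pay-your-bid) mechanism, and it is not vBIC in general. In that case her threshold is $0$ whenever she trades, so she can shade every bid to nearly $0$ without losing a trade. She then banks the RoS slack and spends it overbidding to win extra states; for example, take one buyer with $v\sim U[0,1]$ and one seller with cost $0$ or $10$ equally likely. Instead, exclude that case by assumption, e.g.\ seller costs with no atom at $0$. This makes a trading buyer's threshold at least her cheapest feasible partner's cost, and hence $\E[q_i^B]>0$.
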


\begin{proof}
The allocation rule in \cref{mech:onedimension} maximizes
\[
    \sum_{i,j}x_{ij}(v_i-c_j)
\]
at every realized profile, so it achieves first-best LGFT ex post. Since the initial seller-endowment term is constant, the same allocation also maximizes liquid welfare ex post.

Utility-maximizing sellers are paid procurement-side threshold payments, so they are DSIC, and hence BIC-S. Seller IR follows from the standard procurement-threshold payment rule. Buyer vBIC-B follows from \cref{lem:bicb}. Buyer ex-ante IR follows from the payment-tightness condition in \cref{eq:buyerexpectedpayment}, which gives
\[
    \E_{\val,\cost}\!\left[v_i x_i^*(\val,\cost)-p_i^b(\val,\cost)\right]=0
\]
for every active value-maximizing buyer \(i\), in normalized units. Finally, ex-ante WBB follows from \cref{lem:bb}.
\end{proof}

\subsection{Value-Maximizing Sellers}

We now extend the construction to markets that may include value-maximizing sellers.
The allocation rule and buyer payment rule remain unchanged.
Utility-maximizing sellers continue to receive procurement-side threshold payments.

For every value-maximizing seller \(j\), set
\begin{equation}\label{eq:value-seller-payment}
    p_j^s(\bids,\sbids)
    =
    \frac{\theta_j}{1+\theta_j}\,q_j^S(\bids,\sbids),
\end{equation}
where \(q_j^S\) is the procurement-side threshold payment and \(\theta_j>0\) is the ex-ante constant chosen so that truthful reporting makes seller \(j\)'s RoC constraint bind:
\begin{equation}\label{eq:value-seller-tightness}
    \E_{\val,\cost}\!\left[p_j^s(\val,\cost)\right]
    =
    \E_{\val,\cost}\!\left[c_jx_j^*(\val,\cost)\right].
\end{equation}

\begin{theorem}\label{thm:bayesian-both-side}
Under the nondegeneracy assumptions above, the extended mechanism is vBIC-B for value-maximizing buyers, vBIC-S for value-maximizing sellers, BIC-S for utility-maximizing sellers, IR, and ex-ante WBB.
If all sellers are value maximizers, then the broker's expected profit equals the expected first-best LGFT.
\end{theorem}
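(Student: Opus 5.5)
We treat the buyer side and the seller side separately and then combine them through a single ex-ante accounting identity. The allocation rule and the buyer payments are unchanged from \cref{mech:onedimension}. Hence vBIC-B and buyer ex-ante IR follow exactly as in \cref{lem:bicb} and \cref{thm:gftoneside}. The only difference is that $q_i^B$ is now the threshold of $\alloc^*$ with some sellers reporting differently; the argument there used only monotonicity of $\alloc^*$ in $b_i$ and Myerson's lemma, both of which still hold. Likewise, utility-maximizing sellers still receive $q_j^S$, so BIC-S and seller IR are inherited.

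For vBIC-S we mirror \cref{lem:bicb} on the procurement side. Truthful reporting maximizes the quasi-linear seller utility $\E[q_j^S]-\E[c_jx_j^*]$ against any (randomized) deviation $g$, by Myerson's lemma for the monotone procurement rule. Write $Q^g=\E[p_j^s(g)]$ and $C^g=\E[c_jx_j^*(g)]$, so that $q_j^S=\frac{1+\theta_j}{\theta_j}p_j^s$. The Myerson inequality becomes
\[
\tfrac{1+\theta_j}{\theta_j}Q^T-C^T\ge \tfrac{1+\theta_j}{\theta_j}Q^g-C^g .
\]
Tightness \cref{eq:value-seller-tightness} gives $Q^T=C^T$, so the left-hand side equals $C^T/\theta_j$. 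RoC-feasibility of $g$ means $Q^g\ge C^g$, so the right-hand side is at least $C^g/\theta_j$. Hence $C^T\ge C^g$. The ``nondegeneracy assumptions'' are used only to guarantee that a finite $\theta_j>0$ exists. This requires $\E[q_j^S]>\E[c_jx_j^*]$ strictly for every active value seller, which is where I expect to have to say something: strict inequality holds whenever the threshold strictly exceeds the cost with positive probability. Seller IR in the RoC sense is then exactly tightness.

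For ex-ante WBB, I will bound total seller payments pointwise. Utility sellers satisfy $q_j^S\le v_i$ for the matched buyer $i$ by \cref{lem:seller-threshold-bound}. A value seller is paid $p_j^s\le q_j^S\le v_i$ as well, since $\theta_j/(1+\theta_j)<1$. Summing over matched pairs and taking expectations gives $\E[\sum_jp_j^s]\le\E[\sum_i v_ix_i^*]=\E[\sum_ip_i^b]$, using \cref{eq:buyerexpectedpayment} as in \cref{lem:bb}. The point to check here is that \cref{lem:seller-threshold-bound} is stated at truthful profiles. Since we evaluate at $(\val,\cost)$ with truthful reports on both sides, it applies directly.

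For the profit identity when all sellers are value maximizers, buyer tightness gives $\E[\sum_ip_i^b]=\E[\sum_{i,j}x^*_{ij}v_i]$, and seller tightness \cref{eq:value-seller-tightness} gives $\E[\sum_jp_j^s]=\E[\sum_{i,j}x^*_{ij}c_j]$. Subtracting, the expected profit is $\E[\sum_{i,j}x^*_{ij}(v_i-c_j)]$, which is the expected first-best LGFT in normalized units because $\alloc^*$ is the max-weight matching. The main obstacle is the existence and positivity of the scaling constants $\zeta_i,\theta_j$ (the nondegeneracy step); everything else is a direct reuse of the one-sided lemmas.
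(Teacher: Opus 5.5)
Your proposal is correct and follows the paper's proof essentially step for step. The buyer-side and utility-seller guarantees are inherited. vBIC-S comes from the scaled Myerson inequality combined with the binding RoC constraint. Ex-ante WBB reduces to the one-sided bound via $p_j^s\le q_j^S\le v_i$. The profit identity comes from subtracting the two tightness equalities.

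Your remark on the nondegeneracy step is a reasonable reading of the paper's unstated assumption. A finite $\theta_j>0$ needs $0<\E[c_jx_j^*]<\E[q_j^S]$, and you stated only the upper inequality. The lower bound $\E[c_jx_j^*]>0$ is also needed, since otherwise $\theta_j=0$ and the factor $\frac{1+\theta_j}{\theta_j}$ is undefined.
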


\begin{proof}
Buyer vBIC-B is unchanged from \cref{lem:bicb}. Utility-maximizing sellers, if present, are paid procurement-side threshold payments and are therefore DSIC and IR.

It remains to verify vBIC-S for value-maximizing sellers. Fix an active value-maximizing seller \(j\); inactive sellers with zero expected allocation and payment are trivial. Let \(q_j^S\) be the procurement-side threshold payment for the allocation rule \(\alloc^*\). The extended mechanism sets
\[
    p_j^s(\bids,\sbids)
    =
    \frac{\theta_j}{1+\theta_j}q_j^S(\bids,\sbids),
\]
where \(\theta_j>0\) is chosen so that truthful reporting makes seller \(j\)'s RoC constraint bind:
\[
    \E_{\val,\cost}\!\left[p_j^s(\val,\cost)\right]
    =
    \E_{\val,\cost}\!\left[c_jx_j^*(\val,\cost)\right].
\]

Let \(T\) denote truthful reporting, and let \(f\) be any randomized deviation by seller \(j\). Define
\[
    C_j^f
    :=
    \E_{\val,c_j,\cost_{-j}}
    \!\left[c_jx_j^*(\val,f(c_j),\cost_{-j})\right],
    \qquad
    P_j^f
    :=
    \E_{\val,c_j,\cost_{-j}}
    \!\left[p_j^s(\val,f(c_j),\cost_{-j})\right].
\]
Define \(C_j^T\) and \(P_j^T\) analogously for truthful reporting.

Since \(q_j^S\) is the procurement-side threshold payment for \(\alloc^*\), truthful reporting maximizes quasi-linear seller utility with respect to the unscaled threshold payment. Therefore, for every randomized deviation \(f\),
\[
    \E\!\left[q_j^S(T)\right]-C_j^T
    \ge
    \E\!\left[q_j^S(f)\right]-C_j^f.
\]
Using \(q_j^S=\frac{1+\theta_j}{\theta_j}p_j^s\), this becomes
\begin{equation}
\label{eq:seller_scaled_myerson}
    \frac{1+\theta_j}{\theta_j}P_j^T-C_j^T
    \ge
    \frac{1+\theta_j}{\theta_j}P_j^f-C_j^f.
\end{equation}

By construction, truthful reporting makes seller \(j\)'s RoC constraint bind, so \(P_j^T=C_j^T\). Hence the left-hand side of \cref{eq:seller_scaled_myerson} is
\[
    \frac{1+\theta_j}{\theta_j}C_j^T-C_j^T
    =
    \frac{1}{\theta_j}C_j^T.
\]
Now suppose \(f\) is globally RoC-feasible, so \(P_j^f\ge C_j^f\). Then the right-hand side of \cref{eq:seller_scaled_myerson} is at least
\[
    \frac{1+\theta_j}{\theta_j}C_j^f-C_j^f
    =
    \frac{1}{\theta_j}C_j^f.
\]
Thus
\[
    \frac{1}{\theta_j}C_j^T
    \ge
    \frac{1}{\theta_j}C_j^f,
\]
and since \(\theta_j>0\), we get \(C_j^T\ge C_j^f\). Therefore truthful reporting maximizes seller \(j\)'s expected traded cost among all globally RoC-feasible deviations. This proves vBIC-S.

Ex-ante WBB is preserved because every value-maximizing seller is paid a weakly smaller amount than the corresponding procurement-side threshold payment:
\[
    p_j^s(\bids,\sbids)
    =
    \frac{\theta_j}{1+\theta_j}q_j^S(\bids,\sbids)
    \le
    q_j^S(\bids,\sbids).
\]
Utility-maximizing sellers, if present, receive exactly \(q_j^S\). Thus total expected seller payments in the extended mechanism are no larger than in the one-sided mechanism of \cref{thm:gftoneside}, while buyer payments are unchanged. Since \cref{thm:gftoneside} establishes ex-ante WBB for the one-sided mechanism, the extended mechanism is also ex-ante WBB.

Finally, suppose all sellers are value maximizers. The buyer and seller scaling rules make all RoS and RoC constraints bind in expectation:
\[
    \E_{\val,\cost}\!\left[\sum_i p_i^b(\val,\cost)\right]
    =
    \E_{\val,\cost}\!\left[\sum_{i,j}x_{ij}^*(\val,\cost)v_i\right],
\qquad \text{and} \qquad
    \E_{\val,\cost}\!\left[\sum_j p_j^s(\val,\cost)\right]
    =
    \E_{\val,\cost}\!\left[\sum_{i,j}x_{ij}^*(\val,\cost)c_j\right].
\]
Therefore the broker's expected profit is
\[
    \E_{\val,\cost}\!\left[\sum_i p_i^b-\sum_j p_j^s\right]
    =
    \E_{\val,\cost}\!\left[
        \sum_{i,j}x_{ij}^*(\val,\cost)(v_i-c_j)
    \right],
\]
which is exactly the expected first-best LGFT in normalized units.
\end{proof}

\newpage

\bibliographystyle{alpha}
\bibliography{arxiv_revision/biblio}

\appendix
\crefalias{section}{appendix}
\crefalias{subsection}{appendix}

\section{Motivation for Value-Maximizing Sellers}
\label{app:value_seller_motivation}

A value-maximizing seller is a seller whose primary objective is not short-run quasi-linear profit, but rather the total volume, utilization, or cost-weighted amount of goods sold, subject to a revenue-over-cost constraint. 
This captures sellers who want to expand market share, maintain utilization, or grow a user base while ensuring that revenue covers costs. 
For example, an energy supplier may wish to maximize the amount of electricity sold in an emerging market while satisfying a break-even or revenue-adequacy constraint. 
A cloud or AI-service provider may initially prioritize usage and adoption, while requiring revenue to cover compute and operating costs. 
Similarly, a marketplace seller may temporarily prioritize sales volume or market penetration subject to not selling below an acceptable aggregate revenue-to-cost ratio. 
The RoC constraint plays the same role for sellers as the RoS constraint plays for buyers: it restricts operational feasibility, while the objective captures scale rather than quasi-linear surplus.


\section{Additional Bayesian Preliminaries and Supporting Lemmas}
\subsection{Preliminaries for the Bayesian Setting}\label{app:prelimBays}

\paragraph{Matching market.}
We consider a two-sided matching market with $n$ unit-demand buyers and $m$ unit-supply sellers. Let $v_i\in\mathcal V$ and $c_j\in\mathcal C$ denote buyer $i$'s private value and seller $j$'s private cost for one unit. Values and costs are independently drawn from publicly known distributions $D_i$ and $S_j$. The environment is single-dimensional on both sides.

Let $E=\{(i,j):i\in[n],j\in[m]\}$ be the set of possible trading pairs. A feasibility constraint $\mathcal F\subseteq2^E$ specifies which sets of pairs can be traded simultaneously. We assume that every $A\in\mathcal F$ is a matching and that $\mathcal F$ is downward closed: if $A\in\mathcal F$ and $A'\subseteq A$, then $A'\in\mathcal F$. We write $x_{ij}=1$ if buyer $i$ trades with seller $j$, and zero otherwise.

\paragraph{Mechanism notation.}
A mechanism takes buyers' bids $\bids=(b_1,\dots,b_n)$ and sellers' bids $\sbids=(s_1,\dots,s_m)$ and outputs an allocation $\alloc$ and payments $\pb,\ps$. The payment $p_i^b$ is collected from buyer $i$, and $p_j^s$ is paid to seller $j$. We use
\[
    x_i(\bids,\sbids)=\sum_jx_{ij}(\bids,\sbids),
    \qquad
    x_j(\bids,\sbids)=\sum_ix_{ij}(\bids,\sbids).
\]

\paragraph{Target normalization.}
The main text works in target-normalized units. For a buyer with public RoS target $\tau_i$, replace $v_i$ by $v_i/\tau_i$. For a value-maximizing seller with public RoC target $\tau_j$, replace $c_j$ by $c_j/\tau_j$. Utility-maximizing sellers have no RoC target in the one-sided result and correspond to $\tau_j=1$. After this normalization, all RoS/RoC targets are one and are suppressed.

\paragraph{Utility maximizers.}
A utility-maximizing buyer maximizes value minus payment, and a utility-maximizing seller maximizes payment minus cost. A mechanism is BIC for utility-maximizing sellers, denoted \eqref{eq:seller_bic}, if for every seller $j$ and every report $s_j$,
\begin{align}\label{eq:seller_bic}
    \E_{\val,\cost_{-j}}\left[p_j^s(\val,c_j,\cost_{-j})-c_jx_j(\val,c_j,\cost_{-j})\right]
    \ge
    \E_{\val,\cost_{-j}}\left[p_j^s(\val,s_j,\cost_{-j})-c_jx_j(\val,s_j,\cost_{-j})\right]. \tag{BIC-S}
\end{align}
Analogously, a mechanism is BIC for utility-maximizing buyers, denoted \eqref{eq:buyer_bic}, if for every buyer $i$ and every report $b_i$,
\begin{align}\label{eq:buyer_bic}
    \E_{\val_{-i},\cost}\left[v_ix_i(v_i,\val_{-i},\cost)-p_i^b(v_i,\val_{-i},\cost)\right]
    \ge
    \E_{\val_{-i},\cost}\left[v_ix_i(b_i,\val_{-i},\cost)-p_i^b(b_i,\val_{-i},\cost)\right]. \tag{BIC-B}
\end{align}
A mechanism is ex-ante IR if each agent's expected quasi-linear utility from truthful participation is nonnegative. It is ex-ante WBB if
\begin{align}\label{eq:wbb}
    \E_{\val,\cost}\left[\sum_i p_i^b(\val,\cost)\right]
    \ge
    \E_{\val,\cost}\left[\sum_j p_j^s(\val,\cost)\right]. \tag{WBB}
\end{align}

\paragraph{RoS/RoC value maximizers.}
A value-maximizing buyer maximizes expected received value subject to an expected RoS constraint. In normalized units, buyer $i$ solves
\begin{align*}
    O_i^B
    ={}&\max_f\ \E_{v_i,\val_{-i},\cost}\left[v_ix_i(f(v_i),\val_{-i},\cost)\right]\\
    &\text{s.t. }\E_{v_i,\val_{-i},\cost}\left[v_ix_i(f(v_i),\val_{-i},\cost)-p_i^b(f(v_i),\val_{-i},\cost)\right]\ge0.
\end{align*}
The mechanism is vBIC-B if truthful reporting attains this optimum:
\begin{align}\label{eq:buyer-vbic-app}
    O_i^B=\E_{\val,\cost}\left[v_ix_i(v_i,\val_{-i},\cost)\right]. \tag{vBIC-B}
\end{align}

A value-maximizing seller maximizes expected traded cost subject to an expected RoC constraint. In normalized units, seller $j$ solves
\begin{align*}
    O_j^S
    ={}&\max_f\ \E_{\val,c_j,\cost_{-j}}\left[c_jx_j(\val,f(c_j),\cost_{-j})\right]\\
    &\text{s.t. }\E_{\val,c_j,\cost_{-j}}\left[p_j^s(\val,f(c_j),\cost_{-j})-c_jx_j(\val,f(c_j),\cost_{-j})\right]\ge0.
\end{align*}
The mechanism is vBIC-S if truthful reporting attains this optimum:
\begin{align}\label{eq:seller-abic}
    O_j^S=\E_{\val,\cost}\left[c_jx_j(\val,c_j,\cost_{-j})\right]. \tag{vBIC-S}
\end{align}

\begin{theorem}[Myerson's lemma for buyers and sellers]\label{thm:myerson}
In a single-parameter buyer environment, an allocation rule is implementable for utility-maximizing buyers if and only if it is monotone nondecreasing in each buyer's bid. With the normalization that a zero bid gives zero payment, the threshold payment for buyer $i$ is
\[
    q_i^B(b_i,\bids_{-i},\sbids)
    =b_ix_i(b_i,\bids_{-i},\sbids)-\int_0^{b_i}x_i(t,\bids_{-i},\sbids)\,dt.
\]
In a single-parameter procurement environment, an allocation rule is implementable for utility-maximizing sellers if and only if it is monotone nonincreasing in each seller's reported cost. With the normalization that an unallocated seller receives zero payment, the threshold payment for seller $j$ is
\[
    q_j^S(\bids,s_j,\sbids_{-j})
    =s_jx_j(\bids,s_j,\sbids_{-j})+\int_{s_j}^{\infty}x_j(\bids,t,\sbids_{-j})\,dt,
\]
assuming the integral is finite, as is the case for bounded cost supports.
\end{theorem}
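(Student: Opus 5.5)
This is Myerson's lemma for single-parameter environments, one for buyers and one for sellers; I would prove the buyer half and obtain the seller half by symmetry.

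For a fixed buyer $i$ with $(\bids_{-i},\sbids)$ held fixed, write $x(b)$ and $p(b)$ for her allocation and payment as functions of her bid.

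\textbf{Necessity.} Assume truthful reporting maximizes quasi-linear utility. Compare a type $z$ bidding $z'$ with a type $z'$ bidding $z$. The two incentive constraints are $z x(z)-p(z)\ge z x(z')-p(z')$ and the symmetric one. Adding them gives $(z-z')(x(z)-x(z'))\ge 0$, so $x$ is nondecreasing.

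\textbf{The payment sandwich.} Rearranging the same pair of inequalities gives
\[
z'\bigl(x(z)-x(z')\bigr)\le p(z)-p(z')\le z\bigl(x(z)-x(z')\bigr).
\]
Summing this over a partition of $[0,b_i]$ shows that $p(b_i)-p(0)$ equals the Riemann--Stieltjes integral $\int_0^{b_i} t\,dx(t)$. This integral exists because $x$ is monotone and bounded in $[0,1]$, so it has bounded variation. With the normalization $p(0)=0$, integration by parts gives
\[
\int_0^{b_i} t\,dx(t)=b_i x(b_i)-\int_0^{b_i}x(t)\,dt,
\]
which is the stated formula for $q_i^B$.

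\textbf{Sufficiency.} Assume $x$ is monotone and define the payment by the formula. Then the utility of type $v$ from bidding $b$ is
\[
u(v,b)=(v-b)x(b)+\int_0^b x(t)\,dt .
\]
Subtracting,
\[
u(v,v)-u(v,b)=\int_b^v\bigl(x(t)-x(b)\bigr)\,dt .
\]
This is nonnegative when $b<v$, since $x(t)\ge x(b)$ on $[b,v]$. It is also nonnegative when $b>v$: the integral runs backwards and $x(t)\le x(b)$ on $[v,b]$. Hence truthful reporting is optimal.

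\textbf{Procurement side.} A seller's utility is $p-c\,x$, and the allocation must be nonincreasing in the ask. Apply the buyer argument to the reflected variable $-s_j$, or repeat it directly.
\begin{itemize}
\item The sandwich inequalities give $p(s)-p(s')=-\int_s^{s'} t\,dx(t)$.
\item The normalization is now at the top: an unallocated seller is paid zero, so $p\to 0$ as $s\to\infty$. This requires $x(t)=0$ for large $t$, which holds for bounded cost supports, or at least $\int^\infty x<\infty$.
\item Integrating by parts yields $q_j^S=s_jx_j+\int_{s_j}^\infty x_j(t)\,dt$.
\end{itemize}

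\textbf{Expected difficulty.} The only delicate step is the Stieltjes/integration-by-parts step when $x$ has jumps. This is the deterministic $0/1$ case used in \cref{mech:onedimension}, where $x$ is a step function. I would handle it directly. When $x$ is a single step at threshold $\theta$, the sandwich forces $p=\theta$ for $b>\theta$ and $p=0$ for $b<\theta$, which matches the formula. At the jump point itself the payment is determined only up to tie-breaking; I would take the right-continuous representative, as the paper does elsewhere. Finiteness of the seller integral is exactly the stated assumption.
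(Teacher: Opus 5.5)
Your proposal is correct and is the standard proof of Myerson's lemma; the paper states this result without proof, so there is no competing argument to compare against. Your steps are the usual ones: necessity from the paired IC constraints, uniqueness of payments from the sandwich inequality plus Riemann--Stieltjes integration by parts, sufficiency from $u(v,v)-u(v,b)=\int_b^v\bigl(x(t)-x(b)\bigr)\,dt\ge 0$, and the mirrored argument on the procurement side.

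One correction to your last paragraph: jumps need no special handling, and you should not switch to a right-continuous representative. The integrand $t\mapsto t$ is continuous, and the telescoping sandwich uses the actual values of $x$ at the partition points. So at a threshold $\theta$ (with $x=0$ below it) the payment is pinned down exactly as $p(\theta)=\theta\,x(\theta)$, which agrees with the formula for whatever tie-breaking determines $x(\theta)$.
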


\begin{lemma}[Adapted from \cite{balseiro2021landscape}]\label{lem:yuan}
Under the standard Slater/nondegeneracy condition for the RoS constraint, a mechanism is vBIC-B under the global definition in \cref{eq:buyer-vbic-app} if and only if, for each buyer $i$, there exists a constant $\gamma_i\ge0$ such that:
\begin{enumerate}
    \item for all $v_i,v_i'\in\mathcal V$,
    \begin{align*}
        \E_{\val_{-i},\cost}\left[
        v_ix_i(v_i,\val_{-i},\cost)-\frac{\gamma_i}{1+\gamma_i}p_i^b(v_i,\val_{-i},\cost)
        \right]
        \ge
        \E_{\val_{-i},\cost}\left[
        v_ix_i(v_i',\val_{-i},\cost)-\frac{\gamma_i}{1+\gamma_i}p_i^b(v_i',\val_{-i},\cost)
        \right];
    \end{align*}
    \item the truthful report satisfies the RoS constraint with complementary slackness:
    \begin{align*}
        \E_{\val,\cost}\left[v_ix_i(v_i,\val_{-i},\cost)-p_i^b(v_i,\val_{-i},\cost)\right]\ge0
        \quad\perp\quad
        \gamma_i\ge0.
    \end{align*}
\end{enumerate}
\end{lemma}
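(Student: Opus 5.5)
The plan is a Lagrangian argument for the global RoS program of buyer $i$, with $\gamma_i$ as the multiplier. Fix buyer $i$ and the other agents' truthful strategies. For a reporting strategy $f$, write $V_i(f)$ for the expected received value and $P_i(f)$ for the expected payment. We treat $f$ as a randomized map from true value to report, so the feasible set of induced pairs $(V_i(f),P_i(f))$ is convex. The program is $\max_f V_i(f)$ subject to $V_i(f)-P_i(f)\ge 0$. Its Lagrangian with multiplier $\lambda\ge 0$ is
\[
V_i(f)+\lambda\bigl(V_i(f)-P_i(f)\bigr)=(1+\lambda)\Bigl(V_i(f)-\tfrac{\lambda}{1+\lambda}P_i(f)\Bigr).
\]
Maximizing the Lagrangian is therefore equivalent to maximizing a quasi-linear objective with payments scaled by $\gamma_i/(1+\gamma_i)$, where $\lambda=\gamma_i$. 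Because the objective is separable across the buyer's own value $v_i$, maximizing over $f$ reduces pointwise to choosing the report $v_i'$ for each $v_i$. This is exactly condition~1 of the lemma.

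\textbf{Sufficiency.} This direction follows the proof of \cref{lem:bicb}. Condition~1 holds for every pair $v_i,v_i'$, so by averaging over $v_i$ (and by linearity for randomized $f$) we get $V_i^T-\frac{\gamma_i}{1+\gamma_i}P_i^T\ge V_i^f-\frac{\gamma_i}{1+\gamma_i}P_i^f$. Now take $f$ RoS-feasible, so $P_i^f\le V_i^f$. There are two cases. If $\gamma_i>0$, complementary slackness gives $P_i^T=V_i^T$, and the same algebra as in \cref{lem:bicb} yields $V_i^T\ge V_i^f$. If $\gamma_i=0$, condition~1 directly says truthful reporting maximizes $V_i$ over all reports, feasible or not. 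Condition~2 guarantees that the truthful strategy is itself RoS-feasible. Hence truthful reporting attains $O_i^B$.

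\textbf{Necessity.} Suppose truthful reporting attains $O_i^B$. Since randomized strategies make the attainable set of pairs $(V,P)$ convex, the program is a linear program over a convex set. Under the Slater condition (some strategy has $V-P>0$), strong duality holds and a multiplier $\lambda^*\ge0$ exists. By the saddle-point theorem the primal optimum, here the truthful strategy, maximizes the Lagrangian at $\lambda^*$, and complementary slackness $\lambda^*(V_i^T-P_i^T)=0$ holds. Maximizing the Lagrangian over $f$ gives $\E_{v_i}[\,\cdot\,]$ of the scaled quasi-linear objective evaluated at truthful reports, which is at least its value at $f$ for every $f$.

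\textbf{Main obstacle.} The hard step is passing from this ex-ante optimality to the pointwise inequality in condition~1 for \emph{all} $v_i,v_i'$. Optimality in expectation only gives the inequality almost surely in $v_i$. The plan is to handle this by contradiction. If condition~1 failed on a set of $v_i$ of positive measure, we would deviate on that set to the better report $v_i'$, chosen measurably, for instance by a countable dense selection. This strictly increases the Lagrangian and contradicts optimality. Any remaining null set is handled by the nondegeneracy assumption, understood to include either discrete supports or the usual a.e.\ convention. Finally, taking $\gamma_i=\lambda^*$ gives both conditions.
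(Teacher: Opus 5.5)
The paper does not prove this lemma; it imports it from \cite{balseiro2021landscape}. The only in-paper argument is the sufficiency direction with a strictly positive multiplier, in \cref{lem:bicb} and in its seller version \cref{lem:seller_truthful}. Your sufficiency proof is that same calculation. You also add the $\gamma_i=0$ case, which the paper never needs because its constants $\zeta_i,\theta_j$ are positive by construction.

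Your necessity direction is the standard duality route and is correct. Its steps are: the convex image of randomized strategies, a multiplier from Slater, and the chain $V_i^T+\lambda^*(V_i^T-P_i^T)\le\sup_f L(f,\lambda^*)=V_i^T$. That chain forces both complementary slackness and Lagrangian-optimality of truthful reporting. Two side remarks:
\begin{itemize}
\item Slater is used only in this direction.
\item You rightly read vBIC-B as including RoS-feasibility of truthful reporting, which condition~2 requires.
\end{itemize}

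Your ``main obstacle'' paragraph is right to worry, but two points need sharpening.

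\textbf{The null-set issue is not a consequence of nondegeneracy.} For a continuous prior, necessity in the literal ``for all $v_i$'' form is false. Take a single buyer with $v_i\sim U[0,1]$. Let $x_i(b)=1$ for $b\neq\tfrac12$, $x_i(\tfrac12)=0$, and $p_i^b(b)=\tfrac12x_i(b)$.
\begin{itemize}
\item Truthful reporting gives $V_i^T=P_i^T=\tfrac12=\E[v_i]$. So it is RoS-feasible and no strategy obtains more value; the mechanism is vBIC-B.
\item Slater holds: report $0$ when $v_i>\tfrac12$ and $\tfrac12$ otherwise, which gives $V_i-P_i=\tfrac18>0$.
\item Yet condition~1 at $v_i=\tfrac12$, $v_i'=0$ reads $0\ge\frac{1}{2(1+\gamma_i)}$, which fails for every $\gamma_i\ge0$.
\end{itemize}
So the necessity conclusion must be stated for almost every $v_i$, or $\mathcal V$ must consist of atoms. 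This is a genuine restriction on the statement, not something Slater supplies. Sufficiency is unaffected, since it only averages over $v_i$.

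\textbf{A countable dense set of reports does not give a measurable improving deviation.} Interim allocations and payments need not be continuous in the report. Instead, write $\bar x(b)=\E[x_i(b,\val_{-i},\cost)]$ and $\bar p(b)=\E[p_i^b(b,\val_{-i},\cost)]$. Pick reports $b_1,b_2,\dots$ such that $\{(\bar x(b_k),\bar p(b_k))\}_k$ is dense in the attainable set $\{(\bar x(b),\bar p(b))\}_b\subseteq\R^2$.
\begin{itemize}
\item Since $v\bar x-\frac{\gamma_i}{1+\gamma_i}\bar p$ is continuous in the pair, the failure set equals $\bigcup_k\{v:\ b_k \text{ strictly beats truthful reporting at } v\}$, which is measurable.
\item Let $f(v)=b_{k(v)}$, where $k(v)$ is the least improving index. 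This is a measurable deviation that strictly increases the Lagrangian whenever the failure set has positive probability.
\end{itemize}

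With these two adjustments the proof is complete. The discrete case is trivial, since there you can deviate at a single positive-mass type.
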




\subsection{Truthfulness of Value-Maximizing Sellers}
\label{app:seller_truthful}

The following lemma is the abstract form of the seller-side argument above. It says that if a payment rule can be rescaled so that seller truthfulness is quasi-linear IC and the truthful RoC constraint binds, then truthful reporting is optimal for a value-maximizing seller.

\begin{lemma}[Sufficient condition for vBIC-S]
\label{lem:seller_truthful}
Consider normalized seller RoC constraints. Suppose that, for each value-maximizing seller \(j\), there exists a constant \(\theta_j>0\) such that:
\begin{enumerate}
    \item for all \(c_j,c'_j\in\mathcal C\),
    \[
        \E_{\val,\cost_{-j}}\!\left[
        \frac{1+\theta_j}{\theta_j}p_j^s(\val,c_j,\cost_{-j})
        -
        c_jx_j(\val,c_j,\cost_{-j})
        \right]
        \ge
        \E_{\val,\cost_{-j}}\!\left[
        \frac{1+\theta_j}{\theta_j}p_j^s(\val,c'_j,\cost_{-j})
        -
        c_jx_j(\val,c'_j,\cost_{-j})
        \right];
    \]
    \item the truthful report satisfies the RoC constraint with complementary slackness:
    \[
        \E_{\val,\cost}\!\left[
        p_j^s(\val,\cost)-c_jx_j(\val,\cost)
        \right]\ge0
        \quad\perp\quad
        \theta_j\ge0.
    \]
\end{enumerate}
Then truthful reporting is optimal for seller \(j\) among all globally RoC-feasible deviations. Here
\[
    x_j(\val,c_j,\cost_{-j})=\sum_i x_{ij}(\val,c_j,\cost_{-j}).
\]
\end{lemma}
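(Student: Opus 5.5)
The plan is to mirror the buyer-side Lagrangian argument of \cref{lem:bicb}, now phrased abstractly for sellers. Fix a value-maximizing seller \(j\). Let \(T\) denote truthful reporting and \(f\) an arbitrary randomized reporting strategy that is globally RoC-feasible. Write
\[
C_j^f=\E[c_jx_j(\val,f(c_j),\cost_{-j})],\qquad P_j^f=\E[p_j^s(\val,f(c_j),\cost_{-j})],
\]
with \(C_j^T,P_j^T\) defined analogously. The goal is to show \(C_j^T\ge C_j^f\).

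\emph{Step 1: integrate condition 1.} Condition 1 holds pointwise for every pair \((c_j,c'_j)\). For a deterministic deviation \(g\), apply it with \(c'_j=g(c_j)\) and take expectation over \(c_j\sim S_j\). For a randomized \(f\), additionally average over the randomization; this is legitimate because both sides are linear in the distribution of the report. The result is
\[
\tfrac{1+\theta_j}{\theta_j}P_j^T-C_j^T\ \ge\ \tfrac{1+\theta_j}{\theta_j}P_j^f-C_j^f .
\]

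\emph{Step 2: use complementary slackness on the truthful side.} Since \(\theta_j>0\), condition 2 forces the truthful RoC constraint to bind, so \(P_j^T=C_j^T\). The left-hand side therefore equals \(\tfrac{1}{\theta_j}C_j^T\).

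\emph{Step 3: use feasibility of the deviation.} Feasibility of \(f\) gives \(P_j^f\ge C_j^f\). Because the coefficient \(\tfrac{1+\theta_j}{\theta_j}\) is positive, the right-hand side is at least \(\tfrac{1}{\theta_j}C_j^f\).

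\emph{Step 4: conclude.} Chaining the three displays gives \(\tfrac{1}{\theta_j}C_j^T\ge\tfrac{1}{\theta_j}C_j^f\). Multiplying by \(\theta_j>0\) yields \(C_j^T\ge C_j^f\). Truthful reporting is itself RoC-feasible by condition 2, so it attains the optimum \(O_j^S\). This establishes vBIC-S.

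The main obstacle is measure-theoretic rather than algebraic. Step 1 needs the expectations to be finite and needs the pointwise inequality to survive composition with a measurable, possibly randomized \(f\). I would handle this by first proving the claim for deterministic measurable \(g\), using Fubini over \(c_j\) and the independent \((\val,\cost_{-j})\). I would then pass to randomized \(f\) by conditioning on the randomization seed. Boundedness of the cost supports, which is already assumed for the threshold payments in \cref{thm:myerson}, should ensure finiteness. A secondary point is that the hypothesis requires \(\theta_j>0\) strictly; without this, binding could fail and the division in Step 4 would be invalid. This is exactly why the lemma assumes strict positivity rather than only \(\theta_j\ge0\).
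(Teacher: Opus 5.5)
Your proposal is correct and follows the paper's argument. You integrate condition 1 over $c_j$, extending to randomized deviations by linearity, and use $\theta_j>0$ with complementary slackness to get $P_j^T=C_j^T$. You then use feasibility $P_j^f\ge C_j^f$ to bound the right-hand side below by $\frac{1}{\theta_j}C_j^f$ and conclude $C_j^T\ge C_j^f$; your measurability and finiteness remarks only make explicit what the paper leaves implicit.
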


\begin{proof}
Let \(T\) denote truthful reporting, and let \(f\) be any globally RoC-feasible randomized deviation. Define
\[
    C_j^f
    :=
    \E_{\val,c_j,\cost_{-j}}\!\left[
    c_jx_j(\val,f(c_j),\cost_{-j})
    \right],
    \qquad
    P_j^f
    :=
    \E_{\val,c_j,\cost_{-j}}\!\left[
    p_j^s(\val,f(c_j),\cost_{-j})
    \right],
\]
and define \(C_j^T,P_j^T\) analogously for truthful reporting.

Taking expectation in condition (1) over \(c_j\), and extending to randomized deviations by linearity, gives
\[
    \frac{1+\theta_j}{\theta_j}P_j^T-C_j^T
    \ge
    \frac{1+\theta_j}{\theta_j}P_j^f-C_j^f.
\]
Since \(\theta_j>0\), complementary slackness in condition (2) implies \(P_j^T=C_j^T\). Since \(f\) is RoC-feasible, \(P_j^f\ge C_j^f\). Therefore,
\[
    \frac{1}{\theta_j}C_j^T
    \ge
    \frac{1+\theta_j}{\theta_j}P_j^f-C_j^f
    \ge
    \frac{1+\theta_j}{\theta_j}C_j^f-C_j^f
    =
    \frac{1}{\theta_j}C_j^f.
\]
Thus \(C_j^T\ge C_j^f\), so truthful reporting maximizes expected traded cost over all globally RoC-feasible deviations.
\end{proof}

\end{document}